\documentclass{article}
\usepackage[english]{babel}
\usepackage{tikz,tikz-3dplot}
\usetikzlibrary{matrix}
\usepackage{amssymb}
\usepackage[utf8]{inputenc}
\usepackage{calrsfs}
\usepackage{amsmath}
\usepackage{amsthm}
\usepackage{enumerate}
\usepackage{caption}
\usepackage{algorithm}
\usepackage{algpseudocode}
\usepackage{algorithmicx}
\usepackage{graphicx}
\usepackage{a4wide}
\usetikzlibrary[shapes.misc]
\usetikzlibrary[shapes.geometric]
\newtheorem{de}{Definition}[section]
\newtheorem{theo}{Theorem}[section]
\newtheorem{prop}[theo]{Proposition}
\newtheorem{cor}[theo]{Corollary}

\textwidth 14.5cm
\textheight 20.0cm
\oddsidemargin 0.4cm
\evensidemargin 0.4cm
\voffset -1cm

\title{Leader Election And Local Identifiers For 3D Programmable Matter}
\author{Nicolas Gastineau \and Wahabou Abdou \and Nader Mbarek \and Olivier Togni\\
LIB, Université Bourgogne Franche-Comt\'e, Dijon, France}
\date{\today}

\begin{document}
\maketitle

\begin{abstract}

In this paper, we present two deterministic leader election algorithms for programmable matter on the face-centered cubic grid. The face-centered cubic grid is a 3-dimensional 12-regular infinite grid that represents an optimal way to pack spheres (i.e., spherical particles or modules in the context of the programmable matter) in the 3-dimensional space. While the first leader election algorithm requires a strong hypothesis about the initial configuration of the particles and no hypothesis on the system configurations that the particles are forming, the second one requires fewer hypothesis about the initial configuration of the particles but does not work for all possible particles' arrangement.
We also describe a way to compute and assign $\ell$-local identifiers to the particles in this grid with a memory space not dependent on the number of particles.  A $\ell$-local identifier is a variable assigned to each particle in such a way that particles at distance at most $\ell$ each have a different identifier.
\end{abstract}

\section{Introduction}

Programmable matter consists of modular robots (called modules or particles) able to fix to adjacent modules and send (receive) messages to (from) other modules fixed to the entity. Thus, the different modules form a geometric shape which is a network.
Usually, a module can fix to another module with a finite number of ports (see Figure~\ref{sphere} for an example of spherical modules). Also, the modules know the ports that are in contact with other modules and have a knowledge about the geographic position of their ports. Moreover, the ports are supposed to be homogeneously distributed along the surface of each module. Such assumptions imply that the way how the modules are connected can be modeled by a two or three dimensional grid (depending if you consider the modules on a two dimensional plane or in the 3-dimensional space).
The geometric amoebot model~\cite{JJD2017,ZD2014,ZD2015,ZD2015b,ZD2016a,ZD2016b,ZD2017,DIL2017,EM2019} aims to specify the properties of a network for programmable matter on a plane. In the geometric amoebot model, the considered grid is the regular triangular grid.
In the three dimensional context, another grid should be considered. One natural choice of grid is the face-centered cubic grid.

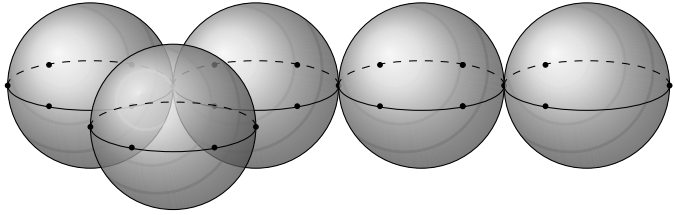
\begin{figure}[t]
\begin{center}
\begin{tikzpicture}[scale=0.55]
  \shade[ball color = gray!40, opacity = 0.4] (0,0) circle (2cm);
  \draw (0,0) circle (2cm);
  \draw (-2,0) arc (180:360:2 and 0.6);
  \draw[dashed] (2,0) arc (0:180:2 and 0.6);
  \fill[fill=black] (2,0) circle (2pt);
  \fill[fill=black] (-2,0) circle (2pt);
  \fill[fill=black] (-1,-0.5) circle (2pt);
  \fill[fill=black] (1,-0.5) circle (2pt);
  \fill[fill=black] (-1,0.5) circle (2pt);
  \fill[fill=black] (1,0.5) circle (2pt);
  \shade[ball color = gray!40, opacity = 0.4] (4,0) circle (2cm);
  \draw (4,0) circle (2cm);
  \draw (2,0) arc (180:360:2 and 0.6);
  \draw[dashed] (6,0) arc (0:180:2 and 0.6);
  \fill[fill=black] (6,0) circle (2pt);
  \fill[fill=black] (-1+4,-0.5) circle (2pt);
  \fill[fill=black] (1+4,-0.5) circle (2pt);
  \fill[fill=black] (-1+4,0.5) circle (2pt);
  \fill[fill=black] (1+4,0.5) circle (2pt);
  \shade[ball color = gray!40, opacity = 0.4] (8,0) circle (2cm);
  \draw (8,0) circle (2cm);
  \draw (6,0) arc (180:360:2 and 0.6);
  \draw[dashed] (10,0) arc (0:180:2 and 0.6);
  \fill[fill=black] (10,0) circle (2pt);
  \fill[fill=black] (-1+8,-0.5) circle (2pt);
  \fill[fill=black] (1+8,-0.5) circle (2pt);
  \fill[fill=black] (-1+8,0.5) circle (2pt);
  \fill[fill=black] (1+8,0.5) circle (2pt);
  \shade[ball color = gray!40, opacity = 0.4] (12,0) circle (2cm);
  \draw (12,0) circle (2cm);
  \draw (10,0) arc (180:360:2 and 0.6);
  \draw[dashed] (14,0) arc (0:180:2 and 0.6);
  \fill[fill=black] (14,0) circle (2pt);
  \fill[fill=black] (-1+10,-0.5) circle (2pt);
  \fill[fill=black] (1+10,-0.5) circle (2pt);
  \fill[fill=black] (-1+10,0.5) circle (2pt);
  \fill[fill=black] (1+10,0.5) circle (2pt);
  
  \shade[ball color = gray!40, opacity = 0.4] (2,-0.5*2) circle (2cm);
  \draw (2,-0.5*2) circle (2cm);
  \draw (0,-0.5*2) arc (180:360:2 and 0.6);
  \draw[dashed] (4,-0.5*2)  arc (0:180:2 and 0.6);
    \fill[fill=black] (1,-1.5) circle (2pt);
  \fill[fill=black] (3,-1.5) circle (2pt);
    \fill[fill=black] (0,-1) circle (2pt);
  \fill[fill=black] (4,-1) circle (2pt);
\end{tikzpicture}
\caption{Five spherical particles forming a simple structure (circle: port of the particle).}
\label{sphere}
\end{center}
\end{figure}

The face-centered cubic (FCC) grid is a grid containing multiple copies of triangular and square grids and can be an alternative to represent objects in the three dimensional space using $\mathbb{Z}^2$. This grid has been studied in different scientific areas including crystallography and visualization \cite{BJO1990}. It can be noted that this grid is also called cannonball grid since it represents one way to fill the three dimensional space with cannonballs (spheres) of the same size, while optimizing the density (the vertex set being the cannonballs and the edge set being the physical contacts between cannonballs). Analogously, in a 2-dimensional space, the regular triangular grid also represents a way to fill, in an optimal way, the space with spheres (or disks) of the same size (the graph is obtained from the 2-dimensional space using the same construction).

Since programmable matter is a scientific and technological challenging area, several projects aim to build programmable matter prototypes. One of such projects \cite{BOUR2018,TU2018}, financed by the French National Agency for Research, aims to build quasi-spherical particles able to deform them-selves in order to move.
These particles have a kind of cuboctahedron's form with twelve ports. The way how the ports are distributed among the surface of a particle implies a face-centered cubic structure for the network of particles. 
The final goal of this project is to sculpt a shape-memory polymer sheet with programmable matter.
In the continuity of the algorithm phase of this project \cite{GT2018,BOUR2018}, we propose algorithms for the self-configuration for these types of prototypes.

In the context of programmable matter inducing a face-centered cubic grid, we make the following hypothesis. We suppose that each particle is manufactured in the same way and consequently that every particle has, initially, its ports labeled in the same way. However, since the particles can be displaced, it is possible that some particles have been rotated since their conception. Thus, we consider two different models depending on the fact that particles have been rotated or not.
In the homogeneous case, we consider that the particles have their ports labeled exactly in the same way, i.e., ports number $i$ of any two particles are oriented in the same cardinal direction for every $i$.
In the heterogeneous case, we consider that the particles have their ports labeled differently but for every two particles $p$ and $p'$ there exists a spatial rotation on the particle $p$ such that the two particles have their ports labeled exactly in the same way.

Distributed algorithms aim to give a theoretical algorithmic framework in order to model the execution of an algorithm running on a network of computational elements that can cooperate in order to solve network problems.
In distributed algorithm frameworks, it is often supposed that the different elements of the network do not have a unique identity, i.e., the network is anonymous. In anonymous networks, a natural question is how to perform a leader election, i.e., how to determine a singular element in an anonymous network.
It is well known that for some network structures, the ring for example, there is no deterministic leader election algorithm~\cite{AW2004}. However, since in the field of programmable matter the ports of the particles are supposed to be labeled following some hypothesis, the class of graphs for which there exists a deterministic leader election in the context of programmable matter is larger than in the general context.

\paragraph{Related work.} 
Leader election is a classical and well studied problem in distributed systems (see~\cite[Section 3]{AW2004}). In the context of programmable matter, the contributions are less numerous, more recent and only concern (to our knowledge) 2-dimensional shapes.
Derakhshandeh et al.\ \cite{ZD2015b} proposed a randomized leader election algorithm in the geometric amoebot model. 
Recently, Daymude et al.\ \cite{JJD2017} have improved the algorithm from Derakhshandeh et al.\ \cite{ZD2015b} by giving stronger theoretical guarantees. Also, very recently, Di Luna et al.\ \cite{DIL2017} have introduced a leader election algorithm called consumption algorithm that consists, like in~\cite{GT2018}, in successively removing the candidacy of the particles on the border. This algorithm works for hole-free system configurations.
Moreover, Bazzi and Briones~\cite{BAZ2019} have recently established a stationary and deterministic leader election algorithm for the geometric amoebot model in which the system forms a simply connected shape working if and only if the leader election is possible under their assumptions. 
Concurrently with Bazzi and Briones's paper,  Emek et al.\ \cite{EM2019} have also proposed a deterministic leader election algorithm for the amoebot model, using the ability of the amoebots to move. Their paper also contains a useful comparison of existing leader election algorithms for the amoebot model. Also, D'angelo et al. \cite{DA2020} have recently established a deterministic leader election in the case the particles are not able to communicate but know the particles present in their neighborhood and their states (they suppose that the particles can only be in two different states).

\paragraph{Contributions and organization of the paper.}
In this paper we consider two main problems for programmable matter in the 3-dimensional space. The first one is leader election and the second one is local identifiers assignment. For leader election, we propose two deterministic algorithms (Algorithms~\ref{alg1} and \ref{alg2}). The first one requires an initial configuration of the particles such that they all have their ports numbered in the same directions and no hypothesis on the shape that the particles are forming. The second one requires no system configurations but works only for some shapes. Also, the first algorithm requires a $O(|S|\log(|S|))$ memory space ($|S|$ being the number of particles) whereas the second algorithm only needs a constant memory space.

The first algorithm consists, for each particle, in computing a local description of the geographical structure formed by the particles and, then, by electing the particle which has a maximum position, considering the lexicographical order.

The second algorithm consists in successively removing, from a set $S$ of potential leaders, some particles which are both on the geographical border of the subgraph of the FCC grid induced by $S$ and not an articulation of it (an articulation is a vertex whose removal split the network in two connected components).

As concerning the problem of assigning $\ell$-local identifiers, i.e., identifiers such that particles at distance at most $\ell$ have distinct identifiers, we propose an algorithm (Algorithm~\ref{alg4}) that assigns identifiers based on a leader election algorithm and a coloring of the $\ell$-th power of the FCC grid. This algorithm requires a memory space of $O(\log(\ell^3))$ for each particle (hence not dependent on the number of particles).

The rest of the paper is organized as follows. Notation and definitions related to graphs and particles and the description of the algorithmic framework used are given in Section 2. Section 3 presents the two leader election algorithms along with the associated correctness proofs. In Section 4, algorithms for assigning global and $\ell$-local identifiers are presented. Section 5 concludes the paper.

\section{Notation, definitions and our programmable matter algorithmic framework}

\subsection{FCC grid}
\label{sec:2.1}
The face-centered cubic grid, denoted by $\mathcal{F}$, is the graph with vertex set $\{(i,j,k)|\ i\in \mathbb{Z},\ j\in \mathbb{Z},\ k\in\mathbb{Z}_{2} \}\cup\{(i+0.5,j+0.5,k)|\ i\in \mathbb{Z},\ j\in \mathbb{Z},\ k\in\mathbb{Z}_{1} \}$, where $\mathbb{Z}_{2}$ is the set of the even integers and $\mathbb{Z}_{1}=\mathbb{Z}\setminus\mathbb{Z}_{2}$ is the set of odd integers, and edge set $\{(i,j,k) (i',j',k')|\ (|i-i'|=1\land j=j'\land k=k' )\lor (i=i'\land |j-j'|=1\land k=k')\lor(|i-i'|=1/2\land|j-j'|=1/2 \land |k-k'|=1) \}$. A subgraph of this grid is illustrated in Figure~\ref{fig:FCC}, showing the neighborhood of a vertex.

Note that by considering a triplet $(x,y,z)$ of $V(\mathcal{F})$ as the cardinal position of the vertex in a $3$-dimensional space, we obtain a representation of the whole grid in the $3$-dimensional space as in Figure \ref{fig:FCC}.

The layer $k$ of the grid $\mathcal{F}$ is the subset of vertices  $\{(i,j,k)|\ i\in \mathbb{Z},\ j\in \mathbb{Z} \}$, if $k$ is even or the subset of vertices $\{(i+0.5,j+0.5,k)|\ i\in \mathbb{Z},\ j\in \mathbb{Z} \}$, if $k$ is odd. Note that the graph induced by the vertices of layer $k$ is isomorphic to a square grid, see Figure~\ref{fig:FCC}. 
We denote by $\mathcal{F}_k$, the subgraph induced by the vertices of layer $k$. 

\begin{figure}
\centering
\tdplotsetmaincoords{82}{120}
\begin{tikzpicture}[tdplot_main_coords,scale=1.8]
      \draw[ultra thin,color=red] (2,2,0) -- (2.5,2.5,1.2) -- (2,3,0);
      \draw[ultra thin,color=red] (3,2,0) -- (2.5,2.5,1.2) -- (3,3,0);
      \draw[ultra thin,color=red] (2,2,2.2) -- (2.5,2.5,1.2) -- (2,3,2.2);
      \draw[ultra thin,color=red] (3,2,2.2) -- (2.5,2.5,1.2) -- (3,3,2.2);
\foreach \i in {0,...,4}
 \foreach \j in {0,...,4}
    {
      \ifthenelse{\i < 4}{\draw[thin,color=blue] (\i,\j,0) -- (\i+1,\j,0);}{} 
      \ifthenelse{\i < 4}{\draw[thin,color=blue] (\i,\j,2.2) -- (\i+1,\j,2.2);}{} 
      \ifthenelse{\i < 4}{\draw[thin,color=blue] (\i+.5,\j+.5,1.2) -- (\i+1.5,\j+.5,1.2);}{} 
      \ifthenelse{\j < 4}{\draw[thin,color=blue] (\i,\j,0) -- (\i,\j+1,0);}{} 
      \ifthenelse{\j < 4}{\draw[thin,color=blue] (\i,\j,2.2) -- (\i,\j+1,2.2);}{} 
      \ifthenelse{\j < 4}{\draw[thin,color=blue] (\i+.5,\j+.5,1.2) -- (\i+.5,\j+1.5,1.2);}{}
      \node[draw, circle, fill, color=gray, minimum size=1pt,inner sep=.7pt] at (\i,\j,0) (x\i\j){};
      \node[draw, circle, fill, color=gray, minimum size=1pt,inner sep=.7pt] at (\i+.5,\j+.5,1.2) (y\i\j){};
      \node[draw, circle, fill, color=gray, minimum size=1pt,inner sep=.7pt] at (\i,\j,2.2) (z\i\j){};
    }
  \draw[thin,color=red](1.5,2.5,1.2) -- (2.5,2.5,1.2) -- (3.5,2.5,1.2);
  \draw[thin,color=red] (2.5,1.5,1.2) -- (2.5,2.5,1.2) -- (2.5,3.5,1.2);
  \node[draw, circle, fill, color=red, minimum size=5pt,inner sep=1pt] at (2.5,2.5,1.2) (){};

\end{tikzpicture}
\caption{Neighborhood in the face-centered cubic grid $\mathcal{F}$}
\label{fig:FCC}
\end{figure}
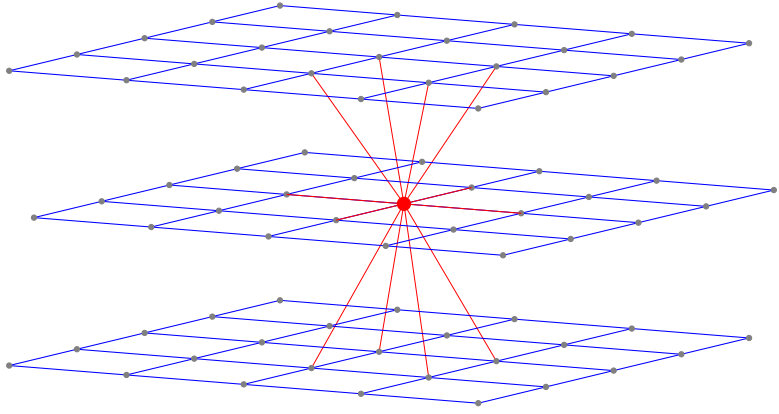

By hypothesis, any structure the different particles can form is a subgraph of $\mathcal{F}$.
In this graph, the vertex set $V(\mathcal{F})$ represents all the possible positions that the particles can occupy and the edge set $E(\mathcal{F})$ corresponds to the possible connections between particles (and consequently possible communications). The following two paragraphs present the notation and definitions we use for graphs.

We denote by $d_G(u,v)$, the usual distance between two vertices $u$ and $v$ in a graph $G$. If we consider the distance in a subgraph $H$ of $G$, the distance will be denoted by $d_H(u,v)$. 
We denote by $diam(G)$, the diameter of graph, i.e., the minimum integer $k$ such that any two vertices $u\in V(G)$ and $v\in  V(G)$ satisfy $d_G(u,v)\le k$.
The set $N_G(u)=\{ v\in V(G) |\ uv \in E(G)\}$ is the set of \textit{neighbors} of $u$.
Finally, we denote by $G[S]$, for $S\subseteq V(G)$, the subgaph induced by the vertices from $S$ and by $G-S$ the subgraph of $G$ induced by the vertices from $V(G)\setminus S$. In this paper, we use the notation $N_S(u)$, for $S$ a set of vertices of $\mathcal{F}$, to denote the set $N_{\mathcal{F}[S]}(u)$.

Let $p_{+}$ be the function such that $p_{+}(k)=k$ if $k\ge 0$ and $p_{+}(k)=0$ otherwise.
The distance between two vertices $(i,j,k)$ and $(i',j',k')$ of $\mathcal{F}$ is given by the following formula \cite{GT2019}:
$$d_{\mathcal{F}}((i,j,k),(i',j',k'))=p_{+}\left(|i-i'|-\frac{|k-k'|}{2}\right)+p_{+}\left(|j-j'|-\frac{|k-k'|}{2}\right)+ |k-k'|.$$

\subsection{Algorithmic framework}

The remaining part of this subsection is dedicated to the presentation of our programmable matter algorithmic framework.

\paragraph{Particles and ports.}
Our main assumption is that each particle occupies a single vertex of $\mathcal{F}$ and that each vertex is occupied by at most one particle.
The set of particles will be denoted by $S$ and, for sake of simplicity, we will interchangeably consider a particle in $S$ or the corresponding vertex in $\mathcal{F}$. Thus, it allows us to write $N_S(p)$ for the set of neighboring particles of a particle $p$ and $\mathcal{F}[S]$ for the subgraph induced by the set of particles $S$. In all the paper, this subgraph $\mathcal{F}[S]$ is supposed to be connected.
The \textit{ports} of a particle are the endpoints of communication. Each particle has 12 ports in $\mathcal{F}$, each labeled by a different integer from $\{0,\ldots,11\}$ (since each vertex of $\mathcal{F}$ has twelve neighbors). The ports of a particle occupying a vertex $u$ are represented by the edges incident with $u$. 
An edge between two vertices represents a possible communication between two particles $p_{1}$ and $p_{2}$ occupying these two vertices using each one a possibly different port. 
Particle have the following properties:
\begin{itemize}
\item each particle is anonymous, i.e., it does not have an identifier;
\item each particle knows the labels of the ports in contact with particles from its neighborhood;
\item each particle knows, for each pair of particles $q$ and $q'$ from its closed neighborhood, which ports of $q$ and $q'$ are in contact with a same particle.
\end{itemize} 

We recall that the closed neighborhood of $p$ is the neighborhood of $p$ in which we have added the particle $p$.
We consider that the above three properties are reasonable. Specifically, for the last property, we can assume that before the execution of the algorithm, for each pair of particles $q$ and $q'$ from its neighborhood, a particle $p$ can send a message containing the port labels of $p$ in contact with $q$ and $q'$ to $q$ and $q'$ which will be re-transmitted to the neighbors of $q$ and $q'$ and see if a particle receives two messages. If that is the case, this particle is a common neighbor of $q$ and $q'$. However, although the removing of this last property as hypothesis does not change the correctness of Algorithm \ref{alg2}, it can decrease the performance of Algorithm \ref{alg2} in term of required number of rounds to finish (see the definition at the end of the section). More precisely, the required number of rounds can be $t+1$ times more than without this hypothesis, $t$ being a constant representing the required number of rounds to compute, for a particle $p$ and for each pair of particles $q$ and $q'$ from its closed neighborhood, which ports of $q$ and $q'$ are in contact with a same particle.

In this paper, we consider two distinct hypothesis on the orientation of the particles that we call the heterogeneous and homogeneous cases (the choice of hypothesis will influence the way to do the leader election):

\begin{itemize}
\item In the {\em homogeneous case}, each two particles are labeled exactly in the same way, i.e., the ports $i$ of all particles are all in the same direction for all $i\in\{0,\ldots,11\}$.
\item In the {\em heterogeneous case}, for each two particles $p$ and $p'$ there exists a rotation function such that $p$ and $p'$ are labeled exactly in the same way.
\end{itemize} 

Note that the homogeneous case is a particular case of the heterogeneous case, i.e., any algorithm which solves the leader election problem in the heterogeneous case also solves the problem in the homogeneous case. 
We also remark that in  the other works~\cite{BAZ2019,JJD2017,ZD2015b}, the authors make assumptions (in the 2-dimensional space) using the terms of {\em common chirality} of a globally consistent circular orientation of the plane shared by all particles. In this paper, we consider that the particles of a layer do not have a common chirality (the ports of the particles are labeled following the clockwise order or the counter clockwise order).
We do this assumption in order to consider that a particle may have been turned upside down since its conception.

For a particle $p$, we denote by $N_S^{\pm}(p)$ the set of particles in $S$ which are in the neighborhood of $p$ but in a different layer of $\mathcal{F}$ than $p$. We use the notation $N_S^0(p)$ to denote the set of particles in $S$ which are in the neighborhood of $p$ but in the same layer. Hence, $N_S(p)=N_S^0(p)\cup N_S^{\pm}(p)$.
For a particle $p$ occupying the position $(0,0,0)$, the neighbors of $p$ form the set $\{(1,0,0)$ $,(-1,0,0)$ $,(0,1,0)$ $,(0,-1,0),$ $(0.5,0.5,1),$ $(-0.5,0.5,1),$ $(0.5,-0.5,1),$ $(-0.5,-0.5,1),$ $(0.5,0.5,-1),$ $(-0.5,0.5,-1),$ $(0.5,-0.5,-1),$ $(-0.5,$ $-0.5,-1)\}$. 
Note that no matter which rotation from $N_{\mathcal{F}}(p)$ to $N_{\mathcal{F}}(p)$ has been done on $p$, a port which has been initially labeled in the direction of $(1,0,0)$ $,(-1,0,0)$ $,(0,1,0)$ or $(0,-1,0)$ will be in the direction of either $(1,0,0)$ $,(-1,0,0)$ $,(0,1,0)$ or $(0,-1,0)$. 
This is a consequence of the fact that only one plane contains exactly four neighbor of $p$ (this plane is the one containing the particles of the layer $0$).
Due to this fact, in our algorithmic framework, we suppose the following, that is only important in the heterogeneous case:

\begin{itemize}
\item each particle $p$ knows the labels of the ports by which it can communicate with particles in $N_S^{0}(p)$; and we assume without loss of generality that these ports form the set $\{0,1,2,3\}$ and that these port numbers are consecutive following the clockwise or counter-clockwise order around the particle;
\item each particle $p$ knows the labels of the ports by which it can communicate with particles in $N_S^{\pm}(p)$; these ports form the sets $\{4,5,6,7\}$ and $\{8,9,10,11\}$, with ports in $\{4,5,6,7\}$ allowing to communicate with particles in one of the two layers (below or above) and ports in $\{8,9,10,11\}$ allowing to communicate with particles in the other layer. In other words, $p$ knows if two particles in $N_S^{\pm}(p)$ are in the same layer or not. Moreover, it is supposed that ports $4$, 5, 6, 7 are in the opposite direction with ports $10$, 11, 8, 9, respectively and that both ports $4$, 5, 6, 7 and ports $10$, 11, 8, 9  are consecutive following the clockwise or counter-clockwise order around the particle. 

\end{itemize} 

Note that, as it is supposed that the particles are manufactured the same way (with initially the same numbering of ports), then the above assumptions are natural.

\paragraph{Computation model.}

The proposed algorithms in our algorithmic framework are results of successive local computations \cite{BAU2002,ROS1972}.
In particular, the leader election algorithm in the heterogeneous case can be described by a graph relabeling system \cite{BAU2002} which is a local computation system.

We suppose the following:

\begin{itemize}
\item each particle contains the same program and begins in the same state;
\item the computation process is represented by successive local computations;
\item no local computation occurs simultaneously on two particles at distance at most 2;
\item during a local computation, a particle can perform a bounded number of computations and can send messages to its neighbors;
\item a \emph{round} is a sequence of successive local computations for which each particle does at least one local  computation;
\item an algorithm finishes in $k$ rounds if after any $k$ successive rounds the algorithm is finished.
\end{itemize}
We moreover suppose that the particles act asynchronously, hence with the possibility that some particles act simultaneously.
Note that the concept of rounds is used to bound the running time of the algorithms. In general, it is possible to avoid that two particles at distance at most 2 do local computation simultaneously by using a probabilistic leader election algorithm on the vertices at distance at most $2$ of one of the two vertices, i.e., by computing a random value on the vertices at distance at most $2$ and doing the local computation following the increasing order of the values.

\section{Leader election in the face-centered cubic grid}
We propose in this section two algorithms (one for the homogeneous case and one for the heterogeneous case) for leader election, i.e., starting from all particles in state {\bf C} (Candidate), at the end of the execution of the algorithm, only one particle will be in state {\bf L} (Leader) and all the other will be in state {\bf N} (Not leader).

\subsection{Leader election in the homogeneous case}\label{homo}
In a first part of this subsection, the hypothesis made about the positions and the ports of the particles and the definitions that will be used in this subsection are given. In a second part, we give details about the states and the behavior of our algorithm.
The last part of this subsection is dedicated to the proof of the correctness of our algorithm. Finally, results about the required number of rounds and the space-complexity are given.

\subsubsection{Hypothesis about positions and definitions}
In the homogeneous case, we assume that all particles have their ports labeled in the same way. Hence, we can assume without loss of generality that any particle $p$ lying on vertex $(i,j,k)$ is connected through its ports $0,1,2,3,4,5,6,7,8,9,10,11$ to the twelve neighbors $(i-1,j,k)$, $(i,j+1,k)$, $(i+1,j,k)$, $(i,j-1,k),$ $(i-0.5,j+0.5,k-1),$ $(i+0.5,j+0.5,k-1),$ $(i+0.5,j-0.5,k-1),$ $(i-0.5,j-0.5,k-1),$ $(i-0.5,j+0.5,k+1),$ $(i+0.5,j+0.5,k+1),$ $(i+0.5,j-0.5,k+1),$ $(i-0.5,j-0.5,k+1)$, respectively. 
In other words, $p$ is able to know the coordinates of any of its neighbors through any port: the coordinates of the particle to which a particle $p$ lying on vertex $(i,j,k)$ is connected through port $a$ is given by $(I(i,a),J(j,a),K(k,a))$, where $I$, $J$ and $K$ are defined by:

$$I(i,a)=\left\{\begin{array}{ll}
		 i-1 &\text{ if } a=0\\
		 i-0.5 &\text{ if } a=4, 7, 8\text{ or } 11\\
                 i & \text{ if } a=1 \text{ or } 3\\
		 i+0.5 &\text{ if } a=5, 6, 9\text{ or } 10\\
		 i+1 &\text{ if } a=2
                \end{array}\right.$$

$$J(j,a)=\left\{\begin{array}{ll}
		 j-1 &\text{ if } a=3\\
		 j-0.5 &\text{ if } a=6, 7, 10\text{ or } 11\\
                 j & \text{ if } a=0 \text{ or } 2\\
		 j+0.5 &\text{ if } a=4, 5, 8\text{ or } 9\\
		 j+1 &\text{ if } a=1
                \end{array}\right.$$

$$K(k,a)=\left\{\begin{array}{ll}
		 k-1 &\text{ if } a=4, 5, 6 \text{ or } 7\\
                 k & \text{ if } a=0, 1, 2 \text{ or } 3\\
		 k+1 &\text{ if } a=8, 9, 10 \text{ or } 11
                \end{array}\right.$$

For two lists of triplets $L$ and $J$, we define the condition $c(L,J)$, by $c(L,J)=1$ if and only if for every triplet $(x,y,z)\in L$ and every triplet $(x',y',z')$ such that $(x,y,z)(x',y',z')\in \mathcal{F}$,  $(x',y',z')\in L\cup J$.
This condition will be used to check if the list $O(p)$, for a particle $p$, contains the relative position of every other particle of the system or not. More details about this condition are given in the next two parts.
Finally, let $\max(L)$, for $L$ a list of triplets, be the largest triplet of $L$ in the lexicographical order.
We recall that $(i,j,k)>(i',j',k')$, in the lexicographical order if $i>i'$ or $i=i'$ and $j>j'$ or $i=i'$, $j=j'$ and $k>k'$. 
In algorithm \ref{alg1}, by $>$, we mean that a list is larger than another list using the lexicographical order.

\subsubsection{Details about Algorithm \ref{alg1}}

\begin{algorithm}

\begin{algorithmic} 
\State \textbf{Case 1: } State \textbf{C}\\
For each port $a$ of $p$:
 \If {$a$ is in contact with a particle}
	\State send the message $(I(0,a),J(0,a),K(0,a),0)$ through port $a$. 
  \Else
	\State send the message $(I(0,a),J(0,a),K(0,a),1)$ through all ports $b$ in contact with particles. 
\EndIf
\State Set the state to \textbf{Lis}

\State \textbf{Case 2: } State \textbf{Lis}

\If {message $(i,j,k,\ell)$} is received through port $a$
    \If {$(i,j,k)\notin O(p)$ and $\ell=0$}
        \State add $(i,j,k)$ to $O(p)$
        \State{for each port $a$ of $p$, send the message $(I(i,a),J(j,a),K(k,a),0)$ through port $a$}
    \EndIf
    \If {$(i,j,k)\notin U(p)$ and $\ell=1$}
        \State add $(i,j,k)$ to $U(p)$
        \State{for each port $a$ of $p$, send the message $(I(i,a),J(j,a),K(k,a),1)$ through port $a$}
    \EndIf
\EndIf
\If {$c(O(p),U(p))=1$}
\State set the state to \textbf{I}
\EndIf
\State \textbf{Case 3: } State \textbf{I} \\
For each port $a$, send the message $(\max(O(p)),I(0,a),J(0,a),K(0,a))$ through port $a$
\State Set the state to \textbf{Mcomp}
\State \textbf{Case 4: } State \textbf{Mcomp}
\State Set $i=0$
\If {message $(m,(i,j,k))$ is received through port $a$ and $(i,j,k)\notin O'(p)$}
    \State Add $(i,j,k)$ to $O'(p)$
    \If {$(m>\max(O(p)) $}
    \State set the state to \textbf{N} and send the message $(m,I(i,a),J(j,a),K(k,a))$ through port $a$
    \EndIf
    \Else
        \If { $|O(p)|=|O'(p)|$} 
    		\State set the state to \textbf{L}
        \EndIf   
\EndIf
\State \textbf{Case 5: } States \textbf{N} and \textbf{L} \\
Perform no further actions
\end{algorithmic}
\caption{\label{alg1}The leader election in the homogeneous case, for particle $p$.} 
\end{algorithm}

In algorithm \ref{alg1}, we suppose that $O(p)$ ($O$ refers to occupied), $U(p)$ ($U$ refers to unoccupied) and $O'(p)$ are three lists of triplets being computed by each particle $p$. These three lists are pairwise different, for each pair of particles. However, the lists $O(p)$, $U(p)$ and $O'(p)$ are constructed in such a way that, for every particle $p$, they represent the three same (global) lists translated by a different vector.

Basically, Algorithm~\ref{alg1} consists in two main phases. In the first phase, each particle $p$ computes a list $O(p)$ of the positions of all the particles relatively to itself and in the second phase, the particle $p$ which has the largest $\max(O(p))$ is elected. For example, if there are only three particles $p_1$, $p_2$ and $p_3$, then for a possible configuration of the particles, the three computed lists can be $O(p_1)=\{(1,0,0), (2,0,0)\}$, $O(p_2)=\{(-1,0,0),(1,0,0)\}$ and $O(p_3)=\{(-2,0,0),(-1,0,0)\}$. The maximums of the three lists are then $(2,0,0)$, $(1,0,0)$ and $(-1,0,0)$, respectively. The elected particle will be $p_1$ (since $(2,0,0)$ is larger than both $(1,0,0)$ and $(-1,0,0)$).

The intermediate states defined in Algorithm~\ref{alg2} are: \textbf{Lis} (List construction) that corresponds to the state where the lists $O(p)$ and $U(p)$, for a particle $p$, are constructed; the state \textbf{I} (Initialization) when an initial message (containing $\max(O(p))$) is transmitted to all the other particles and the state \textbf{Mcomp} (Maximum comparison) in which a particle $p$ checks if $\max(O(p))$ is larger than every other $\max$. Before the execution of the algorithm both $O(p)$, $U(p)$ and $O'(p)$ are supposed to be empty and each particle is in state \textbf{C} (Candidate).

In the first phase (Cases 1 and 2), each particle $p$ computes two lists $O(p)$ and $U(p)$ containing the local positions of the particles of $S$ and $U(p)$ containing the local positions of the empty vertices adjacent to a particle of $S$.
If a particle $p$ is in state \textbf{C}, \textbf{Lis} or \textbf{I}, we suppose that the messages received from particles in state \textbf{Mcomp} are buffered until $p$ is in state \textbf{Mcomp} and at this moment $p$ deals with these buffered messages.
In the list $O(p)$, the position of $p$ is supposed to be $(0,0,0)$ and, for example, a possible existing neighbor connected through the port 0 of $p$ will corresponds to the triplet $(1,0,0)$ in the list $O(p)$. 
This phase stops for a particle when the condition $c(O(p),U(p))$ is equal to $1$, i.e., when we are assured that any particle at a local position in $O(p)$ (relatively to $p$) has all its neighbors (occupied or not) either in $O(p)$ or $U(p)$.
This condition allows, for a particle $p$, to be aware if the list $O(p)$ contains the local position of every other particle of the system or not and, thus, if it still has to wait for a not yet received message or not.

In the second phase (Cases 3, 4 and 5), each particle $p$ computes the $\max$ of $O(p)$ and compares it with the value calculated by the other particles. In order to know when a particle has compared its $\max$ with every other $\max$ computed by the other particles, each particle $p$ constructs a list $O'(p)$ containing the position of the particles for which it has already did the inequality test. We affect the state \textbf{L} (Leader) to the particle $p$ for which $\max(O(p))$ is larger than every other $\max$. In order to check this, we add a value to $O'(p)$ each time $\max(O(p))$ is larger than an other $\max$. When $O'(p)$ is as large as $O(p)$, we affect the state \textbf{L} to the particle $p$.

\subsubsection{Correctness, required number of rounds and space-complexity}

\begin{prop}
At the end of the execution of Algorithm \ref{alg1}, exactly one particle is in state \textbf{L}.
\end{prop}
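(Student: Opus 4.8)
The plan is to separate the two phases of Algorithm~\ref{alg1} and to isolate a clean combinatorial core. Write $v_p$ for the (global) position of particle $p$, so that what the algorithm maintains as $O(p)$ is meant to be the translated set $\{v_q-v_p : q\in S\setminus\{p\}\}$. First I would establish a correctness lemma for the first phase (states \textbf{C} and \textbf{Lis}): for every $p$, the list $O(p)$ eventually stabilises to exactly $\{v_q-v_p: q\in S\setminus\{p\}\}$ and $U(p)$ to the relative positions of the empty vertices bordering $S$, and the transition to state \textbf{I} occurs precisely when this is achieved. Termination of the flooding is immediate, since only finitely many positions are ever inserted and each at most once; the substantive point is that the stopping test $c(O(p),U(p))=1$ detects completeness. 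I would prove this by contradiction using connectivity of $\mathcal{F}[S]$: if some $q\in S$ had $v_q-v_p\notin O(p)$, take a path from $p$ to $q$ in $\mathcal{F}[S]$ and its first vertex $w$ whose relative position is not already recorded; its predecessor is an occupied, recorded vertex all of whose neighbours must lie in $O(p)\cup U(p)$ by $c(O(p),U(p))=1$, yet $w$ is occupied and hence cannot belong to $U(p)$, a contradiction.

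With the first phase in hand, the election reduces to a statement about the values $\max(O(p))$. I would prove the key uniqueness claim: among all particles there is a unique one whose value $\max(O(p))$ is lexicographically largest. Let $v^{\ast}$ be the lex-greatest occupied position, attained at a particle $p^{\ast}$, and let $v^{(2)}$ be the second-greatest. Since translating a finite set by a fixed vector commutes with taking the lexicographic maximum, for every $p\neq p^{\ast}$ one has $\max(O(p))=v^{\ast}-v_p$, while $\max(O(p^{\ast}))=v^{(2)}-v^{\ast}$. The values $v^{\ast}-v_p$ over $p\neq p^{\ast}$ are pairwise distinct and lexicographically positive, whereas $v^{(2)}-v^{\ast}$ is lexicographically negative; the positive values are maximised exactly at the lex-minimal position $v_{\min}$ (assuming $|S|\ge 2$, so that $v_{\min}\neq v^{\ast}$). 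Hence the particle $\ell$ sitting at $v_{\min}$ is the unique particle with $\max(O(\ell))=v^{\ast}-v_{\min}$ strictly exceeding every other $\max(O(p))$.

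Next I would verify that the second phase (states \textbf{I}, \textbf{Mcomp}, \textbf{N}, \textbf{L}) elects exactly this particle $\ell$. The crux is a propagation lemma: every particle's message, broadcast in state \textbf{I} and relayed in state \textbf{Mcomp}, reaches every other particle, so that for each $p$ the list $O'(p)$ eventually records all $|S|-1$ other relative positions and $p$ has compared $\max(O(p))$ against every $\max(O(q))$. Granting this, the conclusion follows: particle $\ell$ never receives a value exceeding its own strict global maximum, so it never moves to \textbf{N}, and once $|O'(\ell)|=|O(\ell)|$ it moves to \textbf{L}; every other $p$ eventually receives the value $\max(O(\ell))>\max(O(p))$ and moves to \textbf{N}. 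Thus exactly one particle ends in state \textbf{L}.

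The step I expect to be the main obstacle is the propagation lemma of the second phase, because a particle performs no further actions once it enters \textbf{N} while relaying in \textbf{Mcomp} is guarded by the comparison $m>\max(O(p))$; one must argue carefully that the relevant messages nonetheless traverse the connected graph $\mathcal{F}[S]$ and that each $O'(p)$ fills completely before $p$ halts, in particular that the elected particle hears from all the others. A secondary subtlety worth stating explicitly is the degenerate case $|S|=1$, where $O(p)$ is empty and the election must be treated separately. The uniqueness claim and the phase-one completeness argument, by contrast, I expect to be routine once the connectivity argument is set up.
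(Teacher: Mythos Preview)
Your plan is correct and actually more thorough than the paper's own argument. The paper's proof is very brief: it takes the correctness of phase~1 for granted, then argues uniqueness by contradiction --- assuming two particles $p_1,p_2$ reach state \textbf{L}, it notes $\max(O(p_1))=\max(O(p_2))$, writes each as a difference of absolute positions, and derives a contradiction from $(i_1,j_1,k_1)\neq(i_2,j_2,k_2)$ via the order-reversing property of negation in the lexicographic order. Your route is constructive rather than by contradiction: you identify the winner explicitly as the particle at the lex-minimal position $v_{\min}$, using that $\max(O(p))=v^{\ast}-v_p$ for $p\neq p^{\ast}$ and that $\max(O(p^{\ast}))$ is lex-negative. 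Both arguments rest on the same underlying fact (translation preserves and negation reverses the lexicographic order), so neither is deeper, but yours gives more information. Your phase-1 completeness lemma and your propagation concern for phase~2 go well beyond what the paper proves; the paper's final paragraph simply asserts that the eventual leader becomes aware via $|O(p)|=|O'(p)|$ without addressing the relay issue you flag. That concern is legitimate and is not resolved in the paper's proof either, so you should not expect to find an argument there to borrow; if you want a fully rigorous version you will have to supply it yourself or treat it, as the paper implicitly does, as a property of the flooding pattern.
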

\begin{proof}
Since the lexicographical order is a total order, there exists at least a particle $p$ such that $\max(O(p))$ is larger or equal than the value $\max(O(p'))$, for any other particle $p'$.
Thus, there is at least one particle in state \textbf{L} at the end of the execution of Algorithm \ref{alg1}. It remains to prove that there are no more than one particle in state \textbf{L}.

Suppose, by contradiction, that two distinct particles $p_1$ and $p_2$ are in state $\textbf{L}$. It implies that $\max(O(p_1))=\max(O(p_2))$. 
Let $(i_1,j_1,k_1)$ be the vertex occupied by $p_1$ and let $(i_2,j_2,k_2)$ be the vertex occupied by $p_2$.
By construction there exists a particle $p'_1$ at vertex $(i'_1,j'_1,k'_1)$  such that $\max(O(p_1))=(i'_1-i_1,j'_1-j_1,k'_1-k_1)$ and a particle $p'_2$ at vertex $(i'_2,j'_2,k'_2)$  such that $\max(O(p_1))=(i'_2-i_2,j'_2-j_2,k'_2-k_2)$. Since $p_1$ and $p_2$ are distinct, either $(i_1,j_1,k_1)<(i_2,j_2,k_2)$ or $(i_1,j_1,k_1)>(i_2,j_2,k_2)$, contradicting the fact that both $(i'_1-i_2,j'_1-j_2,k'_1-k_2)<(i'_1-i_1,j'_1-j_1,k'_1-k_1)$ and $(i'_2-i_1,j'_2-j_1,k'_2-k_1)<(i'_2-i_2,j'_2-j_2,k'_2-k_2)$.

Note that a particle $p$ such that $\max(O(p))$ is larger than the value $\max(O(p'))$, for any other particle $p'$, is aware of this property (and, consequently, become leader) by adding the position of each particle $p'$ such that $\max(O(p'))<\max(O(p))$ in a list $O'(p)$ and by checking if $O'(p)$ contains the positions of all the other particles, i.e., by checking if $|O(p)|=|O'(p)|$.
\end{proof}
Note that, by definition of round and distance, and by the fact that each particle receiving a message retransmits it on all its ports, we are sure that after $d$ rounds a particle $p$ should have received all messages initially emitted by particles at distance $d$ of $p$.

In Algorithm \ref{alg1}, it can be noted that each particle $p$ has to check if $c(O(p),U(p))=1$ and has to compute the value $\max(O(p))$. Thus, in order that Algorithm \ref{alg1} works properly, it requires that each particle is able to compute $O(|S|^2)$ operations (as the size of $O(p)$ and $U(p)$ is $O(|S|)$ ). The required number of operations to compute can be decreased by implementing an incremental algorithm to check if $c(O(p),U(p))=1$.
 Moreover, about the required memory space and required number of rounds in order to guarantee the end of the execution, we state the following. 
\begin{prop}
Whatever the structure of $\mathcal{F}[S]$, if $\mathcal{F}[S]$ is connected, then a particle will be in state \textbf{L} after (at most) $2 diam(\mathcal{F}[S])+2$ rounds since the beginning of the execution of Algorithm \ref{alg1}. Moreover, for each particle, the memory space used by Algorithm \ref{alg1} is bounded by $O(|S| \log(|S|))$ and the size of the messages is bounded by  $O(\log(|S|))$.
\end{prop}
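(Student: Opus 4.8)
The plan is to treat the two phases of Algorithm~\ref{alg1} separately for the round bound, and then to bound the sizes of the three stored lists and of the messages for the space and message-size claims. Throughout I will rely on the remark preceding the statement: after $d$ rounds a particle $p$ has received every message initially emitted by a particle at distance at most $d$ from $p$, which holds because each particle retransmits on all its ports every new message it receives.

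First I would handle the list-construction phase (states \textbf{C}, \textbf{Lis}, \textbf{I}). The key point, and what I expect to be the main obstacle, is to show that the test $c(O(p),U(p))=1$ becomes true precisely when $O(p)$ already contains the relative position of every particle of $S$ — i.e., that the condition never fires prematurely. I would argue this using the connectivity of $\mathcal{F}[S]$: suppose some particle $q$ is missing from $O(p)$; taking a shortest path in $\mathcal{F}[S]$ from $p$ to $q$ and letting $q'$ be the first vertex on it whose relative position is not in $O(p)$, its predecessor $q''$ lies in $O(p)$ and $q'$ is an \emph{occupied} neighbor of $q''$. Since $U(p)$ only ever receives empty positions (flag $1$), $q'\notin U(p)$, so $c(O(p),U(p))=1$ would force $q'\in O(p)$, a contradiction. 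Conversely, once every flag-$0$ message (emitted by each particle) and every flag-$1$ message (emitted by each particle for each of its empty neighbors) has reached $p$, the lists $O(p)$ and $U(p)$ are complete and $c(O(p),U(p))=1$ holds. As all these messages originate at particles at distance at most $diam(\mathcal{F}[S])$, and emission happens during the first round each particle acts, every particle enters state \textbf{I} within $diam(\mathcal{F}[S])+1$ rounds.

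Next I would treat the maximum-comparison phase (states \textbf{I}, \textbf{Mcomp}). On entering \textbf{I} each particle floods its value $\max(O(p))$ together with the relative position of its source, and in \textbf{Mcomp} each particle adds to $O'(p)$ exactly one entry per distinct source it hears from, forwarding any value exceeding its own. The particle $p^\star$ maximizing $\max(O(p))$, which is unique by the previous proposition, never switches to \textbf{N}, and it becomes \textbf{L} as soon as $|O'(p^\star)|=|O(p^\star)|$, that is, as soon as it has received the value of every other particle. These values again travel at most $diam(\mathcal{F}[S])$ steps, and the phase starts at most one round after a particle enters \textbf{I}, so this phase costs a further $diam(\mathcal{F}[S])+1$ rounds, giving the total bound of $2\,diam(\mathcal{F}[S])+2$.

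Finally, for the complexity bounds I would note that each of $O(p)$, $U(p)$ and $O'(p)$ has $O(|S|)$ entries (there are $|S|$ particles, and each contributes at most $12$ empty neighbors, so $O(|S|)$ distinct empty positions), and that every stored entry is a triplet of relative coordinates. Because every edge of $\mathcal{F}$ changes each coordinate by at most $1$ and $diam(\mathcal{F}[S])\le|S|-1$, each coordinate has absolute value $O(|S|)$ and hence fits in $O(\log|S|)$ bits. Thus each list occupies $O(|S|\log|S|)$ bits, yielding the stated memory bound, while each message consists of a constant number of such coordinates (together with a flag or a maximum, itself a triplet), hence has size $O(\log|S|)$.
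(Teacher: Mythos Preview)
Your proposal is correct and follows essentially the same approach as the paper: decompose into the two phases, bound each by $diam(\mathcal{F}[S])+1$ rounds using the flooding remark, and bound list sizes by $O(|S|)$ entries of coordinates at most $O(|S|)$ in absolute value. Your argument that $c(O(p),U(p))=1$ cannot fire prematurely is a welcome addition, as the paper's proof silently takes this for granted.
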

\begin{proof}
We begin by proving that after $2 diam(\mathcal{F} [S])+2$ rounds a particle is in state \textbf{L}. We recall that at the beginning of the execution, it is supposed that all particles are in state \textbf{C}. First, after one round no more particle is in state \textbf{C}. Second, after $diam(O(p))$ rounds no particle remains in state \textbf{Lis}. Third, after one round no more particle is in state \textbf{Comp}. Finally, after $diam(O(p))$ we are sure that $|O(p)|=|O'(p)|$ (if its own state has not been set to \textbf{N}), for any particle $p$. Consequently, after $2 diam(O(p))+2$ rounds a particle is in state \textbf{L}.

Note that for any particle $p$ and any $(x,y,z)\in O(p)\cup U(p)$, both $|x|$, $|y|$ and $|z|$ are bounded by $|S|+1$. This implies that the size of the messages is bounded by $O(\log(|S|))$. Since, both $O(p)$, $U(p)$ and $O(p')$ have at most $12|S|$ elements, the memory space used by Algorithm \ref{alg1} is bounded by $O(|S| \log(|S|))$. 
\end{proof}
\subsection{Leader election in the heterogeneous case}\label{hetero}

In the heterogeneous case, we cannot assume that all particles have their ports labeled in the same directions, hence Algorithm~\ref{alg1} will not always produce a leader. 
In this subsection, our motivation is to give a leader election algorithm (Algorithm~\ref{alg2}) that works in the heterogeneous case. However, since our hypothesis are weaker that in the homogeneous case, finding an algorithm working for every configuration of particles (if it exists) is more challenging. Consequently, we give an algorithm where the correctness is proven for specific configurations of particles. In contrast  with the leader election algorithm that works in the homogeneous case, the presented algorithm is easier to implement and only needs a constant size memory per particle.

The key idea of the Algorithm~\ref{alg2} is, starting from the set $S$ of particles in state \textbf{C} (Candidate), to let the particles at the 'periphery' of $S$  to change their state to \textbf{N} (Not elected) until there remains only one particle that will change its state to \textbf{L} (Leader) and will become the leader. For this, we extend the definition of contractability that was used in our proposed algorithm concerning the 2-dimensional case~\cite{GT2018}.

This subsection is organized in four parts. A first part concerns the definition of a contractible particle, the main concept used in Algorithm~\ref{alg2}. This part also contains a proposition about the relation between isometric graphs and contractible particles. The second part concerns the electable sets, i.e., some configurations of particles for which we are sure that Algorithm~\ref{alg2} behaves correctly. A proof about the correctness of Algorithm~\ref{alg2} is given for electable sets. A third part is dedicated to explain in some details how Algorithm~\ref{alg2} works.
The last part is about the required number of rounds and space-complexity of Algorithm~\ref{alg2}.
Also, in this part, we prove the existence of a polynomial time algorithm to check if a set is electable.

\subsubsection{Contractibility and isometricity}
For a particle $p$ occupying a vertex $(i,j,k)$ of $\mathcal{F}$, the four vertices $(i+1,j+1,k)$, $(i-1,j+1,k)$, $(i+1,j-1,k)$ and $(i-1,j-1,k)$ are the \textit{corners} of $p$ (vertices at distance 2 from $p$ on the same layer) and the set of corners is denoted by $C(p)$. The \emph{extended neighborhood} $M_S(p)$ of a particle $p$ is defined by $M_S(p)=N_S^{0}(p)\cup C(p)\cap S$.

\begin{de}\label{defcontractible}

For a set $S\subseteq V(\mathcal{F})$ of particles, a particle $p$ in $S$ is said to be \emph{$S$-contractible} if it satisfies the following three properties:
\begin{enumerate}
\item[I)] $\mathcal{F}[M_S(p)]$ is connected;
\item[II)] $|N_S^0(p)|\le 2$;
\item[III)] $\mathcal{F}[N_S^{\pm}(p)]$ is connected and either $N_S^0(p)=\emptyset$ or $N_S^{\pm}(p)=\emptyset$ or there exits a particle $p'\in N_S^0(p)$ such that $N_S^{\pm}(p) \cap N_S^{\pm}(p')\neq\emptyset$.
\end{enumerate}
\end{de}

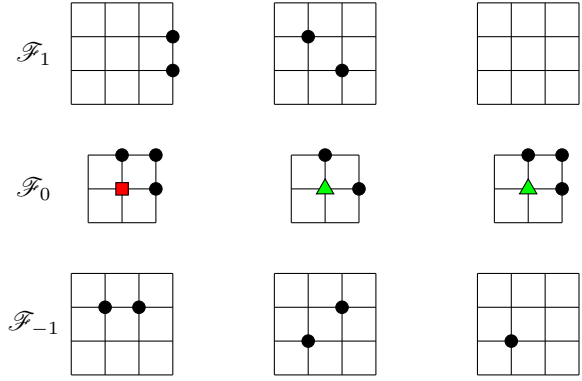
\begin{figure}[t]
\begin{center}
\begin{tikzpicture}[scale=0.90]

\draw (-0.5-3,0+2) -- (0.5-3,0+2);
\draw (-0.5-3,0.5+2) -- (0.5-3,0.5+2);
\draw (-0.5-3,-0.5+2) -- (0.5-3,-0.5+2);
\draw (0-3,-0.5+2) -- (0-3,0.5+2);
\draw (0.5-3,-0.5+2) -- (0.5-3,0.5+2);
\draw (-0.5-3,-0.5+2) -- (-0.5-3,0.5+2);

\node at (0-3,0+2)[regular polygon, regular polygon sides=4,draw=black,fill=red,scale=0.5]{};
\node at (0.5-3,0+2)[circle,draw=black,fill=black,scale=0.5]{};
\node at (0-3,0.5+2)[circle,draw=black,fill=black,scale=0.5]{};
\node at (0.5-3,0.5+2)[circle,draw=black,fill=black,scale=0.5]{};

\node at (-4.3,2){$\mathcal{F}_0$};
\node at (-4.3,0){$\mathcal{F}_{-1}$};
\node at (-4.3,4){$\mathcal{F}_{1}$};

\draw (-0.75-3,0.25) -- (0.75-3,0.25);
\draw (-0.75-3,0.75) -- (0.75-3,0.75);
\draw (-0.75-3,-0.25) -- (0.75-3,-0.25);
\draw (-0.75-3,-0.75) -- (0.75-3,-0.75);

\draw (0.25-3,-0.75) -- (0.25-3,0.75);
\draw (0.75-3,-0.75) -- (0.75-3,0.75);
\draw (-0.25-3,-0.75) -- (-0.25-3,0.75);
\draw (-0.75-3,-0.75) -- (-0.75-3,0.75);

\node at (0.25-3,0.25)[circle,draw=black,fill=black,scale=0.5]{};
\node at (-0.25-3,0.25)[circle,draw=black,fill=black,scale=0.5]{};

\draw (-0.75-3,0.25+4) -- (0.75-3,0.25+4);
\draw (-0.75-3,0.75+4) -- (0.75-3,0.75+4);
\draw (-0.75-3,-0.25+4) -- (0.75-3,-0.25+4);
\draw (-0.75-3,-0.75+4) -- (0.75-3,-0.75+4);

\draw (0.25-3,-0.75+4) -- (0.25-3,0.75+4);
\draw (0.75-3,-0.75+4) -- (0.75-3,0.75+4);
\draw (-0.25-3,-0.75+4) -- (-0.25-3,0.75+4);
\draw (-0.75-3,-0.75+4) -- (-0.75-3,0.75+4);

\node at (0.75-3,0.25+4)[circle,draw=black,fill=black,scale=0.5]{};
\node at (0.75-3,-0.25+4)[circle,draw=black,fill=black,scale=0.5]{};

\draw (-0.5,0+2) -- (0.5,0+2);
\draw (-0.5,0.5+2) -- (0.5,0.5+2);
\draw (-0.5,-0.5+2) -- (0.5,-0.5+2);
\draw (0,-0.5+2) -- (0,0.5+2);
\draw (0.5,-0.5+2) -- (0.5,0.5+2);
\draw (-0.5,-0.5+2) -- (-0.5,0.5+2);

\node at (0,0+2)[regular polygon, regular polygon sides=3,draw=black,fill=green,scale=0.4]{};
\node at (0.5,0+2)[circle,draw=black,fill=black,scale=0.5]{};
\node at (0,0.5+2)[circle,draw=black,fill=black,scale=0.5]{};

\draw (-0.75,0.25) -- (0.75,0.25);
\draw (-0.75,0.75) -- (0.75,0.75);
\draw (-0.75,-0.25) -- (0.75,-0.25);
\draw (-0.75,-0.75) -- (0.75,-0.75);
\draw (0.25,-0.75) -- (0.25,0.75);
\draw (0.75,-0.75) -- (0.75,0.75);
\draw (-0.25,-0.75) -- (-0.25,0.75);
\draw (-0.75,-0.75) -- (-0.75,0.75);

\node at (0.25,0.25)[circle,draw=black,fill=black,scale=0.5]{};
\node at (-0.25,-0.25)[circle,draw=black,fill=black,scale=0.5]{};

\draw (-0.75,0.25+4) -- (0.75,0.25+4);
\draw (-0.75,0.75+4) -- (0.75,0.75+4);
\draw (-0.75,-0.25+4) -- (0.75,-0.25+4);
\draw (-0.75,-0.75+4) -- (0.75,-0.75+4);
\draw (0.25,-0.75+4) -- (0.25,0.75+4);
\draw (0.75,-0.75+4) -- (0.75,0.75+4);
\draw (-0.25,-0.75+4) -- (-0.25,0.75+4);
\draw (-0.75,-0.75+4) -- (-0.75,0.75+4);

\node at (-0.25,0.25+4)[circle,draw=black,fill=black,scale=0.5]{};
\node at (0.25,-0.25+4)[circle,draw=black,fill=black,scale=0.5]{};

\draw (-0.5+3,0+2) -- (0.5+3,0+2);
\draw (-0.5+3,0.5+2) -- (0.5+3,0.5+2);
\draw (-0.5+3,-0.5+2) -- (0.5+3,-0.5+2);
\draw (0+3,-0.5+2) -- (0+3,0.5+2);
\draw (0.5+3,-0.5+2) -- (0.5+3,0.5+2);
\draw (-0.5+3,-0.5+2) -- (-0.5+3,0.5+2);

\node at (0+3,0+2)[regular polygon, regular polygon sides=3,draw=black,fill=green,scale=0.4]{};
\node at (0.5+3,0+2)[circle,draw=black,fill=black,scale=0.5]{};
\node at (0+3,0.5+2)[circle,draw=black,fill=black,scale=0.5]{};
\node at (0.5+3,0.5+2)[circle,draw=black,fill=black,scale=0.5]{};

\draw (-0.75+3,0.25) -- (0.75+3,0.25);
\draw (-0.75+3,0.75) -- (0.75+3,0.75);
\draw (-0.75+3,-0.25) -- (0.75+3,-0.25);
\draw (-0.75+3,-0.75) -- (0.75+3,-0.75);
\draw (0.25+3,-0.75) -- (0.25+3,0.75);
\draw (0.75+3,-0.75) -- (0.75+3,0.75);
\draw (-0.25+3,-0.75) -- (-0.25+3,0.75);
\draw (-0.75+3,-0.75) -- (-0.75+3,0.75);

\node at (-0.25+3,-0.25)[circle,draw=black,fill=black,scale=0.5]{};

\draw (-0.75+3,0.25+4) -- (0.75+3,0.25+4);
\draw (-0.75+3,0.75+4) -- (0.75+3,0.75+4);
\draw (-0.75+3,-0.25+4) -- (0.75+3,-0.25+4);
\draw (-0.75+3,-0.75+4) -- (0.75+3,-0.75+4);
\draw (0.25+3,-0.75+4) -- (0.25+3,0.75+4);
\draw (0.75+3,-0.75+4) -- (0.75+3,0.75+4);
\draw (-0.25+3,-0.75+4) -- (-0.25+3,0.75+4);
\draw (-0.75+3,-0.75+4) -- (-0.75+3,0.75+4);

\end{tikzpicture}
\end{center}
\caption{A $S$-contractible particle $p$ (at the left) and two non $S$-contractible particles $p'$ and $p''$ (at the middle and right), both of them being in the layer 0 of $\mathcal{F}$ (square: $p$; triangle: $p'$ or $p''$; circle: other particle of $S$).}
\label{twoparticles}
\end{figure}

Remark that $\mathcal{F}[N_S^{\pm}(p)]$ is connected, for a particle $p$, (this property appears in condition III) of the above definition) implies that $p$ does not have neighbors in both the layer below and above.

The left part of Figure \ref{twoparticles} illustrates an $S$-contractible particle of $\mathcal{F}$ satisfying conditions I), II) and III). In contrast with the left part, the middle part of Figure \ref{twoparticles} illustrates a non $S$-contractible particle of $\mathcal{F}$ satisfying condition II), but which does not satisfy conditions I) and III) (since $\mathcal{F}[N_S^{\pm}(p')]$ is not connected) and the right part of Figure \ref{twoparticles} illustrates a non $S$-contractible particle of $\mathcal{F}$ satisfying conditions I) and II), but which does not satisfy condition III) (since there does not exist a particle $q\in N_S^0(p'')$ such that $N_S^{\pm}(p'') \cap N_S^{\pm}(q)\neq\emptyset$ ). In this figure (and also in Figures \ref{twosets}, \ref{examplexec} and \ref{portreconfig}), it is supposed that the neighbors of a vertex $q$ in $\mathcal{F}$ from Layer 0 are represented, in Layer $1$, by the four vertices at the closest distance from the geographical position that $q$ would have in this layer and the same goes for Layer $-1$. For example, the four neighbors of the particle $p$ in Layer $-1$ of $\mathcal{F}$ (in the left part of Figure \ref{twoparticles}) are the two vertices occupied by particles of $S$ and the two unoccupied vertices which are in the central square of this figure.

A particle $p$ can detect if it is $S$-contractible or not by checking if it satisfies conditions I), II) and III) as follows. Particle $p$ can test if it satisfies conditions I) and II) by checking if there are at most two different occupied ports from $N_S^{0}(p)$ and by checking, in the case there are two, that their labels are successive and that they have a common neighbor (except $p$). By hypothesis, a particle can verify if its two neighbors have a common neighbor without computation.
Moreover, $p$ can test if $\mathcal{F}[N_S^{\pm}(p)]$ is connected (see condition III)), by first checking if there are particles of $N_S^{\pm}(p)$ in at most one layer, and in the case $|N_S^{\pm}(p)|=2$, it checks if the two particles from this set are adjacent by checking if their labels are successive. Also, by hypothesis, $p$ can easily check if $N_S^{\pm}(p)=\emptyset$. Finally, $p$ can check if there exits a particle $p'\in N_S^0(p)$ such that $N_S^{\pm}(p) \cap N_S^{\pm}(p')\neq\emptyset$ by checking for each particle $q$ of $N_S^{\pm}(p)$, if there is (at least) a common neighbor between $p$ and $q$ in $N_S^0(p)$.

For a given integer $i$, a subgraph $G$ of $\mathcal{F}_i$ is said to be {\em isometric} if for any two vertices $u,v$ of $G$ we have $d_{G}(u,v)=d_{\mathcal{F}_i}(u,v)$. Note that, by definition, an isometric subgraph of $\mathcal{F}_i$ is connected.

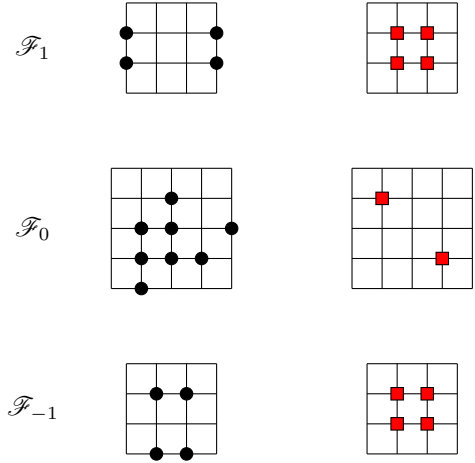
\begin{figure}[t]
\begin{center}
\begin{tikzpicture}[scale=0.80]

\draw (-1-4,1+3) -- (1-4,1+3);
\draw (-1-4,0+3) -- (1-4,0+3);
\draw (-1-4,0.5+3) -- (1-4,0.5+3);
\draw (-1-4,-0.5+3) -- (1-4,-0.5+3);
\draw (-1-4,-1+3) -- (1-4,-1+3);
\draw (0-4,-1+3) -- (0-4,1+3);
\draw (0.5-4,-1+3) -- (0.5-4,1+3);
\draw (-0.5-4,-1+3) -- (-0.5-4,1+3);
\draw (1-4,-1+3) -- (1-4,1+3);
\draw (-1-4,-1+3) -- (-1-4,1+3);

\node at (0-4.5,0+3)[circle,draw=black,fill=black,scale=0.5]{};
\node at (0-4,0.5+3)[circle,draw=black,fill=black,scale=0.5]{};
\node at (1-4,0+3)[circle,draw=black,fill=black,scale=0.5]{};
\node at (-0.5-4,0+3)[circle,draw=black,fill=black,scale=0.5]{};
\node at (0-4,0+3)[circle,draw=black,fill=black,scale=0.5]{};
\node at (0.5-4,-0.5+3)[circle,draw=black,fill=black,scale=0.5]{};
\node at (-0.5-4,-0.5+3)[circle,draw=black,fill=black,scale=0.5]{};
\node at (0-4,-0.5+3)[circle,draw=black,fill=black,scale=0.5]{};
\node at (-0.5-4,-1+3)[circle,draw=black,fill=black,scale=0.5]{};

\node at (-6.3,3){$\mathcal{F}_0$};
\node at (-6.3,0){$\mathcal{F}_{-1}$};
\node at (-6.3,6){$\mathcal{F}_{1}$};

\draw (-0.75-4,0.25) -- (0.75-4,0.25);
\draw (-0.75-4,0.75) -- (0.75-4,0.75);
\draw (-0.75-4,-0.25) -- (0.75-4,-0.25);
\draw (-0.75-4,-0.75) -- (0.75-4,-0.75);
\draw (0.25-4,-0.75) -- (0.25-4,0.75);
\draw (0.75-4,-0.75) -- (0.75-4,0.75);
\draw (-0.25-4,-0.75) -- (-0.25-4,0.75);
\draw (-0.75-4,-0.75) -- (-0.75-4,0.75);

\node at (0.25-4,0.25)[circle,draw=black,fill=black,scale=0.5]{};
\node at (-0.25-4,0.25)[circle,draw=black,fill=black,scale=0.5]{};
\node at (0.25-4,-0.75)[circle,draw=black,fill=black,scale=0.5]{};
\node at (-0.25-4,-0.75)[circle,draw=black,fill=black,scale=0.5]{};

\draw (-0.75-4,0.25+6) -- (0.75-4,0.25+6);
\draw (-0.75-4,0.75+6) -- (0.75-4,0.75+6);
\draw (-0.75-4,-0.25+6) -- (0.75-4,-0.25+6);
\draw (-0.75-4,-0.75+6) -- (0.75-4,-0.75+6);
\draw (0.25-4,-0.75+6) -- (0.25-4,0.75+6);
\draw (0.75-4,-0.75+6) -- (0.75-4,0.75+6);
\draw (-0.25-4,-0.75+6) -- (-0.25-4,0.75+6);
\draw (-0.75-4,-0.75+6) -- (-0.75-4,0.75+6);

\node at (0.75-4,0.25+6)[circle,draw=black,fill=black,scale=0.5]{};
\node at (0.75-4,-0.25+6)[circle,draw=black,fill=black,scale=0.5]{};
\node at (-0.75-4,0.25+6)[circle,draw=black,fill=black,scale=0.5]{};
\node at (-0.75-4,-0.25+6)[circle,draw=black,fill=black,scale=0.5]{};

\draw (-1,1+3) -- (1,1+3);
\draw (-1,0+3) -- (1,0+3);
\draw (-1,0.5+3) -- (1,0.5+3);
\draw (-1,-0.5+3) -- (1,-0.5+3);
\draw (-1,-1+3) -- (1,-1+3);
\draw (0,-1+3) -- (0,1+3);
\draw (0.5,-1+3) -- (0.5,1+3);
\draw (-0.5,-1+3) -- (-0.5,1+3);
\draw (1,-1+3) -- (1,1+3);
\draw (-1,-1+3) -- (-1,1+3);

\node at (0.5,0+2.5)[regular polygon, regular polygon sides=4,draw=black,fill=red,scale=0.5]{};
\node at (-0.5,1+2.5)[regular polygon, regular polygon sides=4,draw=black,fill=red,scale=0.5]{};

\draw (-0.75,0.25) -- (0.75,0.25);
\draw (-0.75,0.75) -- (0.75,0.75);
\draw (-0.75,-0.25) -- (0.75,-0.25);
\draw (-0.75,-0.75) -- (0.75,-0.75);
\draw (0.25,-0.75) -- (0.25,0.75);
\draw (0.75,-0.75) -- (0.75,0.75);
\draw (-0.25,-0.75) -- (-0.25,0.75);
\draw (-0.75,-0.75) -- (-0.75,0.75);

\node at (0.25,0.25)[regular polygon, regular polygon sides=4,draw=black,fill=red,scale=0.5]{};
\node at (-0.25,-0.25)[regular polygon, regular polygon sides=4,draw=black,fill=red,scale=0.5]{};
\node at (-0.25,0.25)[regular polygon, regular polygon sides=4,draw=black,fill=red,scale=0.5]{};
\node at (0.25,-0.25)[regular polygon, regular polygon sides=4,draw=black,fill=red,scale=0.5]{};

\draw (-0.75,0.25+6) -- (0.75,0.25+6);
\draw (-0.75,0.75+6) -- (0.75,0.75+6);
\draw (-0.75,-0.25+6) -- (0.75,-0.25+6);
\draw (-0.75,-0.75+6) -- (0.75,-0.75+6);
\draw (0.25,-0.75+6) -- (0.25,0.75+6);
\draw (0.75,-0.75+6) -- (0.75,0.75+6);
\draw (-0.25,-0.75+6) -- (-0.25,0.75+6);
\draw (-0.75,-0.75+6) -- (-0.75,0.75+6);

\node at (-0.25,0.25+6)[regular polygon, regular polygon sides=4,draw=black,fill=red,scale=0.5]{};
\node at (0.25,-0.25+6)[regular polygon, regular polygon sides=4,draw=black,fill=red,scale=0.5]{};
\node at (-0.25,-0.25+6)[regular polygon, regular polygon sides=4,draw=black,fill=red,scale=0.5]{};
\node at (0.25,0.25+6)[regular polygon, regular polygon sides=4,draw=black,fill=red,scale=0.5]{};

\end{tikzpicture}
\end{center}
\caption{A set $S$ of particles, which is electable (on the left) and a set $S'$, which is not electable (on the right; circle: particle in $S$; square: particle in $S'$).}
\label{twosets}
\end{figure}
\begin{prop}\label{existamere}
In any isometric subgraph $G$ of $\mathcal{F}_i$ such that $|V(G)|\ge 2$, there exist at least two vertices $u_{+}$ and $u_{-}$ satisfying conditions I) and II) of Definition \ref{defcontractible}, for $S=V(G)$.
\end{prop}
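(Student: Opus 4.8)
The plan is to reduce everything to a single layer and then run an extremal argument fueled by isometry. Since $V(G)\subseteq V(\mathcal F_i)$ lies in one layer, for $S=V(G)$ every particle $p$ has $N_S^{\pm}(p)=\emptyset$, so $N_S^0(p)=N_S(p)$, the corner set $C(p)$ consists of the four diagonal neighbours of $p$ in $\mathcal F_i$, and $M_S(p)=N_S(p)\cup(C(p)\cap S)$. Conditions I) and II) therefore only involve the eight ``king neighbours'' of $p$ (the four orthogonal and four diagonal neighbours): condition II) is simply $\deg_G(p)\le 2$, and one checks directly that these eight positions, with the adjacency of $\mathcal F_i$, form an $8$-cycle in which orthogonal and diagonal neighbours alternate, so condition I) says that the king neighbours of $p$ lying in $S$ form a connected (i.e.\ contiguous) arc of this cycle. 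Call a vertex \emph{good} if it satisfies I) and II).

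From isometry I would extract two facts. First, \emph{orthogonal convexity}: if $u,v\in V(G)$ lie on a common grid line then $d_G(u,v)$ equals the length of the straight segment joining them, so every geodesic stays on that line and the whole segment lies in $S$; hence $S$ meets every row and every column in an interval. Second, a \emph{corner-forcing} property: if $c$ is a diagonal neighbour of $u$ with $c\in S$, then $d_G(u,c)=2$ forces a length-$2$ path $u-w-c$, and since the only common neighbours of $u$ and $c$ are the two orthogonal cells between them, at least one of those two cells lies in $S$. Together these give a clean description of good vertices: using corner-forcing, any present diagonal neighbour is adjacent to a present orthogonal neighbour, so a vertex is good exactly when it is a leaf ($\deg_G\le 1$) or a \emph{filled bend} (its two neighbours are perpendicular and the diagonal cell enclosed between them is in $S$); a degree-$2$ vertex with two opposite neighbours, or any vertex of degree $\ge 3$, is bad.

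The main obstacle is precisely condition I) at a bend: when the enclosing diagonal cell is missing, the extended neighbourhood splits into two disconnected ``arms'' (the $L$-shaped configuration $\{(0,0),(1,0),(0,1)\}$ around $(0,0)$ is the minimal example). This is why no single fixed extremal rule can work: the four rotations of such a bent triple place the unique bad vertex at the top-right, top-left, bottom-left and bottom-right extreme respectively, so any lexicographic or diagonal extreme is bad for some shape. To get around this I would use a pairing argument on an extreme column. Let $x_{\max}$ be the largest $x$-coordinate; by orthogonal convexity the rightmost column $\{x_{\max}\}\times[\alpha,\beta]$ is an interval, and its top and bottom cells have no east neighbour, while the top has no north neighbour and the bottom no south neighbour, so each is either a leaf or a perpendicular bend (never collinear). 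I would then show that \emph{at most one} of the two endpoints is a bad bend: a bad top endpoint forces $(x_{\max}-1,\beta)\in S$ but $(x_{\max}-1,\beta-1)\notin S$, so by column-convexity the column $x_{\max}-1$ meets $S$ only at heights $\ge\beta$; a bad bottom endpoint would require $(x_{\max}-1,\alpha)\in S$ with $\alpha<\beta$, a contradiction. Hence the rightmost column has a good endpoint $u_{+}$, and symmetrically the leftmost column has a good endpoint $u_{-}$. If $x_{\min}<x_{\max}$ these vertices are distinct; if $x_{\min}=x_{\max}$ then $G$ is a single vertical segment whose two endpoints are leaves, again two good vertices. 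This pairing on an extreme column is the crux, and it is where orthogonal convexity (hence isometry) does the essential work.
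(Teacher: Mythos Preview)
Your argument is correct and rests on the same two consequences of isometry that the paper uses --- orthogonal (row/column) convexity and the ``corner-forcing'' step --- together with the same key contradiction via convexity of the adjacent column $x_{\max}-1$. The one genuine difference is where the two good vertices are harvested: the paper takes both $u_+$ and $u_-$ as the top and bottom of a \emph{single} extreme column (switching to the other extreme only when that column is a singleton), whereas you take one good endpoint from the rightmost column and one from the leftmost, proving only that each extreme column contributes \emph{at least one} good endpoint.

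Your two-column variant is actually a bit more robust. The paper's line ``since $u_-$ has degree $2$, then $(x_m-1,y_-,i)\in S$'' tacitly assumes the unique possible leaf is not $u_-$; but this can fail. For instance
\[
S=\{(3,0),(3,1),(3,2),(3,3),(2,3),(2,4),(1,4),(1,3)\}
\]
is isometric in $\mathcal F_0$, has exactly one leaf $u_-=(3,0)$, and the top $u_+=(3,3)$ of the rightmost column is an unfilled bend, hence bad; so the paper's contradiction does not fire and an extra argument is needed there. Your pairing (``at most one of the two endpoints of an extreme column can be bad'') sidesteps this case entirely, at the modest cost of looking at both extreme columns.
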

\begin{proof}
If there are two vertices of degree 1, they easily satisfy conditions I) and II) of Definition \ref{defcontractible}. Thus, we can suppose that there is at most one vertex of degree 1.
Let $u_+$ ($u_{-}$, respectively) be the vertex at position $(x_m,y_+,i)$ (at position $(x_m,y_{-},i)$, respectively) such that $(x_m,y_+,i)$ maximizes $y$ ( $(x_m,y_-,i)$ minimizes $y$, respectively) in the set $X=\{(x,y,i)\in S|\ \forall (x',y',i)\in S,\  x\ge x' \}$, i.e., the set of particles having a maximum (in the layer $i$) first coordinate. In the case  $|X|=1$, we take $u_+$ and $u_-$ the same way but in the set $Y=\{(x,y,i)\in S|\ \forall (x',y',i)\in S,\  x\le x' \}$, i.e., the set of particles having a minimum (in the layer $i$) first coordinate. Remark that $|X|=|Y|=1$ contradicts the fact that  there is at most one vertex of degree 1. Hence at least one of $X$ and $Y$ is not a singleton.
Moreover, since $G$ is isometric, both $X$ and $Y$ induce paths in $G$ and thus $u_+$ and $u_{-}$ have degree $2$. This implies that both $u_+$ and $u_{-}$ satisfy condition II).

For condition I), assume first that $|X|\ge 2$. Suppose, by contradiction, that one vertex among $u_+$ and $u_{-}$ does not satisfy condition I) and suppose, without loss of generality, that this vertex is $u_+$.
It implies that $(x_m-1,y_{+}-1,i)\notin S$ and $(x_m-1,y_{+},i)\in S$. Since $u_-$ has degree $2$, then $(x_m-1,y_{-},i)\in S$. But this contradicts the fact that $G$ is isometric since then we should have a path in $S$ between $(x_m-1,y_{+},i)$ and $(x_m-1,y_{-},i)$ going through $(x_m-1,y_{+}-1,i)$.
Second, if $|X|=1$, by symmetry, the same contradiction is obtained for $Y$. Consequently, both $u_+$ and $u_{-}$ satisfy condition I).

\end{proof}
\subsubsection{Electable sets and correctness for electable sets}

\begin{de}
For a set $S\subseteq V(\mathcal{F})$, the graph $G_S$ is the graph with vertex set the subsets of vertices $A$ such that $A$ induces a connected component (maximal connected subgraph) of $\mathcal{F}[S\cap V(\mathcal{F}_i)]$, for an integer $i$. There is an edge between two vertices $A,B$ of $G_S$ if there exists an edge in $\mathcal{F}$ between a vertex of $A$ and a vertex of $B$.
\end{de}

We say that $S$ is {\em electable} if the following three conditions are satisfied by $S$:
\begin{enumerate}
\item[a)] $G_S$ is a tree;
\item[b)] for every $A\in V(G_S)$, $\mathcal{F}[A]$ is isometric;
\item[c)] for every two adjacent sets $A$ and $B$ in $G_S$, the sets $\{u\in B | \ \exists v \in A,\ v\in N(u)\}$ and $\{u\in A | \ \exists v \in B,\ v\in N(u)\}$ both induce connected subraphs in $\mathcal{F}$.
\end{enumerate}

Figure \ref{twosets} illustrates a set $S$ of particles which is electable and a set $S'$ which is not electable. Note that condition a) is not satisfied by $S'$ ($G_{S'}$ is not a tree), but $S'$ satisfies both conditions b) and c).

\begin{algorithm}
\caption{The leader election algorithm in the heterogeneous case for a particle $p$, with $S_C$ being the set of particles in state \textbf{C}.} 
\label{alg2}
\begin{algorithmic} 
\State \textbf{Case 1: } State \textbf{C}
\If {$p$ is $S_C$-contractible}
    \If {$p$ has no neighbor in $S_C$}
        \State set the state to \textbf{L}
    \Else
        \State set the state to \textbf{N}
    \EndIf
\Else
        \State  stay in state \textbf{C}
\EndIf
\State \textbf{Case 2: } States \textbf{L} or \textbf{N} \\ Perform no further actions
\end{algorithmic}
\end{algorithm}

\begin{prop}\label{prop1}
If $S$ is electable then there is always an $S$-contractible particle in $S$.
\end{prop}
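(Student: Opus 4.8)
The plan is to use the tree structure of $G_S$ to localise the search. By condition a), $G_S$ is a tree, hence has a leaf $A$: a connected component of some layer $\mathcal{F}_i$ whose only neighbour in $G_S$ is a single component $B$, lying in one adjacent layer, say $\mathcal{F}_{i+1}$. Choosing a leaf is crucial: for every particle $p\in A$ the whole cross-layer neighbourhood $N_S^{\pm}(p)$ is contained in $B$, and therefore lies in the single layer $\mathcal{F}_{i+1}$. Indeed, any $S$-neighbour of $p$ in a layer other than $\mathcal{F}_i$ lies in a layer-component adjacent to $A$ in $G_S$, and $A$ being a leaf forces that component to be $B$. I would then produce an $S$-contractible particle inside $A$. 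Since conditions I) and II) of Definition \ref{defcontractible} are purely in-layer statements and $A$ is isometric by condition b), Proposition \ref{existamere} already supplies two candidate particles $u_{+}$ and $u_{-}$ satisfying I) and II); the whole difficulty is to enforce condition III) on at least one of them.

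I would split condition III) into its two requirements. The connectivity requirement, that $\mathcal{F}[N_S^{\pm}(p)]$ be connected, comes for free from the isometry of $B$: the four layer-$(i+1)$ corners of $p$ form a $4$-cycle (a unit square), and a subset of a $4$-cycle fails to be connected only when it is a pair of opposite corners; but two opposite corners lie at $\mathcal{F}_{i+1}$-distance $2$ with both geodesics passing through one of the remaining two corners, so isometry of $B$ places a connecting corner in $B$ and excludes the disconnected case. Thus the first clause of III) holds automatically for every $p\in A$, and only the common-neighbour clause remains to be arranged.

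For the common-neighbour clause I would use the explicit description of $u_{+}$ and $u_{-}$ from the proof of Proposition \ref{existamere}: in the generic case (at most one vertex of degree $1$) they are the two endpoints of the extremal monotone path $X$ of maximum first coordinate $x_m$ in $A$, each of degree $2$ with a known pair of same-layer neighbours. A short case analysis shows that such an endpoint fails the clause only in one precise situation: when its unique corner lying in $B$ is exactly the one corner it shares with neither of its two same-layer neighbours. For the top endpoint $u_{+}=(x_m,y_{+},i)$ this bad corner is its top-right corner $(x_m+0.5,y_{+}+0.5,i+1)$, and for the bottom endpoint $u_{-}=(x_m,y_{-},i)$ it is its bottom-right corner $(x_m+0.5,y_{-}-0.5,i+1)$. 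These two bad corners share the first coordinate $x_m+0.5$ and are vertically separated by $y_{+}-y_{-}+1\ge 2$, so if both $u_{+}$ and $u_{-}$ were bad, isometry of $B$ would force the whole vertical segment joining the two bad corners into $B$, in particular the corner $(x_m+0.5,y_{+}-0.5,i+1)$ --- which is precisely the corner whose absence from $B$ made $u_{+}$ bad. This contradiction shows $u_{+}$ and $u_{-}$ cannot both fail, so one of them satisfies III) and is $S$-contractible.

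It remains to dispose of the degenerate configurations. If $G_S$ is a single vertex, then $S$ lies in one layer and the particle produced by Proposition \ref{existamere} (or the unique particle when $|S|=1$) has empty cross-layer neighbourhood, so III) is vacuous; if the chosen leaf satisfies $|A|=1$, its single particle is contractible directly, the first clause of III) again coming from isometry of $B$ and the rest being vacuous; and the exceptional cases of the extremal argument (a component carrying two degree-$1$ vertices, the value $|A|=2$, or the symmetric use of the minimum first-coordinate path $Y$ when $|X|=1$) are settled by direct inspection, using the same isometry principle. The main obstacle is condition III): obtaining its connectivity clause for free from isometry and, more delicately, proving that the two extremal candidates cannot simultaneously fail its common-neighbour clause. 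Conditions a) and b) of electability are exactly what make these two reductions work; condition c) does not seem to be needed for the existence statement itself.
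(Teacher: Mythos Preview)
Your overall architecture matches the paper's: pick a leaf $A$ of $G_S$, use Proposition \ref{existamere} to obtain two candidates $u_{+},u_{-}$ satisfying I) and II), and then argue III). Your observation that the connectivity clause of III) holds for \emph{every} $p\in A$ straight from the isometry of $B$ (the four upper corners form a $4$-cycle, and a disconnected pair of opposite corners would violate isometry) is a clean improvement over the paper, which handles the sizes $|N_S^{\pm}(p_{+})|\in\{2,3,4\}$ separately.

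Where you genuinely diverge is in the common-neighbour clause. The paper invokes condition~c): it takes $q_{+}\in N_S^{\pm}(p_{+})$, $q_{-}\in N_S^{\pm}(p_{-})$, uses isometry of $B$ to get a geodesic between them, and then uses c) to force that path to stay among corners of $A$, which yields the required shared corner. You instead exploit the explicit positions of $u_{+}=(x_m,y_{+},i)$ and $u_{-}=(x_m,y_{-},i)$: in the degree-$2$ case each has a single unshared corner on the vertical line $x=x_m+0.5$, and the \emph{unique} geodesic between them (the vertical segment) must pass through a shared corner. This is a nice argument and, as you note, does not use~c).

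The gap is in your treatment of the degenerate cases, which you dismiss as ``direct inspection''. When $u_{+}$ has in-layer degree $1$ (only neighbour $(x_m,y_{+}-1,i)$), it has \emph{two} unshared corners $(x_m\pm 0.5,\,y_{+}+0.5,i+1)$, not one; likewise for $u_{-}$. Your unique-vertical-geodesic argument then does not apply verbatim, since the two bad corners in $B$ may have different first coordinates. One can still push the argument through via monotonicity of square-grid geodesics (any geodesic between a corner at height $y_{+}+0.5$ and one at height $y_{-}-0.5$ must cross the row $y=y_{+}-0.5$ at some $x\in\{x_m-0.5,x_m+0.5\}$, both of which are shared corners of $u_{+}$), but this needs to be said. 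The situation where Proposition~\ref{existamere} returns two arbitrary degree-$1$ vertices (i.e.\ when $|X|=|Y|=1$) is more delicate still, since the bad corners now lie on opposite sides in the $x$-direction and the monotonicity argument requires tracking both coordinates; it is not obviously ``direct inspection''. Your claim that c) is unneeded for this proposition may well be correct, but it is not established by what you wrote. (Note also that c) is used essentially in Proposition~\ref{prop2}, so it cannot be dropped from the definition of electable in any case.)
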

\begin{proof}
Let $A$ be a leaf of $G_S$ and let $B$ the neighbor of $A$ in $G_S$. Since $A$ is a leaf, either no particle in $A$ is adjacent with particles from the layer below $A$ or no particle in $A$ is adjacent with particles from the layer above $A$. We suppose, without loss of generality, that every particles which is adjacent to a particle of $A$ is either in $A$ or in the layer above $A$.
If $A$ contains only one vertex, then the only particle of $A$ is $S$-contractible, since conditions I), II) and III) are satisfied for this particle. Otherwise, by Proposition \ref{existamere}, there exist two particles $p_+$ and $p_-$ satisfying both conditions I) and II).

If $|N_S^{\pm}(p_{+})|\ge 3$, then $\mathcal{F}[N_S^{\pm}(p_{+})]$ is connected and, since $|A|\ge 2$, there exists a neighbor $p'$ of $p_+$ in $A$ such that $N_S^{\pm}(p_+) \cap N_S^{\pm}(p') \neq\emptyset$ (note that for any neighbor $q$ of $p^+$ in $A$ and for any set of three particles of $N_S^{\pm}(p_{+})\cap B$, there exists a particle $q'$ in this set adjacent to $q$). Consequently, condition III) is satisfied by $p$.

If $|N_S^{\pm}(p_+)|= 2$, then, $\mathcal{F}[N_S^{\pm}(p_+)]$ is connected since, otherwise, $\mathcal{F}[B]$ would not be isometric. The same fact holds for $p_{-}$: if $|N_S^{\pm}(p_-)|= 2$, then  $\mathcal{F}[N_S^{\pm}(p_-)]$ is connected.

Moreover, if $|N_S^{\pm}(p_+)|= 0$ or $|N_S^{\pm}(p_-)|= 0$, then, $p_+$ or $p_-$ satisfies, by definition, condition III).

Finally, we are left with the case that both $0<|N_S^{\pm}(p_+)|\le 2$ and $0<|N_S^{\pm}(p_-)|\le 2$ and both $\mathcal{F}[N_S^{\pm}(p_+)]$ and $\mathcal{F}[N_S^{\pm}(p_-)]$ are connected. Let $q_+$ be a vertex in $N_S^{\pm}(p_+)$ and let $q_-$ be a vertex in $N_S^{\pm}(p_-)$.
Since $\mathcal{F}[B]$ is isometric, there should be an isometric path between $q_-$ and $q_+$ in $B$. Also, since $S$ satisfies Property c), this path should be in the set $\{u\in B | \ \exists v \in A,\ v\in N(u)\}$. Consequently, either there exists a neighbor $p'_+\in A$ of $p_+$ such that $N_S^{\pm}(p_+) \cap N_S^{\pm}(p'_+) =\emptyset$ or there exist a neighbor $p'_-\in 	A$ of $p_-$ such that $N_S^{\pm}(p_-) \cap N_S^{\pm}(p'_-) =\emptyset$, hence condition III) is satisfied. 
\end{proof}

\begin{prop}\label{prop2}
If $S$ is electable, then $S\setminus \{p\}$ is also electable, for $p$ an $S$-contractible particle.
\end{prop}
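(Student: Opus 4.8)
The plan is to fix the layer $i$ of $p$ and the connected component $A$ of $\mathcal{F}[S\cap V(\mathcal{F}_i)]$ containing $p$, and then to check directly that $S'=S\setminus\{p\}$ still satisfies the three defining conditions a), b) and c) of an electable set. The starting remark is that, by condition III) of Definition~\ref{defcontractible}, $\mathcal{F}[N_S^{\pm}(p)]$ is connected, so $p$ has cross-layer neighbours in at most one adjacent layer and, since they induce a connected subgraph, they all lie in a single component $B$ of that layer; thus in $G_S$ the vertex $A$ is joined through $p$ to at most one further vertex $B$. I would first dispose of the case $N_S^0(p)=\emptyset$, where $A=\{p\}$ and, by the remark, $A$ is a \emph{leaf} of $G_S$. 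Deleting a leaf from the tree $G_S$ leaves a tree, so a) holds, while b) and c) are inherited verbatim by the components, none of which (apart from the deleted $A$) has changed.

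The substantial case is $N_S^0(p)\neq\emptyset$; set $A'=A\setminus\{p\}$. I would first show $A'$ is connected and isometric, so that b) holds for $A'$ (the other members of $G_S$ being untouched). By II), $p$ has at most two neighbours in $A$. If it has one, it is a leaf of $A$ and its removal changes neither connectivity nor internal distances. If it has two, say $a$ and $b$, then I) forbids them from being opposite neighbours of $p$ (else $\mathcal{F}[M_S(p)]$ would be disconnected), so $a$ and $b$ are consecutive and their common corner $c\in C(p)$ must lie in $S$, hence in $A$; the path $a\,c\,b$ then keeps $A'$ connected and, substituted for $a\,p\,b$ in any shortest path, shows no distance increases, so $A'$ remains isometric. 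In particular deleting $p$ simply replaces the single vertex $A$ of $G_S$ by the single vertex $A'$.

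Next I would verify that $G_{S'}$ is again a tree (condition a)). Every cross-layer edge of $A$ borne by a vertex other than $p$ survives in $A'$, and the only edge possibly carried solely by $p$ is $AB$; this too survives, since by the surviving clause of III) some $p'\in N_S^0(p)\subseteq A'$ shares a cross-layer neighbour with $p$ and is therefore adjacent to $B$. Hence $A'$ has exactly the neighbours $A$ had and $G_{S'}$ is isomorphic, as a tree, to $G_S$. This also reduces condition c) to a single pair: for any neighbour $C\neq B$ of $A$ the particle $p$ is not adjacent to $C$, so neither contact set of $(A,C)$ contains $p$ and both are unchanged, whence c) for $(A',C)$ is inherited. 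Writing $X_A^B=\{u\in A\mid \exists v\in B,\ v\in N(u)\}$ and $X_B^A=\{u\in B\mid \exists v\in A,\ v\in N(u)\}$, the only thing left to prove is that $X_A^B\setminus\{p\}$ and $\{u\in B\mid \exists v\in A',\ v\in N(u)\}\subseteq X_B^A$ are still connected, knowing that $X_A^B$ and $X_B^A$ are connected by c) for $S$.

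The hard part will be exactly this last step. For the layer-$i$ set it amounts to showing that $p$ is not a cut vertex of $\mathcal{F}[X_A^B]$; since $p$ has degree at most two there, the only dangerous configuration is $a,b\in X_A^B$, where one needs a bypass avoiding $p$. Here the useful geometric fact is that the unique vertex of the adjacent layer lying directly above the centre of the unit square $\{p,a,b,c\}$ is simultaneously adjacent to all four of $p,a,b,c$; whenever that vertex belongs to $B$ the corner $c$ lies in $X_A^B$ and $a\,c\,b$ is the desired bypass. The remaining configurations must be controlled by combining the shared-neighbour particle $p'$ of III) with the isometry of $B$ (condition b)), which forces the contact sets to be gap-free and hence to survive deletion of the single extreme vertex $p$; the claim for the set inside $B$ is symmetric, using that the cross-layer neighbour shared by $p$ and $p'$ stays attached to $A'$ through $p'$. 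I expect a careful but routine case analysis on $|N_S^0(p)|\in\{1,2\}$ and on the connected shape of $N_S^{\pm}(p)$ inside the $2\times 2$ block above $p$ to complete the verification of c), and with it the proposition.
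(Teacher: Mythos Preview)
Your approach matches the paper's: verify properties a), b), c) for $S\setminus\{p\}$ in turn, splitting on the clauses of condition~III) and using conditions I)--II) to show that $A\setminus\{p\}$ stays connected and isometric via the corner $c$. Your treatment is in fact more detailed than the paper's own proof, which dispatches property~c) in a single sentence per subcase by simply invoking the connectedness of $\mathcal{F}[N_S^{\pm}(p)]$ and $\mathcal{F}[M_S(p)]$, without spelling out the case analysis you anticipate in your final paragraph.
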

\begin{proof}
Suppose that $p$ is in a vertex $A$ of $G_S$. 
First, if there exits a vertex $p'\in N_S^0(p)$ such that $N_S^{\pm}(p) \cap N_S^{\pm}(p') \neq\emptyset$, then $S\setminus \{p\}$ satisfies Property a) ($G_{S\setminus\{p\}}$ remains a tree). Otherwise, we have either $N_s^0(p)=\emptyset$ or $N_s^0(p)\neq \emptyset$ and $N_S^{\pm}(p)=\emptyset$. In both cases, it can be noted that $G_{S\setminus\{p\}}$ remains a tree and, consequently, that $S\setminus \{p\}$ satisfies Property a).

Second, since $\mathcal{F}[M_S(p)]$ is connected and $|N_S^0(p)|\le 2$, every path passing by $p$ will be as short as previously in $A\setminus \{p\}$. Consequently, $A$ remains isometric and $S\setminus \{p\}$ satisfies Property b).

Third, if there exits a vertex $p'\in N_S^0(p)$ such that $N_S^{\pm}(p) \cap N_S^{\pm}(p') \neq\emptyset$, then $S\setminus \{p\}$ satisfies Property c) since both $\mathcal{F}[N_S^{\pm}(p)]$ and $\mathcal{F}[M_S(p)]$ are connected. Now suppose that there does not exit a vertex $p'\in N_S^0(p)$ such that $N_S^{\pm}(p) \cap N_S^{\pm}(p') \neq\emptyset$. If $N_s^0(p)=\emptyset$, then since $\mathcal{F}[N_S^{\pm}(p)]$ is connected, $S\setminus \{p\}$ satisfies Property c). Finally, If $N_s^{\pm}(p)=\emptyset$, then, since $\mathcal{F}[M_S(p)]$ is connected, $S\setminus \{p\}$ satisfies Property c)
\end{proof}

We finish this part by giving the following result about correctness of our algorithm.

\begin{theo}
Let $S$ be the set of vertices occupied by particles.
If $S$ is electable, then at the end of the execution of Algorithm \ref{alg2}, there is exactly one particle in state \textbf{L}.
\end{theo}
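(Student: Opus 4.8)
The plan is to follow the candidate set $S_C$ of particles still in state \textbf{C} throughout the execution and to prove, by induction along the execution, the invariant that \emph{$S_C$ is always electable}. Initially $S_C=S$, electable by hypothesis. The set $S_C$ can only shrink when some particle $p$ leaves state \textbf{C}; by Case~1 of Algorithm~\ref{alg2} this happens precisely when $p$ is $S_C$-contractible, and the new candidate set is $S_C\setminus\{p\}$. Proposition~\ref{prop2} then guarantees that $S_C\setminus\{p\}$ is again electable, which closes the induction. The delicate point is asynchrony: several contractible particles may act concurrently. Since no two particles at distance at most $2$ perform a local computation simultaneously, and since $S_C$-contractibility of $p$ (Definition~\ref{defcontractible}) depends only on $M_{S_C}(p)$, $N_{S_C}^{0}(p)$ and $N_{S_C}^{\pm}(p)$, all of which lie within distance $2$ of $p$, any concurrent execution can be \emph{serialised} into a sequence of single removals in which each removed particle is genuinely $S_C$-contractible with respect to the candidate set it observes. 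The invariant is therefore well defined along this serialisation.

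Next I would draw two consequences from the invariant, using that electability makes $G_{S_C}$ a tree and hence $\mathcal{F}[S_C]$ connected. First, whenever $|S_C|\ge 2$ connectedness forces every candidate to have a neighbour in $S_C$, so by the branching in Case~1 such a particle can only switch to \textbf{N}, never to \textbf{L}. Second, by Proposition~\ref{prop1} every nonempty electable $S_C$ contains an $S_C$-contractible particle, so the process never stalls while candidates remain.

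For progress and termination I would invoke the round structure. Suppose a round begins with $|S_C|\ge 1$ and, for contradiction, that no particle changes state during it; then $S_C$ is unchanged throughout the round, yet the $S_C$-contractible particle guaranteed by Proposition~\ref{prop1} performs a local computation at some point in the round and, observing that same set $S_C$, finds itself contractible and leaves state \textbf{C} --- a contradiction. Hence $|S_C|$ strictly decreases in every round while candidates remain, so $S_C$ is emptied after finitely many rounds. Because neighbouring particles cannot act simultaneously, the serialisation removes particles one at a time, so the passage $|S_C|=1\to 0$ cannot be skipped: the last surviving particle satisfies conditions I)--III) of Definition~\ref{defcontractible} vacuously and has no neighbour in $S_C$, hence it sets its state to \textbf{L}. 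Together with the first consequence above (no \textbf{L} while two or more candidates remain), this gives exactly one particle in state \textbf{L} at the end.

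The main obstacle I anticipate is precisely reconciling the asynchronous, concurrent computation model with the static formulation of Propositions~\ref{prop1} and~\ref{prop2}: the algorithm interleaves the removal of several particles, whereas those propositions speak about removing one contractible particle from a fixed electable set. The crux is thus to argue rigorously that the distance-$2$ simultaneity restriction, combined with the locality of the contractibility test within distance $2$, permits a serialisation into single contractible removals so that Propositions~\ref{prop1} and~\ref{prop2} apply at each step; once this is established, the remainder is bookkeeping through the electability invariant.
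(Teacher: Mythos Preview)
Your proposal is correct and follows essentially the same approach as the paper, which simply invokes Propositions~\ref{prop1} and~\ref{prop2} in tandem to conclude. Your version is considerably more detailed: the paper's proof is three lines and leaves implicit the serialisation under asynchrony, the invariant maintenance, the progress-per-round argument, and the reason exactly one (rather than zero or several) particle reaches state \textbf{L}; you spell all of these out, and your handling of the distance-$2$ locality of the contractibility test against the distance-$2$ simultaneity restriction is the right way to justify the serialisation that the paper takes for granted.
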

\begin{proof}
By Proposition \ref{prop1}, there is always one $S$-contractible particle $p$ in $S$.
By Proposition \ref{prop2}, $S\setminus \{p\}$ is also electable. Consequently, by combining theses two results we obtain that the theorem holds.
\end{proof}

\begin{figure}[t]
\begin{center}
\begin{tikzpicture}[scale=0.80]

\draw (-1,1+3) -- (1,1+3);
\draw (-1,0+3) -- (1,0+3);
\draw (-1,0.5+3) -- (1,0.5+3);
\draw (-1,-0.5+3) -- (1,-0.5+3);
\draw (-1,-1+3) -- (1,-1+3);
\draw (0,-1+3) -- (0,1+3);
\draw (0.5,-1+3) -- (0.5,1+3);
\draw (-0.5,-1+3) -- (-0.5,1+3);
\draw (1,-1+3) -- (1,1+3);
\draw (-1,-1+3) -- (-1,1+3);

\node at (-1.5,3){$\mathcal{F}_0$};
\node at (-1.5,0){$\mathcal{F}_{-1}$};
\node at (-1.5,6){$\mathcal{F}_{1}$};
\node at (0,-1.1){\scriptsize{$r=0$}};
\node at (3,-1.1){\scriptsize{$r=1$}};
\node at (6,-1.1){\scriptsize{$r=2$}};
\node at (9,-1.1){\scriptsize{$r=3$}};
\node at (12,-1.1){\scriptsize{$r=4$}};
\node at (15,-1.1){\scriptsize{$r=5$}};

\node at (0.5,0+2.5)[circle,draw=black,fill=black,scale=0.5]{};
\draw (-0.75,0.25) -- (0.75,0.25);
\draw (-0.75,0.75) -- (0.75,0.75);
\draw (-0.75,-0.25) -- (0.75,-0.25);
\draw (-0.75,-0.75) -- (0.75,-0.75);
\draw (0.25,-0.75) -- (0.25,0.75);
\draw (0.75,-0.75) -- (0.75,0.75);
\draw (-0.25,-0.75) -- (-0.25,0.75);
\draw (-0.75,-0.75) -- (-0.75,0.75);

\node at (0.25,0.25)[regular polygon, regular polygon sides=4,draw=black,fill=red,scale=0.5]{};
\node at (-0.25,-0.25)[regular polygon, regular polygon sides=4,draw=black,fill=red,scale=0.5]{};
\node at (-0.25,0.25)[regular polygon, regular polygon sides=4,draw=black,fill=red,scale=0.5]{};
\node at (0.25,-0.25)[circle,draw=black,fill=black,scale=0.5]{};

\draw (-0.75,0.25+6) -- (0.75,0.25+6);
\draw (-0.75,0.75+6) -- (0.75,0.75+6);
\draw (-0.75,-0.25+6) -- (0.75,-0.25+6);
\draw (-0.75,-0.75+6) -- (0.75,-0.75+6);
\draw (0.25,-0.75+6) -- (0.25,0.75+6);
\draw (0.75,-0.75+6) -- (0.75,0.75+6);
\draw (-0.25,-0.75+6) -- (-0.25,0.75+6);
\draw (-0.75,-0.75+6) -- (-0.75,0.75+6);

\node at (-0.25,0.25+6)[regular polygon, regular polygon sides=4,draw=black,fill=red,scale=0.5]{};
\node at (0.25,-0.25+6)[circle,draw=black,fill=black,scale=0.5]{};
\node at (-0.25,-0.25+6)[regular polygon, regular polygon sides=4,draw=black,fill=red,scale=0.5]{};
\node at (0.25,0.25+6)[regular polygon, regular polygon sides=4,draw=black,fill=red,scale=0.5]{};

\draw (-1+3,1+3) -- (1+3,1+3);
\draw (-1+3,0+3) -- (1+3,0+3);
\draw (-1+3,0.5+3) -- (1+3,0.5+3);
\draw (-1+3,-0.5+3) -- (1+3,-0.5+3);
\draw (-1+3,-1+3) -- (1+3,-1+3);
\draw (0+3,-1+3) -- (0+3,1+3);
\draw (0.5+3,-1+3) -- (0.5+3,1+3);
\draw (-0.5+3,-1+3) -- (-0.5+3,1+3);
\draw (1+3,-1+3) -- (1+3,1+3);
\draw (-1+3,-1+3) -- (-1+3,1+3);

\node at (0.5+3,0+2.5)[circle,draw=black,fill=black,scale=0.5]{};
\draw (-0.75+3,0.25) -- (0.75+3,0.25);
\draw (-0.75+3,0.75) -- (0.75+3,0.75);
\draw (-0.75+3,-0.25) -- (0.75+3,-0.25);
\draw (-0.75+3,-0.75) -- (0.75+3,-0.75);
\draw (0.25+3,-0.75) -- (0.25+3,0.75);
\draw (0.75+3,-0.75) -- (0.75+3,0.75);
\draw (-0.25+3,-0.75) -- (-0.25+3,0.75);
\draw (-0.75+3,-0.75) -- (-0.75+3,0.75);

\node at (0.25+3,0.25)[circle,draw=black,fill=black,scale=0.5]{};
\node at (-0.25+3,-0.25)[regular polygon, regular polygon sides=3,draw=black,fill=green,scale=0.35]{};
\node at (-0.25+3,0.25)[regular polygon, regular polygon sides=4,draw=black,fill=red,scale=0.5]{};
\node at (0.25+3,-0.25)[circle,draw=black,fill=black,scale=0.5]{};

\draw (-0.75+3,0.25+6) -- (0.75+3,0.25+6);
\draw (-0.75+3,0.75+6) -- (0.75+3,0.75+6);
\draw (-0.75+3,-0.25+6) -- (0.75+3,-0.25+6);
\draw (-0.75+3,-0.75+6) -- (0.75+3,-0.75+6);
\draw (0.25+3,-0.75+6) -- (0.25+3,0.75+6);
\draw (0.75+3,-0.75+6) -- (0.75+3,0.75+6);
\draw (-0.25+3,-0.75+6) -- (-0.25+3,0.75+6);
\draw (-0.75+3,-0.75+6) -- (-0.75+3,0.75+6);

\node at (-0.25+3,0.25+6)[regular polygon, regular polygon sides=4,draw=black,fill=red,scale=0.5]{};
\node at (0.25+3,-0.25+6)[circle,draw=black,fill=black,scale=0.5]{};
\node at (-0.25+3,-0.25+6)[circle,draw=black,fill=black,scale=0.5]{};
\node at (0.25+3,0.25+6)[regular polygon, regular polygon sides=3,draw=black,fill=green,scale=0.35]{};

\draw (-1+6,1+3) -- (1+6,1+3);
\draw (-1+6,0+3) -- (1+6,0+3);
\draw (-1+6,0.5+3) -- (1+6,0.5+3);
\draw (-1+6,-0.5+3) -- (1+6,-0.5+3);
\draw (-1+6,-1+3) -- (1+6,-1+3);
\draw (0+6,-1+3) -- (0+6,1+3);
\draw (0.5+6,-1+3) -- (0.5+6,1+3);
\draw (-0.5+6,-1+3) -- (-0.5+6,1+3);
\draw (1+6,-1+3) -- (1+6,1+3);
\draw (-1+6,-1+3) -- (-1+6,1+3);

\node at (0.5+6,0+2.5)[circle,draw=black,fill=black,scale=0.5]{};
\draw (-0.75+6,0.25) -- (0.75+6,0.25);
\draw (-0.75+6,0.75) -- (0.75+6,0.75);
\draw (-0.75+6,-0.25) -- (0.75+6,-0.25);
\draw (-0.75+6,-0.75) -- (0.75+6,-0.75);
\draw (0.25+6,-0.75) -- (0.25+6,0.75);
\draw (0.75+6,-0.75) -- (0.75+6,0.75);
\draw (-0.25+6,-0.75) -- (-0.25+6,0.75);
\draw (-0.75+6,-0.75) -- (-0.75+6,0.75);

\node at (0.25+6,0.25)[regular polygon, regular polygon sides=4,draw=black,fill=red,scale=0.5]{};
\node at (-0.25+6,-0.25)[regular polygon, regular polygon sides=3,draw=black,fill=green,scale=0.35]{};
\node at (-0.25+6,0.25)[regular polygon, regular polygon sides=3,draw=black,fill=green,scale=0.35]{};
\node at (0.25+6,-0.25)[circle,draw=black,fill=black,scale=0.5]{};

\draw (-0.75+6,0.25+6) -- (0.75+6,0.25+6);
\draw (-0.75+6,0.75+6) -- (0.75+6,0.75+6);
\draw (-0.75+6,-0.25+6) -- (0.75+6,-0.25+6);
\draw (-0.75+6,-0.75+6) -- (0.75+6,-0.75+6);
\draw (0.25+6,-0.75+6) -- (0.25+6,0.75+6);
\draw (0.75+6,-0.75+6) -- (0.75+6,0.75+6);
\draw (-0.25+6,-0.75+6) -- (-0.25+6,0.75+6);
\draw (-0.75+6,-0.75+6) -- (-0.75+6,0.75+6);

\node at (-0.25+6,0.25+6)[regular polygon, regular polygon sides=3,draw=black,fill=green,scale=0.35]{};
\node at (0.25+6,-0.25+6)[circle,draw=black,fill=black,scale=0.5]{};
\node at (-0.25+6,-0.25+6)[regular polygon, regular polygon sides=4,draw=black,fill=red,scale=0.5]{};
\node at (0.25+6,0.25+6)[regular polygon, regular polygon sides=3,draw=black,fill=green,scale=0.35]{};

\draw (-1+9,1+3) -- (1+9,1+3);
\draw (-1+9,0+3) -- (1+9,0+3);
\draw (-1+9,0.5+3) -- (1+9,0.5+3);
\draw (-1+9,-0.5+3) -- (1+9,-0.5+3);
\draw (-1+9,-1+3) -- (1+9,-1+3);
\draw (0+9,-1+3) -- (0+9,1+3);
\draw (0.5+9,-1+3) -- (0.5+9,1+3);
\draw (-0.5+9,-1+3) -- (-0.5+9,1+3);
\draw (1+9,-1+3) -- (1+9,1+3);
\draw (-1+9,-1+3) -- (-1+9,1+3);

\node at (0.5+9,0+2.5)[circle,draw=black,fill=black,scale=0.5]{};
\draw (-0.75+9,0.25) -- (0.75+9,0.25);
\draw (-0.75+9,0.75) -- (0.75+9,0.75);
\draw (-0.75+9,-0.25) -- (0.75+9,-0.25);
\draw (-0.75+9,-0.75) -- (0.75+9,-0.75);
\draw (0.25+9,-0.75) -- (0.25+9,0.75);
\draw (0.75+9,-0.75) -- (0.75+9,0.75);
\draw (-0.25+9,-0.75) -- (-0.25+9,0.75);
\draw (-0.75+9,-0.75) -- (-0.75+9,0.75);

\node at (0.25+9,0.25)[regular polygon, regular polygon sides=4,draw=black,fill=red,scale=0.5]{};
\node at (-0.25+9,-0.25)[regular polygon, regular polygon sides=3,draw=black,fill=green,scale=0.35]{};
\node at (-0.25+9,0.25)[regular polygon, regular polygon sides=3,draw=black,fill=green,scale=0.35]{};
\node at (0.25+9,-0.25)[circle,draw=black,fill=black,scale=0.5]{};

\draw (-0.75+9,0.25+6) -- (0.75+9,0.25+6);
\draw (-0.75+9,0.75+6) -- (0.75+9,0.75+6);
\draw (-0.75+9,-0.25+6) -- (0.75+9,-0.25+6);
\draw (-0.75+9,-0.75+6) -- (0.75+9,-0.75+6);
\draw (0.25+9,-0.75+6) -- (0.25+9,0.75+6);
\draw (0.75+9,-0.75+6) -- (0.75+9,0.75+6);
\draw (-0.25+9,-0.75+6) -- (-0.25+9,0.75+6);
\draw (-0.75+9,-0.75+6) -- (-0.75+9,0.75+6);

\node at (-0.25+9,0.25+6)[regular polygon, regular polygon sides=3,draw=black,fill=green,scale=0.35]{};
\node at (0.25+9,-0.25+6)[circle,draw=black,fill=black,scale=0.5]{};
\node at (-0.25+9,-0.25+6)[regular polygon, regular polygon sides=4,draw=black,fill=red,scale=0.5]{};
\node at (0.25+9,0.25+6)[regular polygon, regular polygon sides=3,draw=black,fill=green,scale=0.35]{};

\draw (-1+12,1+3) -- (1+12,1+3);
\draw (-1+12,0+3) -- (1+12,0+3);
\draw (-1+12,0.5+3) -- (1+12,0.5+3);
\draw (-1+12,-0.5+3) -- (1+12,-0.5+3);
\draw (-1+12,-1+3) -- (1+12,-1+3);
\draw (0+12,-1+3) -- (0+12,1+3);
\draw (0.5+12,-1+3) -- (0.5+12,1+3);
\draw (-0.5+12,-1+3) -- (-0.5+12,1+3);
\draw (1+12,-1+3) -- (1+12,1+3);
\draw (-1+12,-1+3) -- (-1+12,1+3);

\node at (0.5+12,0+2.5)[circle,draw=black,fill=black,scale=0.5]{};
\draw (-0.75+12,0.25) -- (0.75+12,0.25);
\draw (-0.75+12,0.75) -- (0.75+12,0.75);
\draw (-0.75+12,-0.25) -- (0.75+12,-0.25);
\draw (-0.75+12,-0.75) -- (0.75+12,-0.75);
\draw (0.25+12,-0.75) -- (0.25+12,0.75);
\draw (0.75+12,-0.75) -- (0.75+12,0.75);
\draw (-0.25+12,-0.75) -- (-0.25+12,0.75);
\draw (-0.75+12,-0.75) -- (-0.75+12,0.75);

\node at (0.25+12,0.25)[regular polygon, regular polygon sides=3,draw=black,fill=green,scale=0.35]{};
\node at (-0.25+12,-0.25)[regular polygon, regular polygon sides=3,draw=black,fill=green,scale=0.35]{};
\node at (-0.25+12,0.25)[regular polygon, regular polygon sides=3,draw=black,fill=green,scale=0.35]{};
\node at (0.25+12,-0.25)[regular polygon, regular polygon sides=4,draw=black,fill=red,scale=0.5]{};

\draw (-0.75+12,0.25+6) -- (0.75+12,0.25+6);
\draw (-0.75+12,0.75+6) -- (0.75+12,0.75+6);
\draw (-0.75+12,-0.25+6) -- (0.75+12,-0.25+6);
\draw (-0.75+12,-0.75+6) -- (0.75+12,-0.75+6);
\draw (0.25+12,-0.75+6) -- (0.25+12,0.75+6);
\draw (0.75+12,-0.75+6) -- (0.75+12,0.75+6);
\draw (-0.25+12,-0.75+6) -- (-0.25+12,0.75+6);
\draw (-0.75+12,-0.75+6) -- (-0.75+12,0.75+6);

\node at (-0.25+12,0.25+6)[regular polygon, regular polygon sides=3,draw=black,fill=green,scale=0.35]{};
\node at (0.25+12,-0.25+6)[regular polygon, regular polygon sides=4,draw=black,fill=red,scale=0.5]{};
\node at (-0.25+12,-0.25+6)[regular polygon, regular polygon sides=3,draw=black,fill=green,scale=0.35]{};
\node at (0.25+12,0.25+6)[regular polygon, regular polygon sides=3,draw=black,fill=green,scale=0.35]{};

\draw (-1+15,1+3) -- (1+15,1+3);
\draw (-1+15,0+3) -- (1+15,0+3);
\draw (-1+15,0.5+3) -- (1+15,0.5+3);
\draw (-1+15,-0.5+3) -- (1+15,-0.5+3);
\draw (-1+15,-1+3) -- (1+15,-1+3);
\draw (0+15,-1+3) -- (0+15,1+3);
\draw (0.5+15,-1+3) -- (0.5+15,1+3);
\draw (-0.5+15,-1+3) -- (-0.5+15,1+3);
\draw (1+15,-1+3) -- (1+15,1+3);
\draw (-1+15,-1+3) -- (-1+15,1+3);

\node at (0.5+15,0+2.5)[regular polygon, regular polygon sides=5,draw=black,fill=blue,scale=0.7]{};
\draw (-0.75+15,0.25) -- (0.75+15,0.25);
\draw (-0.75+15,0.75) -- (0.75+15,0.75);
\draw (-0.75+15,-0.25) -- (0.75+15,-0.25);
\draw (-0.75+15,-0.75) -- (0.75+15,-0.75);
\draw (0.25+15,-0.75) -- (0.25+15,0.75);
\draw (0.75+15,-0.75) -- (0.75+15,0.75);
\draw (-0.25+15,-0.75) -- (-0.25+15,0.75);
\draw (-0.75+15,-0.75) -- (-0.75+15,0.75);

\node at (0.25+15,0.25)[regular polygon, regular polygon sides=3,draw=black,fill=green,scale=0.35]{};
\node at (-0.25+15,-0.25)[regular polygon, regular polygon sides=3,draw=black,fill=green,scale=0.35]{};
\node at (-0.25+15,0.25)[regular polygon, regular polygon sides=3,draw=black,fill=green,scale=0.35]{};
\node at (0.25+15,-0.25)[regular polygon, regular polygon sides=3,draw=black,fill=green,scale=0.35]{};

\draw (-0.75+15,0.25+6) -- (0.75+15,0.25+6);
\draw (-0.75+15,0.75+6) -- (0.75+15,0.75+6);
\draw (-0.75+15,-0.25+6) -- (0.75+15,-0.25+6);
\draw (-0.75+15,-0.75+6) -- (0.75+15,-0.75+6);
\draw (0.25+15,-0.75+6) -- (0.25+15,0.75+6);
\draw (0.75+15,-0.75+6) -- (0.75+15,0.75+6);
\draw (-0.25+15,-0.75+6) -- (-0.25+15,0.75+6);
\draw (-0.75+15,-0.75+6) -- (-0.75+15,0.75+6);

\node at (-0.25+15,0.25+6)[regular polygon, regular polygon sides=3,draw=black,fill=green,scale=0.35]{};
\node at (0.25+15,-0.25+6)[regular polygon, regular polygon sides=3,draw=black,fill=green,scale=0.35]{};
\node at (-0.25+15,-0.25+6)[regular polygon, regular polygon sides=3,draw=black,fill=green,scale=0.35]{};
\node at (0.25+15,0.25+6)[regular polygon, regular polygon sides=3,draw=black,fill=green,scale=0.35]{};
\end{tikzpicture}
\end{center}
\caption{An example of the execution of Algorithm~\ref{alg2} with successive rounds ($r$ being the number of finished rounds) going from left to right on an electable set $S$ (circle and square: particle in $S$; square: $S_C$-contractible particle of $S$; triangle: particle which is no more in $S_C$; pentagon: elected leader).}
\label{examplexec}
\end{figure}
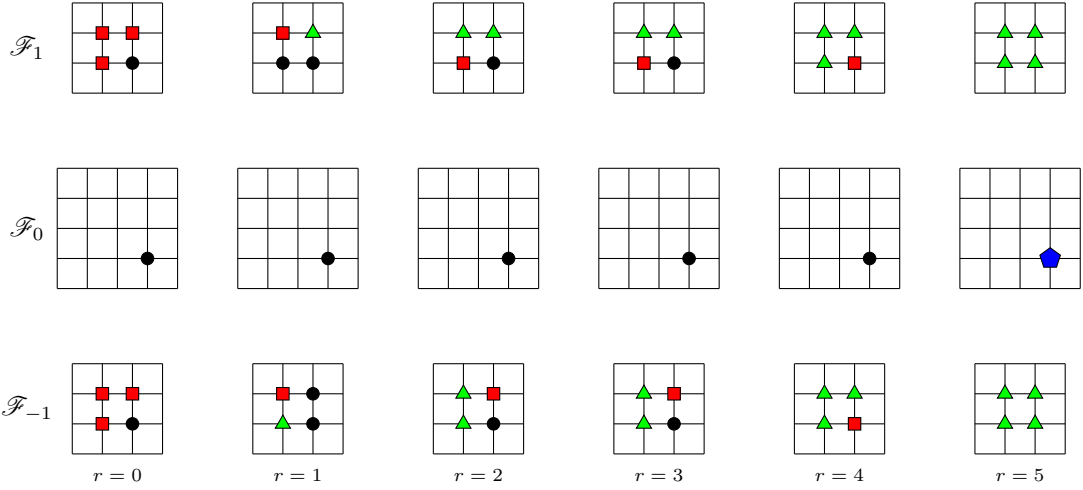

\subsubsection{Details about Algorithm \ref{alg2}}
We begin with giving some intuition on how the algorithm works. Each particle $p$ verifies properties in its neighborhood which guarantee that $p$ is not an articulation (the removing of it does not split the network into two connected components) and $p$ is no more candidate to be leader if the properties are satisfied. In this algorithm, only the candidate particles are considered. This implies that a particle $p$ can become no longer an articulation in the case some particles in its neighborhood are no longer candidate.
The leader is the last remaining particle. For example, if there are only three particles which are at positions $(-1,0,0)$, $(0,0,0)$ and $(1,0,0)$, then there is only one articulation which the particle at position $(0,0,0)$. Algorithm~\ref{alg2} will then first remove the candidacy of one of the two particles at positions $(-1,0,0)$ and $(1,0,0)$ and after will remove one of the two remaining particles and will elect the last one (which will be candidate to be leader).

More precisely, let $S$ be the set of particles in state \textbf{C}. Algorithm~\ref{alg2} consists in removing from $S$ some particles of the layer $i$ which are both on the geographical border of $G[S\cap \mathcal{F}_i]$ and not articulations of $\mathcal{F}[S]$ (an articulation being a vertex which split, when it is removed, in two connected components the vertices of $\mathcal{F}[S]$). 
In this algorithm we never remove a particle of the layer $i$ which is an articulation of $G[S\cap \mathcal{F}_i]$ and we never remove a particle having neighbors in both the layer $i-1$ and $i+1$ (such particles could be articulations). The algorithm will finish when there is only one particle in state \textbf{C}. The fact that $S$ is electable implies that there is always a contractible particle , i.e., a particle that we can remove (see Proposition \ref{prop1}).

Figure \ref{examplexec} illustrates an example of the execution of Algorithm~\ref{alg2} on the first five rounds on a set $S$. In this figure, $S$ is electable since it can be easily noticed that $G_S$ is a path of three vertices and that every $A\in V(G_S)$ is such that $\mathcal{F}[A]$ is isometric. Also, it can be remarked that another example of the execution of Algorithm~\ref{alg2} could lead to the exclusion from $S_C$ of the particle in the top-left corner of the layer $-1$ or $1$.
\subsubsection{Required number of rounds, space-complexity and algorithm to check if a set is electable}

In Algorithm \ref{alg2}, it can be noted that each particle $p$ has only to compute if it is $S$-contractible or not and, as explained before Proposition \ref{existamere}, such verifications can be done by only using the labels of the ports of $p$ in contact with particles of the neighborhood and, for each pair of particles $q$ and $q'$ in the closed neighborhood of $p$, the labels of the ports of $q$ and $q'$ in contact with a common neighbor. Thus, in order that Algorithm \ref{alg2} works properly, it only requires that each particle is able to compute $O(1)$ operations. Moreover, about the required memory space and required number of rounds for execution, we state the following. 
\begin{prop}
Whatever the structure of $\mathcal{F}[S]$, if $\mathcal{F}[S]$ is connected and $S$ is electable, then one particle is in state \textbf{L} after $|S|$ rounds since the beginning of the execution of Algorithm \ref{alg2}. Moreover, the memory space used by Algorithm \ref{alg2} is constant.
\end{prop}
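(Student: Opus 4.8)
The plan is to separate the statement into its two assertions and dispose of the memory bound first, as it is immediate. In Algorithm~\ref{alg2} each particle records nothing beyond its current state among $\textbf{C}$, $\textbf{L}$, $\textbf{N}$, and, as explained just before Proposition~\ref{existamere}, the test for $S_C$-contractibility is carried out purely from the labels of the ports in contact with neighbours together with the common-neighbour information supplied by hypothesis. No list whose size grows with $|S|$ is ever stored and only $O(1)$ operations are performed, so the per-particle memory is constant, independent of $|S|$.

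For the round bound, the key object to track is the set $S_C$ of candidate particles throughout the execution. First I would note that, although several far-apart particles may act within a single round, the computation model forbids simultaneous local computations at distance at most $2$, so the whole execution can be read as a sequence of single local computations; a particle leaves $S_C$ (setting its state to $\textbf{N}$, or to $\textbf{L}$ if it has no candidate neighbour) precisely at the instant it finds itself $S_C$-contractible for the \emph{current} $S_C$. Applying Proposition~\ref{prop2} at each such event shows by induction that $S_C$ stays electable at every step. Here one must also observe that two particles removed ``simultaneously'' are necessarily at distance at least $3$, so that deleting one does not change $M_{S_C}(\cdot)$, $N_{S_C}^{0}(\cdot)$ or $N_{S_C}^{\pm}(\cdot)$ of the other, which legitimises applying Proposition~\ref{prop2} one deletion at a time even for concurrent removals.

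Next I would establish the central claim: whenever $|S_C|\ge 2$ at the start of a round, at least one particle leaves $S_C$ during that round. I would argue by contradiction. If no particle left, then $S_C$ would remain constant throughout the round; by Proposition~\ref{prop1} the electable set $S_C$ contains an $S_C$-contractible particle $p$, and since $\mathcal{F}[S_C]$ is connected with $|S_C|\ge 2$, this $p$ has a neighbour in $S_C$. Because every particle performs at least one local computation per round, when $p$ computes it is still $S_C$-contractible and therefore sets its state to $\textbf{N}$, contradicting the assumption. Consequently $|S_C|$ strictly decreases at each round until $|S_C|=1$, which occurs after at most $|S|-1$ rounds; the last remaining particle then has $N_{S_C}^{0}(p)=N_{S_C}^{\pm}(p)=\emptyset$, is trivially $S_C$-contractible with no candidate neighbour, and becomes $\textbf{L}$ within one further round, yielding the bound of $|S|$ rounds.

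The step I expect to be the main obstacle is the bookkeeping around asynchrony: making rigorous that the dynamic, instant-by-instant re-evaluation of contractibility is compatible with Proposition~\ref{prop2}, which is phrased for a single deletion from a fixed electable set, and that concurrent removals are harmless because they concern particles far enough apart to have disjoint relevant neighbourhoods. Once this is set up cleanly, the remainder reduces to the two structural facts already proved, namely that an electable set always contains a contractible particle and that deleting such a particle preserves electability.
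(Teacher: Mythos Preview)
Your proof is correct and follows the same skeleton as the paper's: both invoke Proposition~\ref{prop1} (existence of a contractible particle in any electable set) and Proposition~\ref{prop2} (electability preserved under removal of a contractible particle) to conclude that at least one candidate disappears per round, and both justify the constant memory bound by pointing to the purely local contractibility test described before Proposition~\ref{existamere}. The paper's version is markedly terser---it simply asserts ``after each round there is one less particle in state~\textbf{C}'' without further justification---whereas you spell out the bookkeeping: the linearisation of concurrent removals via the distance-$2$ exclusion rule, the observation that simultaneously removed particles are at distance at least~$3$ and hence have disjoint extended neighbourhoods, and the explicit contradiction argument for why a round with $|S_C|\ge 2$ must shrink $S_C$. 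This added rigour is welcome and not present in the paper, but the underlying strategy is the same.
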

\begin{proof}
First, it can be noted, that by Propositions~\ref{prop1} and ~\ref{prop2}, there is always (at least) one $S$-contractible particle $p$ in $S$. Thus, after each round there is one less particle in state \textbf{C}. Consequently, after $|S|$ rounds, the last particle that was in state \textbf{C} will change its state to \textbf{L}.
Second, only the information to check if a particle is $S$-contractible are required in the memory. These information consists in the labels of the ports of the particle in contact with particles of the neighborhood and, for each pair of particles $q$ and $q'$ in the closed neighborhood, in the labels of the ports of $q$ and $q'$ in contact with a common neighbor. Since the number of neighbors is bounded, the required memory space is also bounded.
\end{proof}


The vertices of $S$ on a layer $k$ are said to {\em form a circle} if there exist a vertex $u\in S$ and a positive integer $d$ such that $v\in S$ if and only if $u$ is at distance at most $d$ from $v$ in $\mathcal{F}$.
Analogously, the vertices on a layer $k$ of $S$ {\em form a rectangle} if there exist four integers $i_0,$ $i_1$, $j_0$ and $j_1$ such that $(i,j,k)\in S$ if and only if $i_0\le i\le i_1$ and $j_0\le j\le j_1$.

It can be easily noted that in the case the vertices of $S$ on a layer $k$ forms either a circle or a rectangle, the graph induced by the vertices of $S$ on a layer $k$ is isometric (Property b)). If the vertices of $S$ form either a circle or a rectangle on every layer of $\mathcal{F}$, then $G_S$ is a path (Property a)) and since for every two adjacent sets $A$ and $B$ from $G_S$, the sets $\{u\in B | \ \exists v \in A,\ v\in N(u)\}$ and $\{u\in A | \ \exists v \in B,\ v\in N(u)\}$ both induce connected sets, it implies that $S$ is electable in this case. Thus, we obtain the following.

\begin{cor}
Let $S$ be the set of vertices of $\mathcal{F}$ occupied by particles.
If the vertices of $S$ form either a circle or a rectangle on every layer of $\mathcal{F}$, then Algorithm \ref{alg2} allows to elect a unique leader.
\end{cor}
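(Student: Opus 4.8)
The plan is to reduce the statement to the correctness theorem for electable sets proved just above: that result guarantees a unique leader whenever $S$ is electable, so it suffices to show that any set $S$ whose intersection with each layer is a circle or a rectangle satisfies the three defining conditions a), b) and c) of electability. The whole proof therefore consists in checking these three conditions for such an $S$.

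I would begin with condition b). Within a fixed layer the FCC-distance collapses to the ordinary $L_1$ (Manhattan) distance: substituting $k=k'$ into the distance formula of Section~\ref{sec:2.1} leaves $|i-i'|+|j-j'|$, and $\mathcal{F}_k$ is isomorphic to a square grid. A rectangle is then an axis-aligned orthoconvex block, while a circle is the trace of an FCC-ball on the layer, i.e.\ an $L_1$-ball (a diamond, possibly with clipped corners); in both cases a shortest Manhattan path between two vertices of the region can be chosen to remain inside it, so the region is $L_1$-convex. Hence $d_{\mathcal{F}[A]}=d_{\mathcal{F}_k}$ for every component $A$, which is exactly isometry, giving condition b).

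For condition a), I would invoke the standing assumption that $\mathcal{F}[S]$ is connected. Since a circle or a rectangle is connected, $S\cap V(\mathcal{F}_i)$ is a single connected component for each occupied layer $i$, so $G_S$ has exactly one vertex per occupied layer. As the edges of $\mathcal{F}$ only join vertices lying in the same or in consecutive layers, connectedness forces the occupied layers to be a consecutive range of integers, and each layer-vertex of $G_S$ is adjacent only to those of the layers immediately above and below. Thus $G_S$ is a path, in particular a tree, which is condition a).

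The delicate point is condition c), and this is where I expect the real work. Fixing two adjacent components $A\subseteq\mathcal{F}_k$ and $B\subseteq\mathcal{F}_{k+1}$, the inter-layer adjacency joins $(i,j,k)$ to the four vertices $(i\pm\tfrac12,j\pm\tfrac12,k+1)$, so moving between layers is a half-integer diagonal shift. I would show that each contact set, $\{u\in A:\exists v\in B,\ v\in N(u)\}$ and its symmetric counterpart in $B$, equals the intersection of one $L_1$-convex region with a unit-diagonal dilation of the other, hence is again an orthoconvex lattice region and therefore connected. The main obstacle is to carry out this convexity bookkeeping uniformly while tracking the half-integer offset, across the four shape combinations (rectangle/rectangle, rectangle/circle, circle/rectangle, circle/circle), since the connectivity of the lattice points of an intersection is the only point that is not purely formal. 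Once both contact sets are shown connected, condition c) holds; $S$ is then electable, and the correctness theorem for electable sets yields that Algorithm~\ref{alg2} elects a unique leader.
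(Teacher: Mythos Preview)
Your proposal is correct and follows exactly the same route as the paper: the paper's justification (the short paragraph preceding the corollary) simply notes that circles and rectangles on each layer give isometric subgraphs (condition~b)), that $G_S$ is then a path (condition~a)), and asserts without further detail that the two contact sets between adjacent layers are connected (condition~c)), concluding that $S$ is electable and invoking the correctness theorem. Your write-up is more careful than the paper's---in particular you actually sketch why the contact sets in condition~c) are connected via $L_1$-convexity, whereas the paper just states it---but the decomposition and the key reduction are the same.
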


Finally, we end this section by giving a non distributed algorithm in order to check if a set of particles $S$ is electable or not.

\begin{prop}
Given a set of particles $S$, there exists an $O(|S|^3)$ algorithm to verify if the set $S$ is electable or not.
\end{prop}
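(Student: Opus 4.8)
The plan is to give a centralized algorithm that tests the three defining conditions a), b) and c) of electability in turn, bounding the cost of each test by $O(|S|^3)$. Write $n=|S|$. First I would store the particles in a dictionary keyed by their coordinates, so that given a vertex of $\mathcal{F}$ one can decide in $O(1)$ time whether it lies in $S$ and, if so, retrieve the particle; building this dictionary costs $O(n)$. Using it, for each particle I can list its (at most twelve) neighbors in $\mathcal{F}[S]$ in constant time, so the whole adjacency structure of $\mathcal{F}[S]$ is available in $O(n)$ total. Next I would construct $G_S$ explicitly: within each nonempty layer $\mathcal{F}_i$ (there are at most $n$ of them) a BFS/union--find pass computes the connected components of $\mathcal{F}[S\cap V(\mathcal{F}_i)]$, which are exactly the vertices of $G_S$; scanning, for every particle, its up-to-eight inter-layer neighbors and recording which pair of components each such edge joins (deduplicating repeated pairs by sorting or hashing) yields the edge set of $G_S$. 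Since $\mathcal{F}[S]$ has $O(n)$ edges, this produces $G_S$, with $m\le n$ vertices and $O(n)$ edges, in $O(n)$ time.

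Condition a) is then immediate: $\mathcal{F}[S]$ is connected by the standing hypothesis, hence so is $G_S$, and a connected graph is a tree if and only if its number of edges equals $m-1$, which is an $O(n)$ test. For condition c) I would process each edge $(A,B)$ of $G_S$ separately: the interface set $\{u\in A\mid \exists v\in B,\ v\in N(u)\}$ is obtained by scanning the vertices of $A$ and testing their inter-layer neighbors against $B$ in $O(|A|)$ time, and, since these vertices all lie in a single layer, the subgraph they induce in $\mathcal{F}$ coincides with the one they induce in that layer's square grid, whose connectivity is decided by a single BFS in $O(|A|)$ time; the symmetric set in $B$ is handled the same way. Summing $O(|A|+|B|)$ over all edges of the tree $G_S$ gives at most $O(n^2)$.

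The main obstacle, and the step responsible for the cubic bound, is condition b): verifying that $\mathcal{F}[A]$ is isometric for every $A\in V(G_S)$. Here I would exploit that the host graph of a layer is a square grid, so by the distance formula of Section~\ref{sec:2.1} (with $|k-k'|=0$) the target distance $d_{\mathcal{F}_i}(u,v)$ equals the Manhattan distance $|i-i'|+|j-j'|$ and is computable in $O(1)$ per pair. For a fixed component $A$ I would compute all pairwise distances $d_{\mathcal{F}[A]}(u,v)$ by running a BFS from each of its vertices inside $\mathcal{F}[A]$, and declare $A$ isometric precisely when every computed $d_{\mathcal{F}[A]}(u,v)$ equals the corresponding Manhattan distance. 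One such all-pairs computation costs $O(|A|^2)$ (or $O(|A|^3)$ with a blunter Floyd--Warshall), and since $\sum_{A}|A|=n$ we have $\sum_A O(|A|^3)\le O(n^3)$. Combining the three tests, the overall running time is dominated by this last step and is $O(n^3)=O(|S|^3)$; correctness follows because each of a), b) and c) is checked verbatim against its definition.
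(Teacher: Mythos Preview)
Your proposal is correct and follows essentially the same three-step decomposition as the paper: build $G_S$ from the per-layer connected components and test that it is a tree, verify isometry of each component by comparing intrinsic distances to the closed-form layer distance, and check connectivity of the two interface sets on every tree edge, with the all-pairs distance computation for condition~b) being the $O(|S|^3)$ bottleneck. Your write-up is in fact more careful about data structures and per-step complexity than the paper's own argument (and your BFS-based variant even hints that the bound could be tightened), but the underlying strategy is the same.
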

\begin{proof}
Let $L$ the set of the particles from $S$.
This algorithm can be decomposed into three steps, each step consisting in verifying one of the properties among properties a), b) and c).
The first step consists in separating the set $L$ into subsets depending on the value of $k$, i.e., the value of the third element of the triplets in $L$. Afterward, we split these subsets into connected components and we denote by $M$ this set of sets. We can easily build $G_S$ from $M$ (since the adjacency between two triplets is easy to compute). Finally, by checking if $G_S$ is a tree, we can determine if $S$ satisfy Property a) or not.

The second step consists in computing the distance in $\mathcal{F}[A]$ between every two triplets in a same set $A\in M$ and compare it with the distance given by the formula of Section~\ref{sec:2.1}. If there exists two triplets for which there is a difference between the two distances, then $S$ does not satisfy Property b), otherwise, it satisfies Property b).

The last step consists, for every pair of sets $(A,A')$ from $M$ such that there are two adjacent triplets $u\in A$ and $v\in A'$, in computing two sets $B$ and $B'$ and in checking if $B$ and $B'$ are connected. The set $B$ being the set of triplets of $A$ adjacent to a triplet of $A'$ and the set $B'$ being the set of triplets in $A'$ adjacent to a triplet of $A$. If at least one constructed set is not connected, then $S$ does not satisfy Property c), otherwise, it satisfies Property c).

Since computing the distance between every two triplets can be obtained by an $O(|S|^3)$ algorithm and since the other problems (number of connected component and cheking if a graph is a tree) can be solved by faster algorithms,  there exists an $O(|S|^3)$ algorithm to check if the set $S$ is electable or not.
\end{proof}

\section{Local and global identifiers}

In this section, we propose distributed and deterministic algorithms to create global and $\ell$-local identifiers of the particles. An $\ell$-local identifier is a variable affected to each particle of the network, which is different for every two particles that are at distance at most $\ell$.
Since particles have limited capacities, the idea of using $\ell$-local identifier is to allow to spare less memory than for global ones. 

The results presented in this part are in the continuity of the work about $\ell$-local identifiers in the context of programmable matter~\cite{GT2018} (this work was using colorings of the  $\ell$-th power of the  triangular grid).  Here, we will use the same method as well as a recent result on the combinatorial problem of coloring the  $\ell$-th power of the  face-centered cubic grid~\cite{GT2019}.

\subsection{Local identifier}

The main idea for computing local identifiers is to use a coloring of the $\ell$-th power of the face centered cubic grid as identifiers.
A \emph{$k$-coloring} of a graph $G$ is a map $f$ from $V(G)$ to $\{0,1,\ldots,k-1\}$ which satisfies $f(u)\neq f(v)$ for every $uv\in E(G)$.
The \emph{chromatic number} $\chi(G)$ of $G$, is the smallest integer $k$ such that there exists a $k$-coloring of $G$.
The \emph{$d$-th  power} $G^d$ of a graph $G$ is the graph obtained from $G$ by adding an edge between every two vertices satisfying $d_{G}(u,v)\le d$.

An upper bound on the chromatic number of the $d$-th power of $\mathcal{F}$ was given in~\cite{GT2019}:
\begin{theo}[\cite{GT2019}]
For any $d\ge 1$, there exists a coloring of the $d$-th power of $\mathcal{F}$ using $(d+1) \lceil (d+1)^{2}/2 \rceil$ colors.
\end{theo}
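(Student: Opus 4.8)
The plan is to build the colouring in two stages that mirror the product structure $(d+1)\cdot\lceil (d+1)^2/2\rceil$ of the target: a factor $d+1$ coming from the vertical (layer) direction, and a factor $\lceil (d+1)^2/2\rceil$ coming from colouring a single layer. The starting observation is that the distance formula of Section~\ref{sec:2.1} gives $d_{\mathcal{F}}(u,v)\ge |k-k'|$ for vertices $u,v$ lying in layers $k,k'$, and that two vertices in the same layer $k$ satisfy $d_{\mathcal{F}}(u,v)=|i-i'|+|j-j'|$, i.e.\ their distance is exactly the Manhattan ($L^1$) distance of their in-layer coordinates (this holds for both even layers, with integer coordinates, and odd layers, where the $0.5$ offsets cancel).

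First I would reduce the problem to a planar one. I assign to every vertex $v$ of layer $k$ the colour $(k \bmod (d+1),\, g(v))$, where $g$ is a colouring of the square grid $\mathbb{Z}^2$ (to be constructed) such that any two points at $L^1$-distance at most $d$ receive distinct values of $g$. Suppose $u\neq v$ with $d_{\mathcal{F}}(u,v)\le d$. If they lie in the same layer, then $|i-i'|+|j-j'|=d_{\mathcal{F}}(u,v)\le d$ is positive, so $g$ separates them in the second coordinate. If they lie in layers $k\neq k'$, then $1\le |k-k'|\le d_{\mathcal{F}}(u,v)\le d<d+1$, hence $k\not\equiv k'\pmod{d+1}$ and the first coordinate separates them. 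Thus this is a proper colouring of $\mathcal{F}^d$, using at most $(d+1)\,M$ colours, where $M$ is the number of colours used by $g$.

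It remains to prove the planar lemma: $\mathbb{Z}^2$ admits a colouring $g$ with $M=\lceil (d+1)^2/2\rceil$ colours separating all pairs at $L^1$-distance $\le d$. I would obtain $g$ as the quotient map $v\mapsto v+\Lambda$ for a suitable sublattice $\Lambda\subseteq\mathbb{Z}^2$; then $g$ separates $u,v$ precisely when $u-v\notin\Lambda$, so it suffices to exhibit a $\Lambda$ all of whose nonzero vectors have $L^1$-norm at least $d+1$ and whose index equals $\lceil(d+1)^2/2\rceil$. Writing $d+1=2m$ in the even case, I take $\Lambda=\langle (m,m),(m,-m)\rangle$; every nonzero combination $p(m,m)+q(m,-m)=(m(p+q),m(p-q))$ has $L^1$-norm $2m\max(|p|,|q|)\ge 2m=d+1$, and the determinant is $2m^2=(d+1)^2/2$. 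Writing $d+1=2m+1$ in the odd case, I take the Gaussian-type lattice $\Lambda=\langle (m+1,m),(-m,m+1)\rangle$, whose determinant is $(m+1)^2+m^2=2m^2+2m+1=\lceil (d+1)^2/2\rceil$; here the minimal nonzero $L^1$-norm must be shown to equal $2m+1$.

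The main obstacle is exactly this last verification in the odd case: unlike the even case, the $L^1$-norm on the Gaussian lattice is not a clean multiple of a single generator's norm, so one has to rule out short vectors by bounding $|p(m+1)-qm|+|pm+q(m+1)|$ from below by $2m+1$ for every $(p,q)\neq(0,0)$. I would dispatch this by a short case analysis on the signs and relative magnitudes of $p$ and $q$, the extremal cases being $|p|+|q|=1$ and $|p|=|q|=1$, which pin the minimum at $2m+1$. Once the lattice has minimum $L^1$-norm $d+1$ and index $\lceil(d+1)^2/2\rceil$, combining the planar lemma with the layer reduction yields a colouring of $\mathcal{F}^d$ with $(d+1)\lceil (d+1)^2/2\rceil$ colours, as claimed.
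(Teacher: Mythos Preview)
Your proof is correct. Note first that the paper does not prove this theorem in-text: it is quoted from \cite{GT2019}, and only the resulting colouring formula $f_\ell(i,j,k)=(k\bmod(\ell+1))\,m_\ell + (i+\ell j)\bmod m_\ell$ (with the half-integer shift on odd layers) is reproduced here. Your argument matches that construction at the structural level: both factor the colour as (layer index $\bmod\,(d+1)$, in-layer colour), using $d_{\mathcal F}(u,v)\ge |k-k'|$ to separate vertices in distinct layers and the in-layer $L^1$-metric for the remainder.

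The difference lies in the in-layer colouring. The cited construction colours $\mathbb Z^2$ by the single linear functional $(i,j)\mapsto (i+\ell j)\bmod m_\ell$, whose kernel lattice is $\{(a,b):a+\ell b\equiv 0\pmod{m_\ell}\}$; you instead exhibit the lattice directly, taking the rotated square $\langle(m,m),(m,-m)\rangle$ when $d+1=2m$ and the Gaussian-type lattice $\langle(m+1,m),(-m,m+1)\rangle$ when $d+1=2m+1$. These are not in general the same sublattice (e.g.\ for $d=2$ they differ), though both have index $m_\ell$ and minimum $L^1$-norm $d+1$. Your sign/magnitude case split for the odd case is clean and makes the minimum-norm bound transparent; the paper's one-line formula has the compensating virtue of being exactly what is fed into Algorithm~\ref{alg4}. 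Either route proves the theorem.
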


Moreover, the proof of the above theorem is constructive and the corresponding coloring is periodic, i.e., a pattern of fixed size is repeated in the whole grid to assign a color (integer) to each vertex of $\mathcal{F}$. This pattern can be described as follows.


Let $mod(\ell,k)$ be the integer $i$ such that $i\equiv \ell \pmod{k}$ and $0\le i \le k-1$. Let also $$m_{\ell}=\left\lceil \frac{(\ell +1)^2}2 \right\rceil,$$ and $$f_\ell (i,j,k)= mod(k, \ell+1) m_{\ell} + \left\{\begin{array}{ll}
mod(i+\ell j ,m_{\ell }) & \mbox{ if $k$ is even};\\
mod(i-0.5+\ell (j-0.5) ,m_{\ell }) & \mbox{ if $k$ is odd}.\end{array}\right.$$

We denote by $N_\ell(i,j,k)$ the triplet $(i',j',k')$, where $i'=mod(i, m_\ell)$, $j'=mod(j,m_\ell)$ and $k'=mod(k,\ell+1)$.

Then,  giving the color $f_\ell (N_{\ell}(i,j,k))$ to each vertex at position $(i,j,k)$, allows to obtain a coloring of the $\ell$-th power of the  face-centered cubic grid. 

Our global algorithm for computing the local identifiers works by, first, running a leader election algorithm to have a unique particle in state \textbf{L} and the other particles in state \textbf{N} (Algorithm~\ref{alg1} or~\ref{alg2}).
Second, when there is a leader, a spanning tree can be easily computed with a distributed algorithm (see~\cite{GT2018}).
Third,  in our proposed port renumbering algorithm (Algorithm~\ref{alg5}), we change the way the port are numbered in order that every particle has its ports numbered by the same number going in the same cardinal direction in $\mathcal{F}$ (it is needed in the heterogeneous case). 
Finally (see Algorithm \ref{alg4}), we give the identifier $(0,0,0)$ to the particle in state \textbf{L} and, for a particle $p$ receiving the message $(i,j,k)$ as first message, the inductive step consists in using messages to give the identifier $f_\ell (i,j,k)$ to $p$ and to send message $N_\ell (I(i,a),J(j,a),K(k,a))$ to each neighbor of $p$ connected through port $a$ of $p$. At the end, the variable $id$ of each particle contains its local identifier.

\begin{algorithm}
\caption{The port renumbering algorithm for a particle $p$.} 
\label{alg5}
\begin{algorithmic} 
\State \textbf{Case 1}: State \textbf{L}
\State for each port $a$ from $\text{child}(p)$ send a message $m_a$, containing $a$, through port $a$
\State \textbf{Case 2:} State \textbf{N}
\If {$p$ receives the message $m_b$, containing $b$, through port $a$}
    \State change the port number $a$ to $r(b)$ and changes the port numbers of the other ports following the same clockwise order
    \State update both $\text{parent}(p)$ and $\text{child}(p)$
    \State for each port $a$ from $\text{child}(p)$ send a message $m_a$, containing $a$, through port $a$
\EndIf
\end{algorithmic}
\end{algorithm}

Note that Algorithm~\ref{alg5} and Algorithm \ref{alg4} both send messages along the previously computed spanning tree. For this, it is assumed that for each particle $p$, we have two sets of ports $\text{parent}(p)$ and $\text{child}(p)$ which contains the port numbers of the particles in communication with its parent and with its children, respectively, in the spanning tree ($\text{parent}(p)$ is a singleton). 

The steps we use to compute the local identifiers are very similar with the ones of our previous work \cite{GT2018} on the 2-dimensional grid.

\begin{algorithm}
\caption{The $\ell$-local identifier algorithm for a particle $p$.} 
\label{alg4}
\begin{algorithmic} 
\State \textbf{Case 1: } State \textbf{L} \\ Set $i=0$, $j=0$, $k=0$ and $id=0$\\ 
For every port $a$ from $\text{child}(p)$, send the message $(I(i,a),J(j,a),K(k,a))$ to the neighbor of $p$ connected through port $a$

\State \textbf{Case 2: } State \textbf{N} 
\If {$p$ receives the message $(i',j',k')$ through port $a$}
        \State Set $i=i'$, $j=j'$, $k=k'$ and set $id= f_\ell (i,j,k)$
	\State For every port $a$ from $\text{child}(p)$, send the message $ N_\ell (I(i,a),J(j,a),K(k,a))$ to the neighbor of $p$ connected through port $a$
\EndIf
\end{algorithmic}
\end{algorithm}

The idea behind Algorithm~\ref{alg5} is to reproduce, in each particle, the way the ports are numbered in the leader particle. To achieve this goal, each particle $p$ receives a message from its parent containing the port number of the parent connected to $p$ and $p$ renumbers its own ports in order that its port numbers are coherent with the sent number. Figure~\ref{portreconfig}.a and Figure~\ref{portreconfig}.b illustrate the port numbers of particles before and after the execution of Algorithm~\ref{alg5}.
The function $r$ used in Algorithm~\ref{alg5} is defined as follows: $r(i)=(i+2)\pmod{4}$ if $i\in \{0,1,2,3\}$, $r(4)=10$, $r(5)=11$, $r(6)=8$, $r(7)=9$, $r(8)=6$, $r(9)=7$, $r(10)=4$ and $r(11)=5$.

Figure~\ref{portreconfig}.c illustrates the obtained 2-local identifiers after the execution of Algorithm~\ref{alg4}. In Figure~\ref{portreconfig}.a ,~\ref{portreconfig}.b and~\ref{portreconfig}.c , the edges of the spanning tree which are between vertices of different layers have been omitted in order to increase readability.

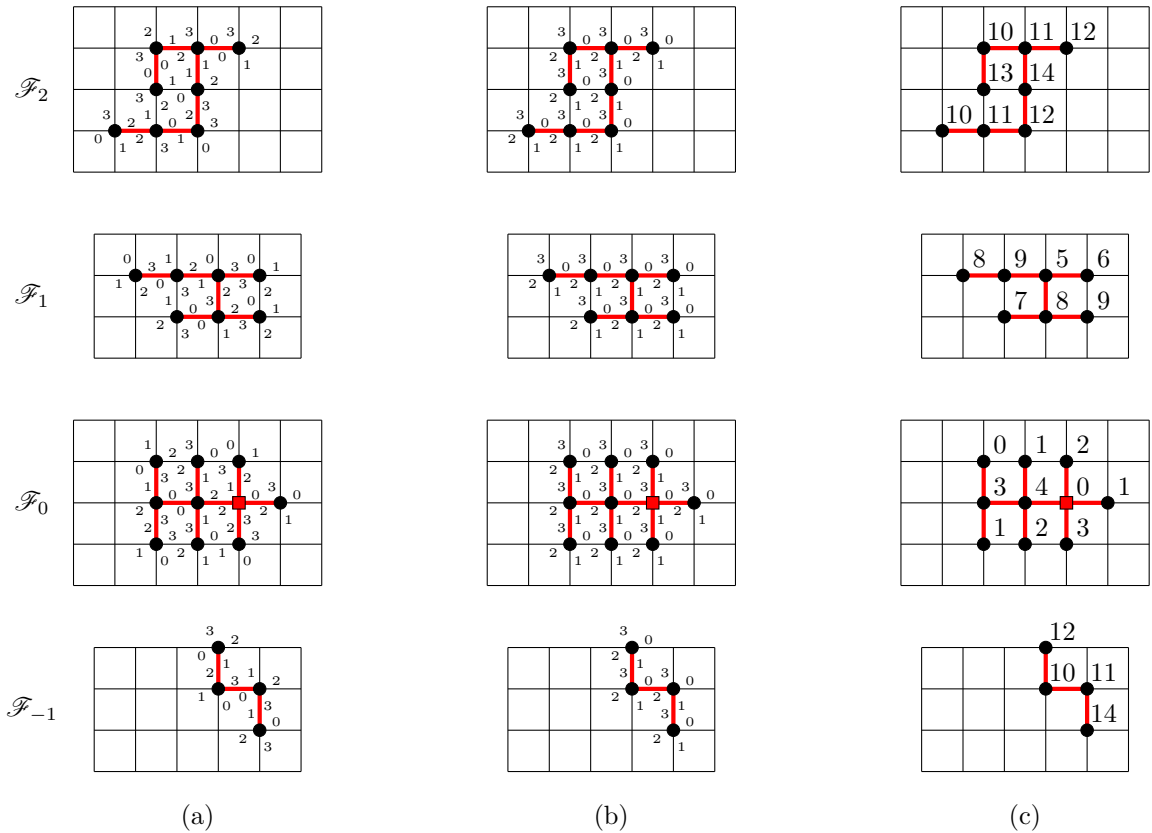
\begin{figure}[t]
\begin{center}
\begin{tikzpicture}[scale=1.1]

\draw (-2,1+2.5) -- (1,1+2.5);
\draw (-2,0+2.5) -- (1,0+2.5);
\draw (-2,0.5+2.5) -- (1,0.5+2.5);
\draw (-2,-0.5+2.5) -- (1,-0.5+2.5);
\draw (-2,-1+2.5) -- (1,-1+2.5);
\draw (0,-1+2.5) -- (0,1+2.5);
\draw (0.5,-1+2.5) -- (0.5,1+2.5);
\draw (-0.5,-1+2.5) -- (-0.5,1+2.5);
\draw (1,-1+2.5) -- (1,1+2.5);
\draw (-1,-1+2.5) -- (-1,1+2.5);
\draw (-1.5,-1+2.5) -- (-1.5,1+2.5);
\draw (-2,-1+2.5) -- (-2,1+2.5);

\draw[ultra thick, color=red] (-1,2.5) -- (0.5,2.5);
\draw[ultra thick, color=red] (-1,2) -- (-1,3);
\draw[ultra thick, color=red] (-0.5,2) -- (-0.5,3);
\draw[ultra thick, color=red] (0,2) -- (0,3);

\node at (-2.5,0){$\mathcal{F}_{-1}$};
\node at (-2.5,2.5){$\mathcal{F}_{0}$};
\node at (-2.5,5){$\mathcal{F}_{1}$};
\node at (-2.5,7.5){$\mathcal{F}_{2}$};

\node at (-0.5,-1.3){(a)};
\node at (4.5,-1.3){(b)};
\node at (9.5,-1.3){(c)};

\node at (0,2.5)[regular polygon, regular polygon sides=4,draw=black,fill=red,scale=0.5]{};
\node at (-0.5,2.5)[circle,draw=black,fill=black,scale=0.5]{};
\node at (0.5,2.5)[circle,draw=black,fill=black,scale=0.5]{};
\node at (0,2)[circle,draw=black,fill=black,scale=0.5]{};
\node at (0,3)[circle,draw=black,fill=black,scale=0.5]{};
\node at (-0.5,3)[circle,draw=black,fill=black,scale=0.5]{};
\node at (-1,3)[circle,draw=black,fill=black,scale=0.5]{};
\node at (-1,2.5)[circle,draw=black,fill=black,scale=0.5]{};
\node at (-1,2)[circle,draw=black,fill=black,scale=0.5]{};
\node at (-0.5,2)[circle,draw=black,fill=black,scale=0.5]{};

\node at (0.2-0.5,3.1-0.5) []{\tiny{2}};
\node at (0.7-0.5,2.6-0.5) []{\tiny{3}};
\node at (0.7-0.5,3.6-0.5) []{\tiny{1}};
\node at (0.7-0.5,3.1-0.5) []{\tiny{0}};
\node at (0.2+0.5,3.1-0.5) []{\tiny{0}};
\node at (0.7-1.5,2.6-0.5) []{\tiny{3}};
\node at (0.7-1.5,3.6-0.5) []{\tiny{2}};
\node at (0.7-1.5,3.1-0.5) []{\tiny{0}};
\node at (0.2-0.5,3.1) []{\tiny{0}};
\node at (0.2-0.5,3.1-1) []{\tiny{0}};

\node at (0.1-0.5,2.8-0.5) []{\tiny{1}};
\node at (0.6-0.5,2.3-0.5) []{\tiny{0}};
\node at (0.6-0.5,3.3-0.5) []{\tiny{2}};
\node at (0.6-0.5,2.8-0.5) []{\tiny{3}};
\node at (0.1+0.5,2.8-0.5) []{\tiny{1}};
\node at (0.6-1.5,2.3-0.5) []{\tiny{0}};
\node at (0.6-1.5,3.3-0.5) []{\tiny{3}};
\node at (0.6-1.5,2.8-0.5) []{\tiny{3}};
\node at (0.1-0.5,2.8-1) []{\tiny{1}};
\node at (0.1-0.5,2.8) []{\tiny{1}};

\node at (-0.2-0.5,2.9-0.5) []{\tiny{0}};
\node at (0.3-0.5,2.4-0.5) []{\tiny{1}};
\node at (0.3-0.5,3.4-0.5) []{\tiny{3}};
\node at (0.3-0.5,2.9-0.5) []{\tiny{2}};
\node at (-0.2+0.5,2.9-0.5) []{\tiny{2}};
\node at (0.3-1.5,2.4-0.5) []{\tiny{1}};
\node at (0.3-1.5,3.4-0.5) []{\tiny{0}};
\node at (0.3-1.5,2.9-0.5) []{\tiny{2}};
\node at (-0.2-0.5,2.9-0) []{\tiny{2}};
\node at (-0.2-0.5,2.9-1) []{\tiny{2}};

\node at (-0.1-0.5,3.2-0.5) []{\tiny{3}};
\node at (0.4-0.5,2.7-0.5) []{\tiny{2}};
\node at (0.4-0.5,3.7-0.5) []{\tiny{0}};
\node at (0.4-0.5,3.2-0.5) []{\tiny{1}};
\node at (-0.1+0.5,3.2-0.5) []{\tiny{3}};
\node at (0.4-1.5,2.7-0.5) []{\tiny{2}};
\node at (0.4-1.5,3.7-0.5) []{\tiny{1}};
\node at (0.4-1.5,3.2-0.5) []{\tiny{1}};
\node at (-0.1-0.5,3.2) []{\tiny{3}};
\node at (-0.1-0.5,3.2-1) []{\tiny{3}};

\draw (-1.75,0.25) -- (0.75,0.25);
\draw (-1.75,0.75) -- (0.75,0.75);
\draw (-1.75,-0.25) -- (0.75,-0.25);
\draw (-1.75,-0.75) -- (0.75,-0.75);
\draw (0.25,-0.75) -- (0.25,0.75);
\draw (0.75,-0.75) -- (0.75,0.75);
\draw (-0.25,-0.75) -- (-0.25,0.75);
\draw (-0.75,-0.75) -- (-0.75,0.75);
\draw (-1.25,-0.75) -- (-1.25,0.75);
\draw (-1.75,-0.75) -- (-1.75,0.75);
\draw[ultra thick, color=red] (-0.25,0.75) -- (-0.25,0.25);
\draw[ultra thick, color=red] (0.25,0.25) -- (-0.25,0.25);
\draw[ultra thick, color=red] (0.25,0.25) -- (0.25,-0.25);
\node at (0.25,0.25)[circle,draw=black,fill=black,scale=0.5]{};
\node at (-0.25,0.25)[circle,draw=black,fill=black,scale=0.5]{};
\node at (-0.25,0.75)[circle,draw=black,fill=black,scale=0.5]{};
\node at (0.25,-0.25)[circle,draw=black,fill=black,scale=0.5]{};

\node at (0.45,0.35) []{\tiny{2}};
\node at (-0.05,0.35) []{\tiny{3}};
\node at (-0.05,0.85) []{\tiny{2}};
\node at (0.45,-0.15) []{\tiny{0}};

\node at (0.35,0.05) []{\tiny{3}};
\node at (-0.15,0.05) []{\tiny{0}};
\node at (-0.15,0.55) []{\tiny{1}};
\node at (0.35,-0.45) []{\tiny{3}};

\node at (0.05,0.15) []{\tiny{0}};
\node at (-0.45,0.15) []{\tiny{1}};
\node at (-0.45,0.65) []{\tiny{0}};
\node at (0.05,-0.35) []{\tiny{2}};

\node at (0.15,0.45) []{\tiny{1}};
\node at (-0.35,0.45) []{\tiny{2}};
\node at (-0.35,0.95) []{\tiny{3}};
\node at (0.15,-0.05) []{\tiny{1}};

\draw (-1.75,0.25+5) -- (0.75,0.25+5);
\draw (-1.75,0.75+5) -- (0.75,0.75+5);
\draw (-1.75,-0.25+5) -- (0.75,-0.25+5);
\draw (-1.75,-0.75+5) -- (0.75,-0.75+5);
\draw (0.25,-0.75+5) -- (0.25,0.75+5);
\draw (0.75,-0.75+5) -- (0.75,0.75+5);
\draw (-0.25,-0.75+5) -- (-0.25,0.75+5);
\draw (-0.75,-0.75+5) -- (-0.75,0.75+5);
\draw (-1.25,-0.75+5) -- (-1.25,0.75+5);
\draw (-1.75,-0.75+5) -- (-1.75,0.75+5);
\draw[ultra thick, color=red] (-1.25,5.25) -- (0.25,5.25);
\draw[ultra thick, color=red] (-0.75,4.75) -- (0.25,4.75);
\draw[ultra thick, color=red] (-0.25,4.75) -- (-0.25,5.25);

\node at (-0.25,0.25+5)[circle,draw=black,fill=black,scale=0.5]{};
\node at (0.25,-0.25+5)[circle,draw=black,fill=black,scale=0.5]{};
\node at (-0.25,-0.25+5)[circle,draw=black,fill=black,scale=0.5]{};
\node at (0.25,0.25+5)[circle,draw=black,fill=black,scale=0.5]{};
\node at (-0.75,0.25+5)[circle,draw=black,fill=black,scale=0.5]{};
\node at (-1.25,0.25+5)[circle,draw=black,fill=black,scale=0.5]{};
\node at (-0.75,-0.25+5)[circle,draw=black,fill=black,scale=0.5]{};
\node at (0.45-1,0.35+5) []{\tiny{2}};
\node at (-0.05-1,0.35+5) []{\tiny{3}};
\node at (0.45-1,-0.15+5) []{\tiny{0}};
\node at (0.45,0.35+5) []{\tiny{1}};
\node at (-0.05,0.35+5) []{\tiny{3}};
\node at (0.45,-0.15+5) []{\tiny{1}};
\node at (0.45-0.5,0.35+4.5) []{\tiny{2}};

\node at (0.35-1,0.05+5) []{\tiny{3}};
\node at (-0.15-1,0.05+5) []{\tiny{2}};
\node at (0.35-1,-0.45+5) []{\tiny{3}};
\node at (0.35,0.05+5) []{\tiny{2}};
\node at (-0.15,0.05+5) []{\tiny{2}};
\node at (0.35,-0.45+5) []{\tiny{2}};
\node at (0.35-0.5,0.05+4.5) []{\tiny{1}};

\node at (0.05-1,0.15+5) []{\tiny{0}};
\node at (-0.45-1,0.15+5) []{\tiny{1}};
\node at (0.05-1,-0.35+5) []{\tiny{2}};
\node at (0.05,0.15+5) []{\tiny{3}};
\node at (-0.45,0.15+5) []{\tiny{1}};
\node at (0.05,-0.35+5) []{\tiny{3}};
\node at (0.05-0.5,0.15+4.5) []{\tiny{0}};

\node at (0.15-1,0.45+5) []{\tiny{1}};
\node at (-0.35-1,0.45+5) []{\tiny{0}};
\node at (0.15-1,-0.05+5) []{\tiny{1}};
\node at (0.15,0.45+5) []{\tiny{0}};
\node at (-0.35,0.45+5) []{\tiny{0}};
\node at (0.15,-0.05+5) []{\tiny{0}};
\node at (0.15-0.5,0.45+4.5) []{\tiny{3}};
\draw (-2,1+7.5) -- (1,1+7.5);
\draw (-2,0+7.5) -- (1,0+7.5);
\draw (-2,0.5+7.5) -- (1,0.5+7.5);
\draw (-2,-0.5+7.5) -- (1,-0.5+7.5);
\draw (-2,-1+7.5) -- (1,-1+7.5);
\draw (0,-1+7.5) -- (0,1+7.5);
\draw (0.5,-1+7.5) -- (0.5,1+7.5);
\draw (-0.5,-1+7.5) -- (-0.5,1+7.5);
\draw (1,-1+7.5) -- (1,1+7.5);
\draw (-1,-1+7.5) -- (-1,1+7.5);
\draw (-1.5,-1+7.5) -- (-1.5,1+7.5);
\draw (-2,-1+7.5) -- (-2,1+7.5);
\draw[ultra thick, color=red] (-1.5,2+5) -- (-0.5,2+5);
\draw[ultra thick, color=red] (-0.5,3+5) -- (-0.5,2+5);
\draw[ultra thick, color=red] (-1,3+5) -- (0,3+5);
\draw[ultra thick, color=red] (-1,3+5) -- (-1,2.5+5);

\node at (-0.5,2.5+5)[circle,draw=black,fill=black,scale=0.5]{};
\node at (-1.5,2+5)[circle,draw=black,fill=black,scale=0.5]{};
\node at (0,3+5)[circle,draw=black,fill=black,scale=0.5]{};
\node at (-0.5,3+5)[circle,draw=black,fill=black,scale=0.5]{};
\node at (-1,3+5)[circle,draw=black,fill=black,scale=0.5]{};
\node at (-1,2.5+5)[circle,draw=black,fill=black,scale=0.5]{};
\node at (-1,2+5)[circle,draw=black,fill=black,scale=0.5]{};
\node at (-0.5,2+5)[circle,draw=black,fill=black,scale=0.5]{};

\node at (0.2-0.5,3.1+4.5) []{\tiny{2}};
\node at (0.2,3.1+5) []{\tiny{2}};
\node at (0.2-1.5,3.1+4) []{\tiny{2}};
\node at (0.2-0.5,3.1+4) []{\tiny{3}};
\node at (0.2-0.5,3.1+5) []{\tiny{0}};
\node at (0.2-1,3.1+4) []{\tiny{0}};
\node at (0.2-1,3.1+4.5) []{\tiny{1}};
\node at (0.2-1,3.1+5) []{\tiny{1}};

\node at (0.1-0.5,2.8+4.5) []{\tiny{3}};
\node at (0.1,2.8+5) []{\tiny{1}};
\node at (0.1-1.5,2.8+4) []{\tiny{1}};
\node at (0.1-0.5,2.8+4) []{\tiny{0}};
\node at (0.1-0.5,2.8+5) []{\tiny{1}};
\node at (0.1-1,2.8+4) []{\tiny{3}};
\node at (0.1-1,2.8+4.5) []{\tiny{2}};
\node at (0.1-1,2.8+5) []{\tiny{0}};

\node at (-0.2-0.5,2.9+4.5) []{\tiny{0}};
\node at (-0.2,2.9+5) []{\tiny{0}};
\node at (-0.2-1.5,2.9+4) []{\tiny{0}};
\node at (-0.2-0.5,2.9+4) []{\tiny{1}};
\node at (-0.2-0.5,2.9+5) []{\tiny{2}};
\node at (-0.2-1,2.9+4) []{\tiny{2}};
\node at (-0.2-1,2.9+4.5) []{\tiny{3}};
\node at (-0.2-1,2.9+5) []{\tiny{3}};

\node at (-0.1-0.5,3.2+4.5) []{\tiny{1}};
\node at (-0.1,3.2+5) []{\tiny{3}};
\node at (-0.1-1.5,3.2+4) []{\tiny{3}};
\node at (-0.1-0.5,3.2+4) []{\tiny{2}};
\node at (-0.1-0.5,3.2+5) []{\tiny{3}};
\node at (-0.1-1,3.2+4) []{\tiny{1}};
\node at (-0.1-1,3.2+4.5) []{\tiny{0}};
\node at (-0.1-1,3.2+5) []{\tiny{2}};

\draw (-2+5,1+2.5) -- (1+5,1+2.5);
\draw (-2+5,0+2.5) -- (1+5,0+2.5);
\draw (-2+5,0.5+2.5) -- (1+5,0.5+2.5);
\draw (-2+5,-0.5+2.5) -- (1+5,-0.5+2.5);
\draw (-2+5,-1+2.5) -- (1+5,-1+2.5);
\draw (0+5,-1+2.5) -- (0+5,1+2.5);
\draw (0.5+5,-1+2.5) -- (0.5+5,1+2.5);
\draw (-0.5+5,-1+2.5) -- (-0.5+5,1+2.5);
\draw (1+5,-1+2.5) -- (1+5,1+2.5);
\draw (-1+5,-1+2.5) -- (-1+5,1+2.5);
\draw (-1.5+5,-1+2.5) -- (-1.5+5,1+2.5);
\draw (-2+5,-1+2.5) -- (-2+5,1+2.5);

\draw[ultra thick, color=red] (-1+5,2.5) -- (0.5+5,2.5);
\draw[ultra thick, color=red] (-1+5,2) -- (-1+5,3);
\draw[ultra thick, color=red] (-0.5+5,2) -- (-0.5+5,3);
\draw[ultra thick, color=red] (0+5,2) -- (0+5,3);

\node at (0+5,2.5)[regular polygon, regular polygon sides=4,draw=black,fill=red,scale=0.5]{};
\node at (-0.5+5,2.5)[circle,draw=black,fill=black,scale=0.5]{};
\node at (0.5+5,2.5)[circle,draw=black,fill=black,scale=0.5]{};
\node at (0+5,2)[circle,draw=black,fill=black,scale=0.5]{};
\node at (0+5,3)[circle,draw=black,fill=black,scale=0.5]{};
\node at (-0.5+5,3)[circle,draw=black,fill=black,scale=0.5]{};
\node at (-1+5,3)[circle,draw=black,fill=black,scale=0.5]{};
\node at (-1+5,2.5)[circle,draw=black,fill=black,scale=0.5]{};
\node at (-1+5,2)[circle,draw=black,fill=black,scale=0.5]{};
\node at (-0.5+5,2)[circle,draw=black,fill=black,scale=0.5]{};

\node at (0.2-0.5+5,3.1-0.5) []{\tiny{0}};
\node at (0.7-0.5+5,2.6-0.5) []{\tiny{0}};
\node at (0.7-0.5+5,3.6-0.5) []{\tiny{0}};
\node at (0.7-0.5+5,3.1-0.5) []{\tiny{0}};
\node at (0.2+0.5+5,3.1-0.5) []{\tiny{0}};
\node at (0.7-1.5+5,2.6-0.5) []{\tiny{0}};
\node at (0.7-1.5+5,3.6-0.5) []{\tiny{0}};
\node at (0.7-1.5+5,3.1-0.5) []{\tiny{0}};
\node at (0.2-0.5+5,3.1) []{\tiny{0}};
\node at (0.2-0.5+5,3.1-1) []{\tiny{0}};

\node at (0.1-0.5+5,2.8-0.5) []{\tiny{1}};
\node at (0.6-0.5+5,2.3-0.5) []{\tiny{1}};
\node at (0.6-0.5+5,3.3-0.5) []{\tiny{1}};
\node at (0.6-0.5+5,2.8-0.5) []{\tiny{1}};
\node at (0.1+0.5+5,2.8-0.5) []{\tiny{1}};
\node at (0.6-1.5+5,2.3-0.5) []{\tiny{1}};
\node at (0.6-1.5+5,3.3-0.5) []{\tiny{1}};
\node at (0.6-1.5+5,2.8-0.5) []{\tiny{1}};
\node at (0.1-0.5+5,2.8-1) []{\tiny{1}};
\node at (0.1-0.5+5,2.8) []{\tiny{1}};

\node at (-0.2-0.5+5,2.9-0.5) []{\tiny{2}};
\node at (0.3-0.5+5,2.4-0.5) []{\tiny{2}};
\node at (0.3-0.5+5,3.4-0.5) []{\tiny{2}};
\node at (0.3-0.5+5,2.9-0.5) []{\tiny{2}};
\node at (-0.2+0.5+5,2.9-0.5) []{\tiny{2}};
\node at (0.3-1.5+5,2.4-0.5) []{\tiny{2}};
\node at (0.3-1.5+5,3.4-0.5) []{\tiny{2}};
\node at (0.3-1.5+5,2.9-0.5) []{\tiny{2}};
\node at (-0.2-0.5+5,2.9-0) []{\tiny{2}};
\node at (-0.2-0.5+5,2.9-1) []{\tiny{2}};

\node at (-0.1-0.5+5,3.2-0.5) []{\tiny{3}};
\node at (0.4-0.5+5,2.7-0.5) []{\tiny{3}};
\node at (0.4-0.5+5,3.7-0.5) []{\tiny{3}};
\node at (0.4-0.5+5,3.2-0.5) []{\tiny{3}};
\node at (-0.1+0.5+5,3.2-0.5) []{\tiny{3}};
\node at (0.4-1.5+5,2.7-0.5) []{\tiny{3}};
\node at (0.4-1.5+5,3.7-0.5) []{\tiny{3}};
\node at (0.4-1.5+5,3.2-0.5) []{\tiny{3}};
\node at (-0.1-0.5+5,3.2) []{\tiny{3}};
\node at (-0.1-0.5+5,3.2-1) []{\tiny{3}};

\draw (-1.75+5,0.25) -- (0.75+5,0.25);
\draw (-1.75+5,0.75) -- (0.75+5,0.75);
\draw (-1.75+5,-0.25) -- (0.75+5,-0.25);
\draw (-1.75+5,-0.75) -- (0.75+5,-0.75);
\draw (0.25+5,-0.75) -- (0.25+5,0.75);
\draw (0.75+5,-0.75) -- (0.75+5,0.75);
\draw (-0.25+5,-0.75) -- (-0.25+5,0.75);
\draw (-0.75+5,-0.75) -- (-0.75+5,0.75);
\draw (-1.25+5,-0.75) -- (-1.25+5,0.75);
\draw (-1.75+5,-0.75) -- (-1.75+5,0.75);
\draw[ultra thick, color=red] (-0.25+5,0.75) -- (-0.25+5,0.25);
\draw[ultra thick, color=red] (0.25+5,0.25) -- (-0.25+5,0.25);
\draw[ultra thick, color=red] (0.25+5,0.25) -- (0.25+5,-0.25);
\node at (0.25+5,0.25)[circle,draw=black,fill=black,scale=0.5]{};
\node at (-0.25+5,0.25)[circle,draw=black,fill=black,scale=0.5]{};
\node at (-0.25+5,0.75)[circle,draw=black,fill=black,scale=0.5]{};
\node at (0.25+5,-0.25)[circle,draw=black,fill=black,scale=0.5]{};

\node at (0.45+5,0.35) []{\tiny{0}};
\node at (-0.05+5,0.35) []{\tiny{0}};
\node at (-0.05+5,0.85) []{\tiny{0}};
\node at (0.45+5,-0.15) []{\tiny{0}};

\node at (0.35+5,0.05) []{\tiny{1}};
\node at (-0.15+5,0.05) []{\tiny{1}};
\node at (-0.15+5,0.55) []{\tiny{1}};
\node at (0.35+5,-0.45) []{\tiny{1}};

\node at (0.05+5,0.15) []{\tiny{2}};
\node at (-0.45+5,0.15) []{\tiny{2}};
\node at (-0.45+5,0.65) []{\tiny{2}};
\node at (0.05+5,-0.35) []{\tiny{2}};

\node at (0.15+5,0.45) []{\tiny{3}};
\node at (-0.35+5,0.45) []{\tiny{3}};
\node at (-0.35+5,0.95) []{\tiny{3}};
\node at (0.15+5,-0.05) []{\tiny{3}};

\draw (-1.75+5,0.25+5) -- (0.75+5,0.25+5);
\draw (-1.75+5,0.75+5) -- (0.75+5,0.75+5);
\draw (-1.75+5,-0.25+5) -- (0.75+5,-0.25+5);
\draw (-1.75+5,-0.75+5) -- (0.75+5,-0.75+5);
\draw (0.25+5,-0.75+5) -- (0.25+5,0.75+5);
\draw (0.75+5,-0.75+5) -- (0.75+5,0.75+5);
\draw (-0.25+5,-0.75+5) -- (-0.25+5,0.75+5);
\draw (-0.75+5,-0.75+5) -- (-0.75+5,0.75+5);
\draw (-1.25+5,-0.75+5) -- (-1.25+5,0.75+5);
\draw (-1.75+5,-0.75+5) -- (-1.75+5,0.75+5);
\draw[ultra thick, color=red] (-1.25+5,5.25) -- (0.25+5,5.25);
\draw[ultra thick, color=red] (-0.75+5,4.75) -- (0.25+5,4.75);
\draw[ultra thick, color=red] (-0.25+5,4.75) -- (-0.25+5,5.25);

\node at (-0.25+5,0.25+5)[circle,draw=black,fill=black,scale=0.5]{};
\node at (0.25+5,-0.25+5)[circle,draw=black,fill=black,scale=0.5]{};
\node at (-0.25+5,-0.25+5)[circle,draw=black,fill=black,scale=0.5]{};
\node at (0.25+5,0.25+5)[circle,draw=black,fill=black,scale=0.5]{};
\node at (-0.75+5,0.25+5)[circle,draw=black,fill=black,scale=0.5]{};
\node at (-1.25+5,0.25+5)[circle,draw=black,fill=black,scale=0.5]{};
\node at (-0.75+5,-0.25+5)[circle,draw=black,fill=black,scale=0.5]{};
\node at (0.45-1+5,0.35+5) []{\tiny{0}};
\node at (-0.05-1+5,0.35+5) []{\tiny{0}};
\node at (0.45-1+5,-0.15+5) []{\tiny{0}};
\node at (0.45+5,0.35+5) []{\tiny{0}};
\node at (-0.05+5,0.35+5) []{\tiny{0}};
\node at (0.45+5,-0.15+5) []{\tiny{0}};
\node at (0.45-0.5+5,0.35+4.5) []{\tiny{0}};

\node at (0.35-1+5,0.05+5) []{\tiny{1}};
\node at (-0.15-1+5,0.05+5) []{\tiny{1}};
\node at (0.35-1+5,-0.45+5) []{\tiny{1}};
\node at (0.35+5,0.05+5) []{\tiny{1}};
\node at (-0.15+5,0.05+5) []{\tiny{1}};
\node at (0.35+5,-0.45+5) []{\tiny{1}};
\node at (0.35-0.5+5,0.05+4.5) []{\tiny{1}};

\node at (0.05-1+5,0.15+5) []{\tiny{2}};
\node at (-0.45-1+5,0.15+5) []{\tiny{2}};
\node at (0.05-1+5,-0.35+5) []{\tiny{2}};
\node at (0.05+5,0.15+5) []{\tiny{2}};
\node at (-0.45+5,0.15+5) []{\tiny{2}};
\node at (0.05+5,-0.35+5) []{\tiny{2}};
\node at (0.05-0.5+5,0.15+4.5) []{\tiny{2}};

\node at (0.15-1+5,0.45+5) []{\tiny{3}};
\node at (-0.35-1+5,0.45+5) []{\tiny{3}};
\node at (0.15-1+5,-0.05+5) []{\tiny{3}};
\node at (0.15+5,0.45+5) []{\tiny{3}};
\node at (-0.35+5,0.45+5) []{\tiny{3}};
\node at (0.15+5,-0.05+5) []{\tiny{3}};
\node at (0.15-0.5+5,0.45+4.5) []{\tiny{3}};
\draw (-2+5,1+7.5) -- (1+5,1+7.5);
\draw (-2+5,0+7.5) -- (1+5,0+7.5);
\draw (-2+5,0.5+7.5) -- (1+5,0.5+7.5);
\draw (-2+5,-0.5+7.5) -- (1+5,-0.5+7.5);
\draw (-2+5,-1+7.5) -- (1+5,-1+7.5);
\draw (0+5,-1+7.5) -- (0+5,1+7.5);
\draw (0.5+5,-1+7.5) -- (0.5+5,1+7.5);
\draw (-0.5+5,-1+7.5) -- (-0.5+5,1+7.5);
\draw (1+5,-1+7.5) -- (1+5,1+7.5);
\draw (-1+5,-1+7.5) -- (-1+5,1+7.5);
\draw (-1.5+5,-1+7.5) -- (-1.5+5,1+7.5);
\draw (-2+5,-1+7.5) -- (-2+5,1+7.5);
\draw[ultra thick, color=red] (-1.5+5,2+5) -- (-0.5+5,2+5);
\draw[ultra thick, color=red] (-0.5+5,3+5) -- (-0.5+5,2+5);
\draw[ultra thick, color=red] (-1+5,3+5) -- (0+5,3+5);
\draw[ultra thick, color=red] (-1+5,3+5) -- (-1+5,2.5+5);

\node at (-0.5+5,2.5+5)[circle,draw=black,fill=black,scale=0.5]{};
\node at (-1.5+5,2+5)[circle,draw=black,fill=black,scale=0.5]{};
\node at (0+5,3+5)[circle,draw=black,fill=black,scale=0.5]{};
\node at (-0.5+5,3+5)[circle,draw=black,fill=black,scale=0.5]{};
\node at (-1+5,3+5)[circle,draw=black,fill=black,scale=0.5]{};
\node at (-1+5,2.5+5)[circle,draw=black,fill=black,scale=0.5]{};
\node at (-1+5,2+5)[circle,draw=black,fill=black,scale=0.5]{};
\node at (-0.5+5,2+5)[circle,draw=black,fill=black,scale=0.5]{};

\node at (0.2-0.5+5,3.1+4.5) []{\tiny{0}};
\node at (0.2+5,3.1+5) []{\tiny{0}};
\node at (0.2-1.5+5,3.1+4) []{\tiny{0}};
\node at (0.2-0.5+5,3.1+4) []{\tiny{0}};
\node at (0.2-0.5+5,3.1+5) []{\tiny{0}};
\node at (0.2-1+5,3.1+4) []{\tiny{0}};
\node at (0.2-1+5,3.1+4.5) []{\tiny{0}};
\node at (0.2-1+5,3.1+5) []{\tiny{0}};

\node at (0.1-0.5+5,2.8+4.5) []{\tiny{1}};
\node at (0.1+5,2.8+5) []{\tiny{1}};
\node at (0.1-1.5+5,2.8+4) []{\tiny{1}};
\node at (0.1-0.5+5,2.8+4) []{\tiny{1}};
\node at (0.1-0.5+5,2.8+5) []{\tiny{1}};
\node at (0.1-1+5,2.8+4) []{\tiny{1}};
\node at (0.1-1+5,2.8+4.5) []{\tiny{1}};
\node at (0.1-1+5,2.8+5) []{\tiny{1}};

\node at (-0.2-0.5+5,2.9+4.5) []{\tiny{2}};
\node at (-0.2+5,2.9+5) []{\tiny{2}};
\node at (-0.2-1.5+5,2.9+4) []{\tiny{2}};
\node at (-0.2-0.5+5,2.9+4) []{\tiny{2}};
\node at (-0.2-0.5+5,2.9+5) []{\tiny{2}};
\node at (-0.2-1+5,2.9+4) []{\tiny{2}};
\node at (-0.2-1+5,2.9+4.5) []{\tiny{2}};
\node at (-0.2-1+5,2.9+5) []{\tiny{2}};

\node at (-0.1-0.5+5,3.2+4.5) []{\tiny{3}};
\node at (-0.1+5,3.2+5) []{\tiny{3}};
\node at (-0.1-1.5+5,3.2+4) []{\tiny{3}};
\node at (-0.1-0.5+5,3.2+4) []{\tiny{3}};
\node at (-0.1-0.5+5,3.2+5) []{\tiny{3}};
\node at (-0.1-1+5,3.2+4) []{\tiny{3}};
\node at (-0.1-1+5,3.2+4.5) []{\tiny{3}};
\node at (-0.1-1+5,3.2+5) []{\tiny{3}};

\draw (-2+10,1+2.5) -- (1+10,1+2.5);
\draw (-2+10,0+2.5) -- (1+10,0+2.5);
\draw (-2+10,0.5+2.5) -- (1+10,0.5+2.5);
\draw (-2+10,-0.5+2.5) -- (1+10,-0.5+2.5);
\draw (-2+10,-1+2.5) -- (1+10,-1+2.5);
\draw (0+10,-1+2.5) -- (0+10,1+2.5);
\draw (0.5+10,-1+2.5) -- (0.5+10,1+2.5);
\draw (-0.5+10,-1+2.5) -- (-0.5+10,1+2.5);
\draw (1+10,-1+2.5) -- (1+10,1+2.5);
\draw (-1+10,-1+2.5) -- (-1+10,1+2.5);
\draw (-1.5+10,-1+2.5) -- (-1.5+10,1+2.5);
\draw (-2+10,-1+2.5) -- (-2+10,1+2.5);

\draw[ultra thick, color=red] (-1+10,2.5) -- (0.5+10,2.5);
\draw[ultra thick, color=red] (-1+10,2) -- (-1+10,3);
\draw[ultra thick, color=red] (-0.5+10,2) -- (-0.5+10,3);
\draw[ultra thick, color=red] (0+10,2) -- (0+10,3);

\node at (0+10,2.5)[regular polygon, regular polygon sides=4,draw=black,fill=red,scale=0.5]{};
\node at (-0.5+10,2.5)[circle,draw=black,fill=black,scale=0.5]{};
\node at (0.5+10,2.5)[circle,draw=black,fill=black,scale=0.5]{};
\node at (0+10,2)[circle,draw=black,fill=black,scale=0.5]{};
\node at (0+10,3)[circle,draw=black,fill=black,scale=0.5]{};
\node at (-0.5+10,3)[circle,draw=black,fill=black,scale=0.5]{};
\node at (-1+10,3)[circle,draw=black,fill=black,scale=0.5]{};
\node at (-1+10,2.5)[circle,draw=black,fill=black,scale=0.5]{};
\node at (-1+10,2)[circle,draw=black,fill=black,scale=0.5]{};
\node at (-0.5+10,2)[circle,draw=black,fill=black,scale=0.5]{};
\node at (10.7,2.7) []{1};
\node at (10.2,2.7) []{0};
\node at (9.7,2.7) []{4};
\node at (9.2,2.7) []{3};
\node at (10.2,3.2) []{2};
\node at (9.7,3.2) []{1};
\node at (9.2,3.2) []{0};
\node at (10.2,2.2) []{3};
\node at (9.7,2.2) []{2};
\node at (9.2,2.2) []{1};

\draw (-1.75+10,0.25) -- (0.75+10,0.25);
\draw (-1.75+10,0.75) -- (0.75+10,0.75);
\draw (-1.75+10,-0.25) -- (0.75+10,-0.25);
\draw (-1.75+10,-0.75) -- (0.75+10,-0.75);
\draw (0.25+10,-0.75) -- (0.25+10,0.75);
\draw (0.75+10,-0.75) -- (0.75+10,0.75);
\draw (-0.25+10,-0.75) -- (-0.25+10,0.75);
\draw (-0.75+10,-0.75) -- (-0.75+10,0.75);
\draw (-1.25+10,-0.75) -- (-1.25+10,0.75);
\draw (-1.75+10,-0.75) -- (-1.75+10,0.75);
\draw[ultra thick, color=red] (-0.25+10,0.75) -- (-0.25+10,0.25);
\draw[ultra thick, color=red] (0.25+10,0.25) -- (-0.25+10,0.25);
\draw[ultra thick, color=red] (0.25+10,0.25) -- (0.25+10,-0.25);
\node at (0.25+10,0.25)[circle,draw=black,fill=black,scale=0.5]{};
\node at (-0.25+10,0.25)[circle,draw=black,fill=black,scale=0.5]{};
\node at (-0.25+10,0.75)[circle,draw=black,fill=black,scale=0.5]{};
\node at (0.25+10,-0.25)[circle,draw=black,fill=black,scale=0.5]{};

\node at (10.45,0.25+0.2) []{11};
\node at (10.45,-0.25+0.2) []{14};
\node at (9.95,0.25+0.2) []{10};
\node at (9.95,0.75+0.2) []{12};

\draw (-1.75+10,0.25+5) -- (0.75+10,0.25+5);
\draw (-1.75+10,0.75+5) -- (0.75+10,0.75+5);
\draw (-1.75+10,-0.25+5) -- (0.75+10,-0.25+5);
\draw (-1.75+10,-0.75+5) -- (0.75+10,-0.75+5);
\draw (0.25+10,-0.75+5) -- (0.25+10,0.75+5);
\draw (0.75+10,-0.75+5) -- (0.75+10,0.75+5);
\draw (-0.25+10,-0.75+5) -- (-0.25+10,0.75+5);
\draw (-0.75+10,-0.75+5) -- (-0.75+10,0.75+5);
\draw (-1.25+10,-0.75+5) -- (-1.25+10,0.75+5);
\draw (-1.75+10,-0.75+5) -- (-1.75+10,0.75+5);
\draw[ultra thick, color=red] (-1.25+10,5.25) -- (0.25+10,5.25);
\draw[ultra thick, color=red] (-0.75+10,4.75) -- (0.25+10,4.75);
\draw[ultra thick, color=red] (-0.25+10,4.75) -- (-0.25+10,5.25);

\node at (-0.25+10,0.25+5)[circle,draw=black,fill=black,scale=0.5]{};
\node at (0.25+10,-0.25+5)[circle,draw=black,fill=black,scale=0.5]{};
\node at (-0.25+10,-0.25+5)[circle,draw=black,fill=black,scale=0.5]{};
\node at (0.25+10,0.25+5)[circle,draw=black,fill=black,scale=0.5]{};
\node at (-0.75+10,0.25+5)[circle,draw=black,fill=black,scale=0.5]{};
\node at (-1.25+10,0.25+5)[circle,draw=black,fill=black,scale=0.5]{};
\node at (-0.75+10,-0.25+5)[circle,draw=black,fill=black,scale=0.5]{};

\node at (10.45,0.25+5.2) []{6};
\node at (9.95,0.25+5.2) []{5};
\node at (9.45,0.25+5.2) []{9};
\node at (8.95,0.25+5.2) []{8};
\node at (10.45,-0.25+5.2) []{9};
\node at (9.95,-0.25+5.2) []{8};
\node at (9.45,-0.25+5.2) []{7};

\draw (-2+10,1+7.5) -- (1+10,1+7.5);
\draw (-2+10,0+7.5) -- (1+10,0+7.5);
\draw (-2+10,0.5+7.5) -- (1+10,0.5+7.5);
\draw (-2+10,-0.5+7.5) -- (1+10,-0.5+7.5);
\draw (-2+10,-1+7.5) -- (1+10,-1+7.5);
\draw (0+10,-1+7.5) -- (0+10,1+7.5);
\draw (0.5+10,-1+7.5) -- (0.5+10,1+7.5);
\draw (-0.5+10,-1+7.5) -- (-0.5+10,1+7.5);
\draw (1+10,-1+7.5) -- (1+10,1+7.5);
\draw (-1+10,-1+7.5) -- (-1+10,1+7.5);
\draw (-1.5+10,-1+7.5) -- (-1.5+10,1+7.5);
\draw (-2+10,-1+7.5) -- (-2+10,1+7.5);
\draw[ultra thick, color=red] (-1.5+10,2+5) -- (-0.5+10,2+5);
\draw[ultra thick, color=red] (-0.5+10,3+5) -- (-0.5+10,2+5);
\draw[ultra thick, color=red] (-1+10,3+5) -- (0+10,3+5);
\draw[ultra thick, color=red] (-1+10,3+5) -- (-1+10,2.5+5);

\node at (-0.5+10,2.5+5)[circle,draw=black,fill=black,scale=0.5]{};
\node at (-1.5+10,2+5)[circle,draw=black,fill=black,scale=0.5]{};
\node at (0+10,3+5)[circle,draw=black,fill=black,scale=0.5]{};
\node at (-0.5+10,3+5)[circle,draw=black,fill=black,scale=0.5]{};
\node at (-1+10,3+5)[circle,draw=black,fill=black,scale=0.5]{};
\node at (-1+10,2.5+5)[circle,draw=black,fill=black,scale=0.5]{};
\node at (-1+10,2+5)[circle,draw=black,fill=black,scale=0.5]{};
\node at (-0.5+10,2+5)[circle,draw=black,fill=black,scale=0.5]{};

\node at (9.2,2.7+5) []{13};
\node at (10.2,3.2+5) []{12};
\node at (9.7,3.2+5) []{11};
\node at (9.2,3.2+5) []{10};
\node at (9.7,2.7+5) []{14};
\node at (9.7,2.2+5) []{12};
\node at (9.2,2.2+5) []{11};
\node at (8.7,2.2+5) []{10};
\end{tikzpicture}
\end{center}
\caption{Edges of a spanning tree of particles, a possible numbering of the ports of the particles before (Figure~\ref{portreconfig}.a) and after the execution of Algorithm~\ref{alg5} (Figure~\ref{portreconfig}.b) and the 2-identifier obtained by executing Algorithm~\ref{alg4} (Figure~\ref{portreconfig}.c) in $\mathcal{F}$ (square: leader; thick line: edge of the spanning tree; small number: port number of a particle; big number: 4-identifier of a particle).}
\label{portreconfig}

\end{figure}

\begin{prop}
At the end of the execution of Algorithm \ref{alg4}, any two particles at distance at most $\ell$ have two different $\ell$-local-identifiers.
\end{prop}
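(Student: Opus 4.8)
My plan is to show that Algorithm~\ref{alg4} assigns to each particle the value of the (given) proper coloring $c(i,j,k)=f_\ell(N_\ell(i,j,k))$ of $\mathcal{F}^\ell$, read at the position of the particle relative to the leader, and then to deduce the separation property from properness of $c$ together with translation invariance. Fix the leader at position $(x_L,y_L,z_L)$ and, for a particle $p$, write $(x_p,y_p,z_p)$ for its position and $id(p)$ for its final identifier. The target statement to establish is
\[
id(p)=c(x_p-x_L,\ y_p-y_L,\ z_p-z_L)\qquad\text{for every }p .
\]

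First I would use the port renumbering of Algorithm~\ref{alg5}: after it, every particle is oriented like the leader, so the displacement attached to a port $a$ is one fixed vector $\delta(a)$ for all particles, and the functions $I,J,K$ compute the true coordinates of neighbours, i.e.\ the neighbour of a particle at $(x,y,z)$ through port $a$ lies at $(I(x,a),J(y,a),K(z,a))=(x,y,z)+\delta(a)$. This is exactly what makes the coordinate propagation in Algorithm~\ref{alg4} meaningful in the heterogeneous case.

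The core of the argument is an induction on the depth of $p$ in the spanning tree used by Algorithm~\ref{alg4}, with invariant: the triple $(i,j,k)$ stored by $p$ is congruent to $(x_p-x_L,y_p-y_L,z_p-z_L)$ modulo the coloring period (modulo $m_\ell$ in the first two coordinates, modulo $\ell+1$ in the third), and $f_\ell(i,j,k)=c(x_p-x_L,y_p-y_L,z_p-z_L)$. The base case is the leader, where $(i,j,k)=(0,0,0)$ and $id=0=f_\ell(0,0,0)$. For the inductive step, $p$ receives from its parent the triple obtained by applying $I,J,K$ for the connecting port and then $N_\ell$; since each of $I,J,K$ only adds a fixed per-port increment and $N_\ell$ commutes with such increments modulo the period, the received triple is congruent to $p$'s relative position, and applying $f_\ell$ to it returns $c$ of that relative position because $f_\ell$ factors through $N_\ell$ (it reads only residues). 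The point I expect to require the most care is precisely this factorisation in the third coordinate: because layers of $\mathcal{F}$ alternate between integer and half-integer coordinates, the case distinction of $f_\ell$ on the parity of $k$ and the reduction of $k$ modulo $\ell+1$ must be reconciled, and this is where the argument has to lean on the exact structure of the coloring of~\cite{GT2019}; one also has to check that the leader's very first message, which is sent without applying $N_\ell$, is consistent with this invariant.

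Finally I would turn ``reproduces the coloring'' into the claimed separation. The map $\tau:(x,y,z)\mapsto(x,y,z)-(x_L,y_L,z_L)$ sends the leader to the origin, is a bijection of $V(\mathcal{F})$, and preserves the edge relations of $\mathcal{F}$ since these depend only on coordinate differences; hence $\tau$ is an automorphism of $\mathcal{F}$ and therefore of $\mathcal{F}^\ell$. Consequently $p\mapsto c(\tau(p))=id(p)$ is, like $c$, a proper coloring of $\mathcal{F}^\ell$ restricted to $S$. Thus any two particles $p,p'$ with $d_{\mathcal{F}}(p,p')\le\ell$ are adjacent in $\mathcal{F}^\ell$ and receive distinct identifiers; since $d_{\mathcal{F}}(p,p')\le d_{\mathcal{F}[S]}(p,p')$, the same holds for particles at distance at most $\ell$ in the network, which is the statement.
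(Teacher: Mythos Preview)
Your argument is correct but follows a genuinely different route from the paper. The paper's proof does not discuss the algorithm at all: it takes for granted that each particle ends up with $f_\ell$ evaluated at its position relative to the leader, and then re-derives from scratch, via a short case analysis (same layer versus different layers, using the explicit distance formula for $\mathcal{F}$), that $f_\ell$ separates any two points at distance at most $\ell$. In other words, the paper essentially re-proves the properness of the coloring of~\cite{GT2019} and leaves the ``algorithm computes the coloring'' part implicit.

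You do the opposite: you spend the effort on showing, by induction along the spanning tree, that Algorithm~\ref{alg4} really does assign $f_\ell(N_\ell(\cdot))$ of the relative position, and then you invoke properness of the coloring together with translation invariance of $\mathcal{F}$ as a black box. Your approach is more modular and more honest about the algorithmic side; the paper's approach is more self-contained about why the coloring itself works. The subtlety you flag---that the case split in $f_\ell$ on the parity of $k$ must be compatible with reducing $k$ modulo $\ell+1$ inside $N_\ell$, and that half-integer coordinates must be handled consistently under $N_\ell$---is a genuine technical point that the paper's proof simply does not address; you are right that this is where the verification of $f_\ell\circ N_\ell=f_\ell$ requires care.
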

\begin{proof}
We recall that the distance between two particles $p$ and $p'$ at position $(i,j,k)$ and $(i',j',k')$ of $\mathcal{F}$ is given by the following formula \cite{GT2019}:
$$d_{\mathcal{F}}((i,j,k),(i',j',k'))=p_{+}\left(|i-i'|-\frac{|k-k'|}{2}\right)+p_{+}\left(|j-j'|-\frac{|k-k'|}{2}\right)+ |k-k'|.$$
Suppose that $|k-k'|>\ell $, by the given formula, we obtain that $d(p,p')> \ell$.
Thus, it implies that any two particles $p$ and $p'$ at distance at most $\ell$ are such that $|k-k'|\le \ell $. 

First, suppose that $k\neq k'$.
For any two layers $k$ and $k'$, $f_\ell (i,j,k)$ has its value between $mod(k, \ell+1) m_{\ell}$ and $mod(k, \ell+1) m_{\ell}+m_{\ell}-1$ and $f_\ell (i',j',k')$ has its value between $mod(k', \ell+1) m_{\ell}$ and $mod(k', \ell+1) m_{\ell}+m_{\ell}-1$. It implies that $f_\ell (i,j,k)\neq f_\ell (i',j',k')$.

Second, suppose that $k=k'$ and suppose, without loss of generality, that $k$ is even. Also, in order to simplify the proof, we can consider that $i=0$, $j=0$, $k=0$ (we can suppose this, since we can translate the structure in order that $p$ is positioned where the origin $(0,0,0)$ is). 
Since $p$ and $p'$ are at distance at most $\ell$, we have $|i'|+|j'|\le \ell$. 
Consequently, $0< f_\ell (i',j',k')\le \ell^2<m_{\ell}$ or $-m_{\ell}<-\ell^2 \le f_\ell (i',j',k') <0$ ($\ell^2<m_{\ell}$ being true when $\ell$ is a positive integer).
Since  $f_\ell (0,0,0)=0$, we are sure that  $f_\ell (i',j',k')\neq 0$.
\end{proof}

Moreover, about the behavior of Algorithms \ref{alg5} or \ref{alg4}, we state the following.
\begin{prop}
Whatever the structure of $\mathcal{F}[S]$, if $\mathcal{F}[S]$ is connected, then after $ diam(\mathcal{F}[S])$ rounds, both Algorithms \ref{alg5} or \ref{alg4} have finished their tasks. Moreover, the memory space used by Algorithm \ref{alg5} is constant and the memory space used by Algorithm \ref{alg4} is at most $O(\log(\ell^3))$.
\end{prop}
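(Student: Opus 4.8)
The plan is to separate the two assertions, handling the round bound by reducing both algorithms to top-down flooding along the spanning tree, and then bounding the memory of each algorithm by inspecting what a single particle must store.

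For the round complexity, the key observation is that both Algorithm~\ref{alg5} and Algorithm~\ref{alg4} are purely top-down procedures on the spanning tree rooted at the unique leader $L$: a particle acts only upon receiving a message through its parent port, and upon acting it forwards exactly one message to each port in $\text{child}(p)$. I would first note that the spanning tree computed after the leader election (as in~\cite{GT2018}) is a distance (BFS) tree rooted at $L$, so that the depth of a particle $p$ equals $d_{\mathcal{F}[S]}(L,p)$; hence the height of the tree is the eccentricity of $L$, which is at most $diam(\mathcal{F}[S])$. Then, arguing exactly as in the remark relating rounds and distances (each particle performs at least one local computation per round, so a message ready at a parent is received and processed by the child within one round), an induction on the depth shows that after $r$ rounds every particle at depth at most $r$ has processed its message. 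Taking $r=diam(\mathcal{F}[S])$, which is at least the height of the tree, gives that every particle has finished, settling the round bound for both algorithms simultaneously.

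For the memory bounds I would inspect the two algorithms in turn. In Algorithm~\ref{alg5} a particle stores only its twelve port labels, the constant-size sets $\text{parent}(p)$ and $\text{child}(p)$ (subsets of $\{0,\ldots,11\}$), and, while processing, a single received integer $b\in\{0,\ldots,11\}$; the renumbering applies the fixed bijection $r$, a constant lookup. Hence the memory is $O(1)$. In Algorithm~\ref{alg4} the crucial point is that a particle forwards $N_\ell(I(i,a),J(j,a),K(k,a))$ rather than the raw translated coordinates: by definition of $N_\ell$ each forwarded triple has its first two coordinates reduced modulo $m_\ell=\lceil(\ell+1)^2/2\rceil=O(\ell^2)$ and its third coordinate reduced modulo $\ell+1$, so every stored and transmitted coordinate has magnitude $O(\ell^2)$ and needs $O(\log\ell)$ bits. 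The stored identifier $id=f_\ell(i,j,k)$ lies in $\{0,\ldots,(\ell+1)m_\ell-1\}$, a range of size $O(\ell^3)$, and thus needs $O(\log(\ell^3))$ bits. Summing these constantly many $O(\log\ell)=O(\log(\ell^3))$ contributions yields the claimed $O(\log(\ell^3))$ bound.

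The main obstacle I anticipate is the round bound, specifically justifying that the relevant latency is the \emph{height} of the tree and that this height is at most $diam(\mathcal{F}[S])$: this depends on the spanning tree being a distance tree rooted at $L$, and on the asynchronous round model guaranteeing one level of progress per round even though particles act asynchronously. I would make this precise through the depth induction above, being careful that, since forwarding occurs only after a particle processes its parent's message and a round forces every particle to compute at least once, no level of the tree can stall for more than one round. A secondary point to check is that the leader's initial outgoing messages in Algorithm~\ref{alg4}, sent from $(0,0,0)$ and hence carrying $O(1)$ coordinates, do not violate the $O(\ell^2)$ coordinate bound before the first application of $N_\ell$; this is immediate since those coordinates are constant.
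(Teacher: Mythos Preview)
Your proposal is correct and follows essentially the same approach as the paper: bound the round count by the height of the spanning tree (which is at most $diam(\mathcal{F}[S])$) and bound the memory by inspecting what each particle stores. Your version is in fact more careful than the paper's on two points the paper leaves implicit---you make explicit that the height bound relies on the spanning tree being a distance (BFS) tree rooted at the leader, and you justify the $O(\log(\ell^3))$ bound by observing that the forwarded coordinates are reduced via $N_\ell$ and hence stay $O(\ell^2)$, whereas the paper only mentions the size of the final identifier.
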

\begin{proof}
First, we state that both Algorithm \ref{alg5} and Algorithm \ref{alg4} finish after $diam(\mathcal{F}[S])$ rounds. That is the case since the height of a spanning tree of $\mathcal{F}[S]$ is bounded by $diam(\mathcal{F}[S])$ and the two algorithms consist in transmitting message through the spanning tree. Note that this implies that the number of sent messages in each algorithm is bounded by $|S|$.
Second, since the maximum degree is bounded in $\mathcal{F}$, the required memory space in order to store the ports of the children and the parent is constant. However, a $O(\log(\ell^3))$ memory space is required to store the $\ell$-local-identifier.
\end{proof}

\subsection{Global identifiers}
Computing global identifier in a distributed way is easy as soon as a leader has been elected. One can use the relative position of the particle to the leader as an identifier. To this end, we re-use the functions $I$, $J$ and $K$ defined in Section~\ref{homo}.
As for local identifiers, our algorithm in order to compute the global identifiers works in four steps as follows:
First, a leader election algorithm is used to have a unique particle in state \textbf{L} and the others in state \textbf{N} (Algorithm 1 or 2).
Second, computing a spanning tree with a distributed algorithm.
Third, changing the way the port are numbered in order that every particle has its ports numbered by the same number going in the same cardinal direction in $\mathcal{F}$, using Algorithm~\ref{alg5} (it is needed in the heterogeneous case).
Fourth, give the identifier $(0,0,0)$ to the particle in state \textbf{L} and, for a particle $p$ having identifier $(i,j,k)$, the inductive step consists in using messages to send the identifier $(I(i,a),J(j,a),K(k,a))$ to the neighbor of $p$ connected through port $a$ of $p$ (see Algorithm \ref{alg3}).

Similarly than in algorithms~\ref{alg5} and~\ref{alg4}, in Algorithm~\ref{alg3}, we suppose that, for each particle $p$, the set of ports $\text{child}(p)$ contains the port numbers of the particles in communication with its children in the spanning tree.

\begin{algorithm}
\caption{The global identifier $(i,j,k)$ algorithm for a particle $p$.} 
\label{alg3}
\begin{algorithmic} 
\State \textbf{Case 1: } State \textbf{L} \\ Set $i=0$, $j=0$ and $k=0$\\ For every port $a$ from $\text{child}(p)$, send the message $(I(i,a),J(j,a),K(k,a))$ to the neighbor of $p$ connected through port $a$

\State \textbf{Case 2: } State \textbf{N}
\If {message $(i',j',k')$ is received through port $a$}
        \State Set $i=i'$, $j=j'$ and $k=k'$
	\State For every port $a$ from $\text{child}(p)$, send the message $(I(i,a),J(j,a),K(k,a))$ to the neighbor of $p$ connected through port $a$
\EndIf
\end{algorithmic}
\end{algorithm}

\begin{prop}
Whatever the structure of $\mathcal{F}[S]$,  if $\mathcal{F}[S]$ is connected, then after $ diam(\mathcal{F}[S])$ rounds, Algorithm \ref{alg3} has finished to compute the global identifier. Moreover, the memory space used by Algorithm \ref{alg3} is at most $O(\log(|S|))$.
\end{prop}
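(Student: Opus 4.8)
The plan is to mirror the proof given above for Algorithms~\ref{alg5} and~\ref{alg4}, since Algorithm~\ref{alg3} is structurally identical to them: it is a single top-down sweep of messages along the previously computed spanning tree, initiated by the unique particle in state \textbf{L}. First I would bound the number of rounds. The only place a message is created is at the root (the leader, in Case~1), and every particle in state \textbf{N} reacts exactly once, upon receiving the message from its parent, by recording its identifier and forwarding one message to each of its children (Case~2). Consequently, after $r$ rounds every particle whose depth in the spanning tree is at most $r$ has received its message and fixed its identifier. Since the height of any spanning tree of $\mathcal{F}[S]$ is bounded by $diam(\mathcal{F}[S])$, after $diam(\mathcal{F}[S])$ rounds every particle has computed its identifier, which establishes the running-time claim; as in the earlier proof, this also shows that the total number of messages sent is bounded by $|S|$.

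Second, I would check the memory bound. The state stored by a particle $p$ consists of the identifier triplet $(i,j,k)$ together with the port sets $\text{child}(p)$ and $\text{parent}(p)$. The latter take constant space because the degree in $\mathcal{F}$ is bounded by $12$. For the triplet, the point is that the relative coordinates cannot grow faster than linearly in $|S|$: this follows from the identities for $I$, $J$, $K$ of Section~\ref{homo}, which translate a coordinate along the edge of $\mathcal{F}$ associated with a port, and from the fact that $\mathcal{F}[S]$ is connected. Hence the identifier fits in $O(\log(|S|))$ bits, giving the stated memory bound.

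The only genuinely delicate point is the coordinate bound underlying the memory estimate. Here I would argue directly from connectivity: a shortest path in $\mathcal{F}[S]$ from the leader to $p$ has at most $|S|-1$ edges, and by the definition of $E(\mathcal{F})$ each edge changes each of the three coordinates by at most $1$; therefore $|i|,|j|,|k|\le |S|-1$, so each coordinate is representable with $O(\log(|S|))$ bits. Everything else is a direct transcription of the argument already used for Algorithms~\ref{alg5} and~\ref{alg4}, with the $\ell$-local identifier (of bounded size $O(\log(\ell^3))$) replaced by the global relative position (of size $O(\log(|S|))$).
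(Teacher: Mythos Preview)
Your proposal is correct and follows essentially the same approach as the paper: bound the running time by the height of the spanning tree (at most $diam(\mathcal{F}[S])$) and bound the memory by noting that each coordinate of the identifier is at most linear in $|S|$. Your justification of the coordinate bound via a path argument is slightly more detailed than the paper, which simply asserts $|i|,|j|,|k|\le |S|$, but the substance is identical.
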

\begin{proof}
First, Algorithm \ref{alg3} finish after $diam(\mathcal{F}[S])$ rounds since the height of a spanning tree of $\mathcal{F}[S]$ is bounded by $diam(\mathcal{F}[S])$ and the algorithm consists in transmitting message through the spanning tree. Note that this implies that the number of sent messages is bounded by $|S|$.
Second, since the global identifier $(i,j,k)$ is such that $|i|\le |S|$, $|j|\le |S|$ and $|k|\le |S|$,  a $O(\log(|S|))$ memory space is sufficient to store the global identifier.
\end{proof}
Moreover, about the behavior of Algorithm \ref{alg3}, we state the following.
\begin{prop}
After the execution of Algorithm \ref{alg3}, each particle has an unique global identifier.
\end{prop}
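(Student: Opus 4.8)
The plan is to show that the identifier computed for each particle equals its position relative to the leader, so that uniqueness follows immediately from the fact that distinct particles occupy distinct vertices of $\mathcal{F}$.

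First I would record the key algebraic property of the functions $I$, $J$ and $K$ defined in Section~\ref{homo}: each is an additive offset in its first argument, i.e.\ there are functions $\delta_I,\delta_J,\delta_K$ of the port label alone such that $I(i,a)=i+\delta_I(a)$, $J(j,a)=j+\delta_J(a)$ and $K(k,a)=k+\delta_K(a)$. This is immediate from the case definitions. Consequently, if a particle at absolute position $(x,y,z)$ is connected through port $a$ to a neighbor, that neighbor sits at $(x+\delta_I(a),\,y+\delta_J(a),\,z+\delta_K(a))$ --- provided the ports are numbered coherently, i.e.\ provided port $a$ always points in the same cardinal direction. This coherence is exactly what the port renumbering of Algorithm~\ref{alg5} guarantees in the heterogeneous case; in the homogeneous case it holds by hypothesis.

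The main induction is carried along the spanning tree, by depth (equivalently, by the round in which a particle first receives a message). Let $(x_L,y_L,z_L)$ be the absolute position of the leader. The base case is the leader itself, which sets its identifier to $(0,0,0)=(x_L-x_L,\,y_L-y_L,\,z_L-z_L)$. For the inductive step, assume a particle $q$ at position $(x_q,y_q,z_q)$ has identifier $(i,j,k)=(x_q-x_L,\,y_q-y_L,\,z_q-z_L)$, and let $p$ be a child of $q$ connected through port $a$ of $q$. By the additivity property, $p$ lies at $(x_q+\delta_I(a),\,y_q+\delta_J(a),\,z_q+\delta_K(a))$, while the message $q$ sends to $p$ is $(I(i,a),J(j,a),K(k,a))=(i+\delta_I(a),\,j+\delta_J(a),\,k+\delta_K(a))$. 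Substituting the inductive hypothesis and regrouping yields exactly $(x_p-x_L,\,y_p-y_L,\,z_p-z_L)$, the relative position of $p$. Since every non-leader particle receives precisely one message (from its parent in the tree), its identifier is well defined and equals its relative position to the leader.

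Finally, since each vertex of $\mathcal{F}$ is occupied by at most one particle, the map $p\mapsto(x_p-x_L,\,y_p-y_L,\,z_p-z_L)$ is injective, so the identifiers are pairwise distinct. The step I expect to be the crux --- and the one worth emphasizing --- is the reliance on Algorithm~\ref{alg5}: the functions $I$, $J$, $K$ recover the true geographic neighbor only when every particle shares the leader's port-to-direction convention, so the correctness of the identifier computation is inherited from the correctness of the port renumbering step, and must be invoked explicitly rather than taken for granted in the heterogeneous setting.
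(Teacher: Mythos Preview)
Your proof is correct and follows the same approach as the paper: the identifier of each particle equals its position relative to the leader, so distinct particles (occupying distinct vertices) receive distinct identifiers. The paper's own proof is much terser---it simply asserts that ``the global identifier corresponds to the position of a particle relatively to the leader particle'' and concludes by contradiction---whereas you actually prove this claim by induction along the spanning tree using the additivity of $I$, $J$, $K$ and explicitly invoke the port renumbering of Algorithm~\ref{alg5}; your version is more rigorous but not a different route.
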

\begin{proof}
Suppose that two particle $p$ and $p'$ have the same global identifier. Since the global identifier corresponds to the position of a particle relatively to the leader particle, and since there is an unique leader particle, it implies that $p$ and $p'$ have the same position. 
\end{proof}

\section{Concluding remarks}
In this paper, we have presented two new leader election algorithms based on local computation for programmable matter in the 3-dimensional space. The first one considers that all particles have their ports distributed in the same directions (homogeneous case) but works with arbitrary shape while the second one does not require any initial port direction configuration (heterogeneous case) but works only for some shapes including the three dimensional sphere and cube. We have also presented an algorithm, which affects identifiers to the particles such that every two particles at distance at most $\ell$ have different identifiers.  Finally, we have presented an algorithm affecting a unique global identifier to each particle. Algorithms \ref{alg2}, \ref{alg5} and \ref{alg4} only require a $O(1)$-space memory (for fixed $\ell$), hence are well suited for programmable matter in which particles have constant memory capacities.

As future work, it would be interesting to extend our leader election algorithm in order it works also for sets of particles forming more general shapes than the ones considered in the heterogeneous case. Another interesting question could be to use our results for clustering the set of particles in several sets, which induce subgraphs of small diameter.

\section*{Acknowledgments}
This work is done in the context of the B3PM project and was supported by the French "Investissements d'Avenir" program, project ISITE-BFC (contract ANR-15-IDEX-03).

\end{document}